%% file: main.tex
\documentclass[11pt]{article}
\usepackage[utf8]{inputenc} 
\usepackage{parskip}
\usepackage{amssymb, amsmath, amsthm, graphicx, subfigure}
\usepackage{enumerate}
\usepackage[dvipsnames]{xcolor}
\usepackage[colorlinks=true,linktoc=page]{hyperref}
\usepackage{braket}
\usepackage{esvect}
\usepackage[margin=1in]{geometry}
\usepackage{mathtools}
\usepackage{tikz}
\usepackage{algorithm}
\usepackage{algorithmicx}
\usepackage[noend]{algpseudocode}

\usepackage{mdframed}

\usetikzlibrary{patterns,patterns.meta}

\theoremstyle{plain}
\newtheorem{theorem}{Theorem}[section]
\newtheorem{lemma}[theorem]{Lemma}

\newtheorem{proposition}[theorem]{Proposition}

\theoremstyle{definition}

\theoremstyle{remark}

\newtheorem{example}[theorem]{Example}

\newcommand{\R}{\mathbb{R}}
\newcommand{\one}{\mathbf{1}}
\newcommand{\supp}{\operatorname{supp}}
\newcommand{\tr}{\operatorname{tr}}

\newcommand{\poly}{\operatorname{poly}}

\newcommand{\argmax}{\operatorname*{arg\,max}}

\newcommand{\ip}[2]{\left\langle #1,#2\right\rangle}
\newcommand{\norm}[1]{\left\lVert #1\right\rVert}

\newcommand{\Ex}{\mathbb{E}}
\newcommand{\Prb}{\mathbb{P}}

\newcommand{\E}{\mathbb{E}}

\title{Efficient Algorithms for Subdeterminant Maximization under Partition Matroids}

\author{Nikhil Bansal\thanks{University of Michigan. Email: \texttt{bansaln@umich.edu}. Supported by NSF awards CCF-2327011 and CCF-2504995.} \and Yuze Xu \thanks{University of Michigan. Email: \texttt{yuzex@umich.edu}}}

\begin{document}
\date{}
\maketitle

\begin{abstract}
\input{abstract}
\end{abstract}

\allowdisplaybreaks
\input{intro}

\input{geometric-program-no-highlights}

\input{continuous-dynamics-no-highlights}

\input{finite-steps-no-highlights}

\newpage
\appendix
\input{local-search-barrrier}

\input{terminal-appendix}
\input{small-k-no-highlights}

\bibliographystyle{alpha}  
\bibliography{ref}
\end{document}

%% file: abstract.tex
We consider the
determinant maximization problem under partition constraints: Given a PSD matrix $A \in \R^{n\times n}$ and a partition matroid $M$ on $[n]$, find a base $S$ of $M$ that maximizes $\det(A_{S,S})$.  
We give an $e^{O(k)}$-approximation algorithm to find such a set $S$, where $k$ is the rank of $M$.
This improves upon the current $k^{O(k)}$-approximation, and matches the current $e^k$-estimation guarantee, up to $O(1)$ factors in the exponent.
Our algorithm is based on rounding the geometric max-min relaxation due to \cite{NikolovSingh2016}, using a continuous potential-driven process, and several new structural and analytic properties of this relaxation.

%% file: intro.tex
\section{Introduction}
\label{sec:introduction}
In the constrained subdeterminant maximization problem, we are 
given a positive semidefinite (PSD) matrix $A \in \R^{n\times n}$, and we wish to find a principal submatrix $A_{S,S}$ maximizing $\det(A_{S,S})$, where the subset $S\subset [n]$ of rows and columns must satisfy certain natural combinatorial constraints. 

This framework models various problems arising in a range of areas, including statistics~\cite{Pukelsheim2006}, convex
geometry~\cite{Khachiyan1996}, 
 fair allocations \cite{AnariGharanSaberiSingh2016}, spectral graph theory \cite{NikolovSinghTantipongpipat2019}, network design and random processes \cite{KuleszaTaskar2012}.
The problem also has a rich theory and has been studied extensively in recent years, leading to several interesting results and connections
between areas such as optimization, convex analysis, matrix permanents, and geometry of polynomials \cite{AllenZhuLiSinghWang2017b, AnariGharanSaberiSingh2016, AnariGharan2017, AnariGharanVinzant2018, MadanNikolovSinghTantipongpipat2020, BrownLaddhaPittuSinghTetali2022}.

A natural family of constraints arises when the set $S$ must be a base of a given matroid $M$ on $[n]$. Let $k$ denote the rank of $M$.
The uniform matroid case (any \(|S|=k\) is feasible) was first studied by  Khachiyan~\cite{Khachiyan1996} who gave a \(k^{O(k)}\)-approximation. Later, in substantial progress, Nikolov~\cite{Nikolov2015} improved this to \(e^{k+o(k)}\). An approximation hardness of $c^k$ for some constant $c>1$ is also known~\cite{CivrilMagdonIsmail2013}.

In a seminal work, Nikolov and Singh \cite{NikolovSingh2016} considered the more general partition matroids. They gave a novel geometric max-min program, and showed, using stable-polynomial techniques, that it {\em estimates} the optimum value  within a factor $e^k$.
 After a series of works, this was eventually extended 
 to give an $e^{2k}$-estimation algorithm for {\em all} matroids by Anari, Oveis Gharan and Vinzant \cite{AnariGharanVinzant2018}, by combining stable-polynomial methods in \cite{StraszakVishnoi2017b, AnariGharan2017} with exciting advances on completely log-concave polynomials \cite{AnariGharanVinzant2018, AnariLiuGharanVinzant2019}. 

\medskip

{\bf Estimation vs.~algorithms.} Despite the remarkable progress, designing efficient algorithms that actually output such a set $S$ has proved considerably harder.
Ebrahimi, Straszak, and Vishnoi~\cite{EbrahimiStraszakVishnoi2017} gave the first $e^{O(n)}$ approximation for partition matroids and regular matroids, based on anti-concentration, but this does not give any guarantee in terms of $k$. The first bounds in terms of $k$ were given by \cite{MadanNikolovSinghTantipongpipat2020}, who developed sparsity results to give an $e^{O(k^2)}$ approximation for general matroid constraints. More recently, \cite{BrownLaddhaPittuSinghTetali2022} introduced a matroid-intersection approach based on exchange cycles to obtain a $k^{O(k)}$-approximation for general matroids.

Even for the special case of partition matroids, the best known approximation is $k^{O(k)}$ due to \cite{BrownLaddhaPittuSinghTetali2022}, leaving a substantial gap between the estimation and algorithmic bounds. 
Remarkably, Brown et al.~also show that the cycle-exchange methods cannot beat the $k^{O(k)}$ bound. We describe their instructive example in Appendix \ref{sec:local-barrier} for completeness.
In particular, designing an $\exp(O(k))$ approximation already for partition matroids is of significant interest.\footnote{We thank Mohit Singh for bringing this problem to our attention.}

\subsection{Our Result}
We resolve the question above, and show the following result.

\begin{theorem}
    \label{thm:main} Given a PSD matrix $A \in R^{n\times n}$ of rank $d$, and a partition  matroid $M$ on $[n]$ of rank $k\leq d$, there is a polynomial-time randomized algorithm that,  with probability at least $3/4$, returns a base $S$ of $M$ with $\det(A_{S,S})$ at least $\exp(-O(k))$ times the optimum.  
\end{theorem}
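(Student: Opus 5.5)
Write $A = V^\top V$ with columns $v_1,\dots,v_n\in\R^d$, and let the partition matroid have parts $P_1,\dots,P_m$ with capacities $k_1,\dots,k_m$, $\sum_j k_j = k$. I would start from the Nikolov--Singh geometric max-min relaxation: maximize over $x\in\R^n_{\ge 0}$ with $\sum_{i\in P_j} x_i = k_j$ the value $\inf_{y>0}\, \sum_{|S|=k} x^S \det(V_S^\top V_S)\, / \prod_j (\text{normalizers in }y)$, or equivalently the log-concave capacity-type program. By the earlier stable-polynomial results, its optimum $\mathrm{OPT}_{\mathrm{rel}}$ satisfies $\mathrm{OPT}\le \mathrm{OPT}_{\mathrm{rel}}\le e^{k}\mathrm{OPT}$. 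It can also be solved to within $(1+\epsilon)$ in polynomial time by convexity. The goal is then to round a near-optimal $x$ to a base $S$ with $\det(A_{S,S})\ge e^{-O(k)}\mathrm{OPT}_{\mathrm{rel}}$.

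\textbf{Step 1 (structure of the optimum).} I would use the KKT conditions at an optimal pair $(x,y)$. After rescaling vectors $v_i\mapsto y_i^{1/2}v_i$, these conditions should give an isotropy or leverage-score statement. Concretely, the matrix $X=\sum_i x_i v_iv_i^\top$ would satisfy $x_i v_i^\top X^{-1} v_i = $ something bounded, and each part $P_j$ would be "balanced": the leverages $\sum_{i\in P_j}x_i v_i^\top X^{-1}v_i$ equal $k_j$. Combined with sparsification of the support of $x$ to $O(k)$-ish fractional coordinates per part, this should reduce the problem to one where $X$ has rank $k$ and $\log\det X$ controls $\mathrm{OPT}_{\mathrm{rel}}$ up to $e^{O(k)}$.

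\textbf{Step 2 (potential-driven rounding).} I would round with a continuous process. Maintain $x(t)$ in the partition polytope and move it along a martingale direction, as in Brownian or sticky random walks, restricted to the tangent space of the part constraints. The direction is chosen so that the potential $\Phi(t)=\log\det\bigl(\sum_i x_i(t) v_iv_i^\top\bigr)$, restricted to a $k$-dimensional subspace, has drift that is bounded below. By It\^o's formula, the second-order loss is $\tfrac12\tr\bigl((X^{-1}dX)^2\bigr)$, which is governed by the leverage scores from Step 1. Coordinates freeze at $0$ or $1$. I would project the noise to be orthogonal to directions of large second-order loss, or weight it by $x_i(1-x_i)$. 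The aim is to keep the total expected loss in $\Phi$ at $O(k)$ over the whole process. The process ends at an integral base, and $\det X = \det(A_{S,S})$ up to the Cauchy--Binet correction, which is handled by working with the rank-$k$ restriction. A Markov or concentration argument on $\Phi$ then gives success probability $3/4$.

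\textbf{Step 3 (discretization).} I would replace the continuous dynamics by polynomially many small steps, controlling the third-order Taylor error via bounded leverages. Separately, I would treat small $k$ by brute force or by the $k^{O(k)}$ method, where $k^{O(k)}=e^{O(k)}$ for $k=O(1)$.

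\textbf{Main obstacle.} I expect the main difficulty to be the Step 2 bound. Naively, each freezing event costs about a $\log$ factor, which leads back to $k^{O(k)}$. To show the loss is $O(k)$ in total rather than $O(1)$ per coordinate per scale, one needs the analytic properties of the max-min relaxation. These properties must be preserved along the process, so Step 1 must be reinvoked dynamically. I would first try a potential of the form $\log\det X$ combined with an entropy term $\sum_i x_i\log(1/x_i)$, whose decrease pays for the It\^o loss.
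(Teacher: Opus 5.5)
There is a genuine gap. The whole content of the theorem is the claim in Step 2 that the rounding loses only $O(k)$ in the log-objective, and the proposal gives no mechanism for it; you flag this yourself as the main obstacle. The route you sketch also runs into concrete obstructions.

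First, an entropy term $\sum_i x_i\log(1/x_i)$ cannot pay for an $O(k)$ budget. At the starting optimum it can be as large as $\sum_a\log|P_a|\approx k\log(n/k)$ (e.g.\ parts made of many copies of one vector, with uniform optimal $x$), and its derivative blows up near $0$.

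Second, a walk run all the way to integrality cannot steer clear of bad directions. Directions that preserve the parts, kill the scaling response ($\eta(\delta)=0$), and have small curvature $\tr(C(\delta)^2)$ come from a dimension count that needs more than about $4k$ small coordinates. Near integrality nothing is left to project away. Once a part has two live coordinates and everything else is integral, the walk is one-dimensional. If one endpoint makes the chosen vectors dependent, a driftless walk reaches $\det=0$ with constant probability. Then the expected log-loss is infinite, and your Markov argument gives nothing.

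Third, the leverage bounds you extract from KKT in Step 1 hold only at the optimum. With a frozen scaling, nothing in your process keeps them bounded afterwards, so the It\^o term is uncontrolled.

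The paper avoids these problems by never driving the potential to integrality. It runs a deterministic ascent on $L(x)=F(x)+\Phi(x)$.
\begin{itemize}
\item $F$ re-optimizes the scaling, so its Hessian is $-\tr(C(\delta)^2)-\eta(\delta)^\top\Gamma^{-1}_{\one^\perp}\eta(\delta)$.
\item $\Phi=\sum_i p(x_i)$ uses a capped quadratic $p$ with $-s\le p(s)\le 0$ and $p'\in[-1,0]$, so $-r\le\Phi\le 0$ whatever the support size. Yet $p''=H=20$ on small coordinates.
\item A win-win re-certifies $\ell_i\le 3$ at every step. Either the proximal step on the concave surrogate $\Psi_x$ is large and $L$ gains $\Omega(1)$, which happens at most $O(r)$ times. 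Or approximate stationarity gives the leverage bound. That bound caps the trace of the form $\delta\mapsto\tr(C(\delta)^2)$ on small coordinates by $9|I_{\rm sm}|$, yielding a null-Schur direction with second-order gain $H-18>0$.
\item Degeneracies are removed by exact tight-set splits. Finiteness comes from the floor $\tau$, bounded-leverage deletions, and the Gibbs variational lower bound.
\end{itemize}
The process stops once the total support is $O(k)$. Only then is the point rounded, by the Ebrahimi--Straszak--Vishnoi anti-concentration and greedy block rounding. Its loss $(2e)^{2r}\prod_a|P_a|$ is $e^{O(k)}$ by AM--GM precisely because the \emph{total} support is $O(k)$. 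Sparsifying to $O(k)$ coordinates per part, as in your Step 1, would leave support $O(k^2)$ and give only a $k^{O(k)}$ guarantee from this rounding.
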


This matches the best estimation algorithm, up to $O(1)$ factors in the exponent.

Our approach differs completely from previous approaches and is based on continuous rounding methods. At a high level, we start with a solution to the max-min geometric relaxation of \cite{NikolovSingh2016} and modify it deterministically over time until the support has size $O(k)$, while decreasing the log-relaxation by at most $O(k)$. Once the support is $O(k)$, one can apply the algorithm of \cite{EbrahimiStraszakVishnoi2017} to obtain the final solution $S$.

Intuitively, it is the ability to modify the solution in continuous space that gives the algorithm much more flexibility and allows us to bypass the $k^{O(k)}$ barrier for exchange-based methods in \cite{BrownLaddhaPittuSinghTetali2022}.

We implement this idea as follows. First,  we determine how the objective of the max-min program changes under perturbations of the current solution $x$. Second, we consider a novel potential function to simultaneously track progress in the support reduction and the objective value. Based on this potential, we do a win-win argument: either there is a move with good first order improvement, then either the support is already small, or else there is move with good second order improvement. The max-min nature of the program can introduce degeneracies, but we show that in this case, we can make progress in a different way by splitting the problem into smaller instances. Third, we discretize the process to make it polynomial time. This poses several additional technical difficulties over the continuous analysis. A key novelty is a probabilistic interpretation of the derivatives based on the properties of the Cauchy-Binet distribution, and a Gibbs variational lower bound to get around the instability of the Hessian of the objective.

We now give a slightly more elaborate overview.

\subsection{Technical Overview}
Let $d$ be the rank of $A$. Then, we can write $A=V^\top V$, where $V\in R^{d\times n}$ with columns $v_1,\ldots,v_n \in \R^d$,  and for $k=d$, the objective is equivalent to finding $S$, subject to matroid constraints, to maximize \[\det(A_{S,S}) = \det \Big( \sum_{i\in S}v_i v_i^\top\Big).\]
For $k< d$, one considers the product of the $k$ non-zero eigenvalues of $\sum_{i\in S}v_i v_i^\top$. We only consider $k\leq d$, as any subdeterminant of $A$ has value $0$ for $k>d$.\footnote{The formulation $\det( \sum_{i\in S}v_i v_i^\top)$ however does makes sense for $k>d$, and has been extensively studied as the $D$-design problem. However, this is not our focus and we refer to \cite{lau0025a}  and references therein for more detail. In general, the $k>d$ setting is somewhat harder, and even the best known estimation guarantees under partition matroids are only $k^{O(k)}$ in this regime.}
Suppose for now that $k=d$ and that each part $P_a$ in the partition matroid has rank exactly $1$. The ideas for $k=d$ usually translate directly to $k< d$, and this will also be the case for us.

Nikolov and Singh proposed the following geometric max-min relaxation. There is a variable $x_i\in [0,1]$ for each $i\in [n]$, and a variable $y_a>0$ for each part $P_a$.
Consider the feasible region
$\Delta = \{x\geq 0: x(P_a)=1 \text{ for every } a\in [d]\}$, and let $\psi(x,y) =  \log \det (\sum_a y_a \sum_{i\in P_a} x_i v_iv_i^\top)$. 

Consider the program:
\[  \sup_{x\in \Delta} \inf_{\prod_a y_a=1} \psi(x,y).\]
It turns out this program can be solved in polynomial time. Note that the inner scaling $y=y(x)$ depends on $x$, and in general the infimum need not be attained if $x$ is degenerate.
 
Let us denote $F(x) = \inf_{\prod_a y_a=1} \psi(x,y)$. 

Our algorithm starts by computing an optimum solution $x^* = \argmax_{x\in \Delta} F(x)$ to the relaxation.

Starting from $x^*$, the algorithm will modify the solution over time, until the support is at most $O(k)$, while ensuring that $F(x) \geq F(x^*)-O(k)$. At this point one can apply the rounding of \cite{EbrahimiStraszakVishnoi2017}. For completeness, we give a self-contained exposition of this rounding for partition matroids.

There are three key ideas.

Suppose we are currently at a non-degenerate point $x$, so that $y(x)$ is attained.  
Now, as $x$ is perturbed to $\widetilde{x}=x+t\delta$ where $t$ is a tiny step size and $\delta$ is a unit direction, the inner minimizer $y(\widetilde{x})$ also changes.
First, we compute the infinitesimal dynamics of how $y(x)$ and $F(x)$ change. To obtain a polynomial-time algorithm later, we need suitable estimates with error bounds for these quantities, but let us ignore this for now. For a degenerate point $x$, where the inner minimizer may not exist, we use the structure of the partition matroid to show that the problem can be split into lower dimensional problems, without losing any objective values.

Second, we introduce a novel potential function $\Phi(x)$, based on the structure of the derivatives of $F$, and study the dynamics of
$L(x)=F(x)+\Phi(x)$.  The potential has positive curvature only on small coordinates, and is cleverly designed,  so that at each step either (i) there is a direction $\delta$ with large first order increase. Otherwise, it forces all leverage scores to be bounded. (ii) If all leverage scores are bounded, and if many
small coordinates remain, a dimension counting argument gives a direction $\delta$ that preserves the
partition constraints and kills the Schur-complement term coming from the inner
scaling minimization.  Along this direction the curvature of the
potential dominates the remaining negative curvature of $F$, and hence the second order increase of $L$ is large.

Third, we make the dynamics finite and polynomial time.  We never allow live
coordinates to become exponentially small: after a concave rebalancing step
certifies bounded leverage, coordinates below an inverse-polynomial threshold
are deleted one at a time.  Once all live coordinates have an inverse-polynomial
floor, the null-Schur perturbation can be taken with inverse-polynomial step
size.  This yields $O(d)$ support.  A final uniform-box probing and greedy
block-rounding argument converts the compressed fractional solution into an
integral transversal with another $\exp(O(d))$ loss.

\medskip{\bf Organization.} The rest of the paper is organized as follows. We describe the geometric program of \cite{NikolovSingh2016} in Section \ref{sec:geom-prog}. We also give some useful properties of this program. In particular, in Section \ref{sec:derivatives} we obtain expressions for the various derivatives of the objective, and in Section \ref{sec:nondegenerate-supports} we describe how the instance can be split into smaller instances upon reaching a degenerate point $x$. 
For ease of exposition, in Section \ref{sec:continuous-compression}, we first describe the continuous dynamics, ignoring the issue of convergence and polynomial number of steps.  We also give a simple self-contained exposition of the rounding algorithm of \cite{EbrahimiStraszakVishnoi2017} for completeness. 
In Section \ref{sec:finite-dynamics}, we describe the discretized dynamics and the final polynomial-time algorithm. To this end, we describe a probabilistic interpretation of the derivatives of $F(x)$ and its various properties. Both Sections \ref{sec:continuous-compression} and \ref{sec:finite-dynamics}  consider only the case of $k=d$.
In Appendix \ref{sec:k-le-d} we describe the minor changes needed to handle the case of $k\leq d$. 

\paragraph{AI Use.}
The authors used GPT-5.5 Pro during the development of this work to explore proof strategies, search for related literature, and assist with verification. GPT was not used in any part of the exposition.

%% file: geometric-program-no-highlights.tex
\section{The Geometric Program}
\label{sec:geom-prog}
We begin by defining the problem formally.

We are given vectors $v_1,\ldots,v_n \in \R^d$ and a partition $P_1,\ldots,P_\ell$ of $[n]$, with integers $b_a\geq 1$ for each part $a\in [\ell]$. We assume that $k:=\sum_a b_a \leq d$.
The goal is to find a subset $B\subseteq[n]$ with $|B\cap P_a|=b_a$ for each part $a\in [\ell]$ that maximizes the squared $k$-dimensional volume of the parallelotope formed by the chosen vectors $v_i$ for $i\in B$.

Let $V_B$ be the $d\times k$ matrix with columns $v_i$ for $i\in B$.
The objective is
        $\det(V_B^\top V_B)$. 
When $k=d$, this is equivalently
\[
        \det(V_B)^2
        = \det(V_BV_B^\top)
        = \det \Big(\sum_{i\in B} v_iv_i^\top\Big).
\]
When $k<d$, it is still $\det(V_B^\top V_B)$; equivalently it is the $k$-th elementary symmetric polynomial of $\sum_{i\in B}v_iv_i^\top$. This equals the product of the $k$ nonzero eigenvalues when $V_B$ has full column rank, and is zero otherwise.
Here, we focus on the case of $k=d$ --- the case of $k<d$ is described in Section~\ref{sec:k-le-d}.
Without loss of generality, we can also assume that each part has rank $1$.\footnote{Replace a part $P_a$ of capacity $b_a$ by $b_a$ identical rank-one parts, each containing the same vectors from $P_a$. Any transversal of the copied instance that chooses the same original vector twice has zero $k$-volume, since the corresponding columns are linearly dependent. Thus, for positive-volume solutions, this reduction is exact. If the optimum volume is zero, the problem is trivial.}

Thus we focus on the following problem:
Given $v_1,\ldots,v_n \in \R^d$ and a partition $\{P_1,\ldots,P_d\}$ of $[n]$, find a \emph{transversal} $B$ with $|B\cap P_a|=1$ for each $a$, maximizing
\[
        \det\Big(\sum_{i \in B} v_i v_i^\top\Big).
\]
Our algorithm will recurse on residual instances with $r$-dimensional vectors and $r$ parts, where $r\leq d$. For this reason, we will use $r$ for the number of parts instead of $d$, and reserve $d$ for the original ambient dimension.

\subsection{Naive Convex Relaxation}
Consider a residual rank-$r$ instance. 
Using the decision variables $x_i\in \{0,1\}$ for each vector $v_i$, we have the following exact formulation (for the logarithm of the original objective).
\[
    \max_x\,\, \log \det \Big(\sum_i x_i v_i v_i^\top\Big), \qquad \text{ s.t. $x(P_a)=1$ for all $a \in [r]$, and  $x \in \{0,1\}^n$}.
\]
Let us define the function $A(x) := \sum_i x_i v_iv_i^\top$.
Recall that $\log \det (\cdot)$ is concave on the positive definite cone, and $A(x)$ is affine in $x$. Thus $x\mapsto \log\det A(x)$ is concave on the region where $A(x)\succ0$. Consider the polytope 
\[\Delta_r:=\{x\ge0: x(P_a)=1\text{ for every }a\in[r]\}.\]
Then we have the natural convex programming relaxation:
    $\max_{x \in \Delta_r} \,\, \log \det A(x)$.
\smallskip

Unfortunately, this program can be arbitrarily bad. To get around this, Nikolov and Singh \cite{NikolovSingh2016} proposed a very interesting geometric program.
As we will be using this program and its properties extensively, it is instructive to see a simple bad example for the naive program, and a fix for it.

\smallskip

\begin{example}
    Let $r=2$, and let $P_1=\{Me_1,Me_2\}$ and $P_2 = \{e_1,e_2\}$, where $M$ is arbitrarily large.
Clearly, $\det(V_B)\leq M$ for any integral transversal $B$ and thus 
$\det A(x) \leq M^2$ for integral $x$. On the other hand, for $x=(1/2,1/2,1/2,1/2)$ we have $A(x) = ((M^2+1)/2)I$, and thus  $\det A(x) = \Omega(M^4)$.  
\end{example} 
 Nikolov and Singh \cite{NikolovSingh2016} proposed the following fix. 
Suppose we scale the outer products in $P_1$ by $y_1=1/M$ and in $P_2$ by $y_2=M$. Then the determinant of every integral transversal is unchanged, because $y_1y_2=1$. On the other hand, after this rescaling we have $A(x) \preceq 2MI$ for every fractional $x$, eliminating the gap in the example above.

\subsection{Saddle Geometric Relaxation} 
For each part $P_a$, introduce a \emph{scaling} variable $y_a>0$. Let us write $A_a(x):=\sum_{i\in P_a}x_i v_i v_i^\top$.
Nikolov and Singh proposed the following relaxation.
\begin{equation}
\label{eq:prog-a}
    \max_{x\in \Delta_r}\,\, \inf_{y: \prod_a y_a=1} \,\,\, \log \det \Big(\sum_a y_a A_a(x)\Big) 
\end{equation}
As noted above, the scaling condition $\prod_a y_a=1$ is invisible to
integral transversals, and
thus this is a valid relaxation.

Let us write $y_a = e^{z_a}$ for $z_a \in \R$.
The condition $\prod_a y_a=1$ becomes $z \in \one^\perp :=\{z\in\R^r:\sum_a z_a=0\}$.
For $z \in \R^r$, let us define the functions      
\begin{equation}
\label{eq:psi-definition}
        M(x,z):=\sum_{a=1}^r e^{z_a}A_a(x), \qquad \psi(x,z):=\log\det M(x,z).
\end{equation}
{\bf Reformulation in $z$-variables.} Thus the program \eqref{eq:prog-a} can now be written as
\begin{equation}
\label{eq:prog-b}
    \sup_{x\in \Delta_r} \inf_{z \in \one^\perp} \,\,\, \psi(x,z)
\end{equation}
The function $\psi(x,z)$ has nice properties. It is concave in $x$ and convex in $z$. The first follows as $\log \det (M)$ is concave on the PSD cone, and $M$ is affine in $x$. The convexity in $z$ follows from the Cauchy--Binet expansion, which writes $\psi(x,z)$ as a log-sum-exp of affine functions of $z$. In Lemma~\ref{lem:derivatives-psi}, we will explicitly compute the $z$-Hessian of $\psi(x,z)$ and show that it is positive semidefinite (PSD). 
This saddle structure allows the program \eqref{eq:prog-b} to be solved efficiently, to arbitrary accuracy, using standard convex--concave optimization techniques.

Notice that given any feasible $x \in \Delta_r$, the inner minimizer, when it exists, is a function $z=z(x) \in \one^\perp$ of $x$.
In general, the inner infimum need not always be attained for certain \emph{degenerate} $x$. However, such degenerate $x$ will have further structure that our algorithm will exploit, and we will discuss this in detail in Section~\ref{sec:nondegenerate-supports}. Given $x \in \Delta_r$, let us define
\begin{equation}
\label{eq:F-definition}        F(x):=\inf_{z\in\one^\perp}\psi(x,z).
\end{equation}
Let $x^*$ denote an optimal solution to $\sup_{x\in \Delta_r} F(x)$; algorithmically, we compute a sufficiently accurate approximate saddle point.



\subsection{Derivatives of $\psi(x,z)$ and  $F(x)$}
\label{sec:derivatives}
Our goal is to understand how the value $F(x)$ changes when we perturb the point $x$.
We assume throughout this section that $x$ is nondegenerate so that the inner-minimizer $z(x)$ is well-defined and unique. 

Fix a point $(x,z)$ with $M:=M(x,z)\succ 0$.
Note that we treat $x$ and $z$ as independent variables here.
Let  $w_i:=e^{z_{a(i)}/2}v_i$, where $a(i)$ is the part containing $i$. Then we have that
        \[M=   \sum_a e^{z_a} \sum_{i\in P_a} x_i v_i v_i^\top = \sum_i x_i w_iw_i^\top.\]
We now define some related quantities that will be useful to express the derivatives. Let $Q_i = M^{-1/2}w_iw_i^\top M^{-1/2}$ and,
\[\ell_i:= w_i^\top M^{-1} w_i = \tr( w_i w_i^\top M^{-1}) = \tr Q_i.\] 
For part $a$, let $M_a:=\sum_{i\in P_a}x_iw_iw_i^\top$ denote the contribution of part $P_a$ to $M$, and define  
\[B_a:=M^{-1/2}M_a M^{-1/2}.\]
Note that  $\sum_a B_a = I$ and $\tr(B_a) = \tr( M_a M^{-1}) = \sum_{i\in P_a} x_i \ell_i$. 

Notice that $\partial_{x_i}M =w_iw_i^T$. Therefore, for any tangent direction $\delta$ for $x$, the differential 
$D_x M[\delta] = \sum_i \delta_i w_iw_i^\top$. We define 
\[C_a(\delta):=M^{-1/2}(D_x M_a[\delta])M^{-1/2} = \sum_{i\in P_a}\delta_i Q_i,
\]
and let $C(\delta) := \sum_a C_a(\delta)= \sum_{i}\delta_i Q_i$.

\paragraph{Derivatives of $\psi(x,z)$.}
\begin{lemma} The derivatives of $\psi(x,z)$ satisfy the following.
\label{lem:derivatives-psi}
\begin{enumerate}
    \item ($x$-derivative). For every direction $\delta$ for $x$,
\[
        D_x\psi(x,z)[\delta]=\tr(C(\delta)), \qquad  D^2_{xx}\psi(x,z)[\delta,\delta]
        =-\tr\big(C(\delta)^2\big).\]
\item ($z$-derivatives).
For every $z$-direction $\zeta$,
        \[D_z \psi(x,z)[\zeta]= \sum_a \zeta_a \tr B_a, \qquad  
D^2_{zz}\psi(x,z)[\zeta,\zeta] = \zeta^\top \Gamma \zeta\]
where the $z$-Hessian $\Gamma$ has entries 
$\Gamma_{ab} =\mathbf{1}_{a=b}\tr B_a-\tr(B_aB_b)$.
In particular, $\Gamma \succeq 0$.
\item (Mixed derivative).
For every $x$-direction $\delta$ and every part $a$,
\[
        D_x\left(\frac{\partial\psi}{\partial z_a}\right)[\delta]
        =
        \tr C_a(\delta)-\tr\big(B_aC(\delta)\big).
\]
Define the vector $\eta(\delta) \in \R^r$  with coordinates $\eta_a(\delta) =  \tr C_a(\delta)-\tr\big(B_aC(\delta)\big)$. Then $\eta(\delta) \perp \one$.
\end{enumerate}
\end{lemma}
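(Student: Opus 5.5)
All three parts are direct matrix calculus. We use the first- and second-order expansions of $\log\det$ at $M\succ0$,
\[
D\log\det M[H]=\tr(M^{-1}H), \qquad D^2\log\det M[H,H]=-\tr(M^{-1}HM^{-1}H),
\]
and conjugate everything by $M^{-1/2}$ to rewrite it in terms of $Q_i$, $B_a$ and $C_a(\delta)$.

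\emph{Part 1.} Since $M$ is affine in $x$ with $D_xM[\delta]=\sum_i\delta_iw_iw_i^\top$, we have $D_x^2M=0$. Hence
\[
D_x\psi[\delta]=\tr\bigl(M^{-1}\textstyle\sum_i\delta_iw_iw_i^\top\bigr)=\tr\bigl(M^{-1/2}(\cdot)M^{-1/2}\bigr)=\tr C(\delta).
\]
The second-order term is $-\tr(M^{-1}HM^{-1}H)$ with $H=D_xM[\delta]$. Splitting $M^{-1}=M^{-1/2}M^{-1/2}$ and using cyclicity of the trace, this equals $-\tr(C(\delta)^2)$.

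\emph{Part 2.} Write $M=\sum_a e^{z_a}A_a(x)$. Then $\partial_{z_a}M=e^{z_a}A_a(x)=M_a$, where the $w_i$ are evaluated at the current $z$. This gives
\[
\partial_{z_a}\psi=\tr(M^{-1}M_a)=\tr B_a.
\]
For the second derivative, $\partial_{z_b}\partial_{z_a}M=\mathbf 1_{a=b}M_a$. So
\[
\partial_{z_b}\tr(M^{-1}M_a)=-\tr(M^{-1}M_bM^{-1}M_a)+\mathbf 1_{a=b}\tr(M^{-1}M_a)=\mathbf 1_{a=b}\tr B_a-\tr(B_aB_b).
\]
To prove $\Gamma\succeq0$, set $D=\sum_a\zeta_aB_a$ and use $\sum_aB_a=I$ together with $B_a\succeq0$. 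Then
\[
\zeta^\top\Gamma\zeta=\sum_a\zeta_a^2\tr B_a-\tr(D^2).
\]
By convexity of $t\mapsto t^2$ in operator form (a Jensen/Kadison-type inequality), $\bigl(\sum_a\zeta_aB_a\bigr)^2\preceq\sum_a\zeta_a^2B_a$ whenever $\sum_aB_a=I$ and $B_a\succeq0$.

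To prove this cleanly, I would use the dilation $K=\sum_a e_a\otimes B_a^{1/2}$, which satisfies $K^\top K=\sum_a B_a=I$, so $KK^\top\preceq I$. With $Z=\operatorname{diag}(\zeta)\otimes I$,
\[
D=K^\top ZK, \qquad D^2=K^\top Z(KK^\top)ZK\preceq K^\top Z^2K=\sum_a\zeta_a^2B_a.
\]
Taking traces gives $\zeta^\top\Gamma\zeta\ge0$. This operator-convexity step is the only non-routine point. An alternative route is convexity of $\psi$ in $z$ via the Cauchy--Binet log-sum-exp, but I would give the direct argument above.

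\emph{Part 3.} Differentiate $\tr(M^{-1}M_a)$ in $x$, noting $D_xM_a[\delta]=\sum_{i\in P_a}\delta_iw_iw_i^\top$. This gives
\[
-\tr\bigl(M^{-1}(D_xM[\delta])M^{-1}M_a\bigr)+\tr\bigl(M^{-1}D_xM_a[\delta]\bigr).
\]
Conjugating by $M^{-1/2}$, this is $\tr C_a(\delta)-\tr(C(\delta)B_a)$. Finally, summing over $a$ and using $\sum_aC_a(\delta)=C(\delta)$ and $\sum_aB_a=I$,
\[
\sum_a\eta_a(\delta)=\tr C(\delta)-\tr C(\delta)=0,
\]
so $\eta(\delta)\perp\one$.
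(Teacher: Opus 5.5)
Your proposal is correct. Parts 1 and 3, and the entrywise formula for $\Gamma$, follow the paper's computation essentially verbatim. The one genuine difference is how you prove $\Gamma\succeq0$.

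The paper stays elementary. Using $\sum_bB_b=I$, it rewrites $\sum_a\zeta_a^2\tr B_a=\sum_{a,b}\zeta_a^2\tr(B_aB_b)$. This turns the quadratic form into the weighted-Laplacian identity $\zeta^\top\Gamma\zeta=\frac12\sum_{a,b}\tr(B_aB_b)(\zeta_a-\zeta_b)^2$, which is nonnegative because $\tr(B_aB_b)\ge0$ for PSD $B_a,B_b$.

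Your dilation argument is valid: $K^\top K=I$, so $KK^\top$ is a projection and $KK^\top\preceq I$. It proves the stronger operator inequality $(\sum_a\zeta_aB_a)^2\preceq\sum_a\zeta_a^2B_a$. Only its trace is needed here, so the tensor-product construction buys nothing extra in this lemma.

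The paper's identity exposes useful structure with no extra work. It shows $\Gamma\one=0$ at every $(x,z)$, not only at the balanced minimizer, consistent with $\psi(x,z+c\one)=\psi(x,z)+rc$. It also shows that $\zeta^\top\Gamma\zeta=0$ exactly when $\zeta$ is constant on connected components of the graph with edge weights $\tr(B_aB_b)$. That is the natural way to see when $\Gamma$ is positive definite on $\one^\perp$.
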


\begin{proof}
We will repeatedly use the following standard identities for any invertible matrix $M$, 
\begin{equation}
    \label{eq:basic-derivatives}
    d(\log \det M) = \tr(M^{-1}dM), \qquad d(M^{-1}) = -M^{-1}(dM)M^{-1}.
\end{equation}
Let us denote $\dot M := D_x M[\delta] = \sum_i \delta_i w_iw_i^\top$.
Using the first identity in \eqref{eq:basic-derivatives}
\begin{align*}
        D_x\psi(x,z)[\delta]
        &=\tr(M^{-1} \dot M) = \tr (C(\delta)).
\end{align*}
Differentiating again and noting that $D_{xx} M[\delta,\delta]=0$ as $M$ is affine in
$x$, and using the second identity in \eqref{eq:basic-derivatives} gives
\begin{align*}
        D^2_{xx}\psi(x,z)[\delta,\delta]
        &=-\tr(M^{-1} \dot M M^{-1} \dot M)
        =-\tr\big(C(\delta)^2\big).
\end{align*}
This gives the $x$-derivatives, and we now consider the $z$-derivatives.

As $M = \sum_a e^{z_a} \sum_{i\in P_a} x_i v_i v_i^\top$, we have that $\partial_{z_a}M= e^{z_a}\sum_{i\in P_a} x_i v_i v_i^\top = M_a$, and therefore
\begin{align}
\label{eq:partial-psi-z}
        \frac{\partial\psi}{\partial z_a} =\tr(M^{-1}M_a) 
        =\tr B_a.
\end{align}
For the second derivative, differentiating the expression in \eqref{eq:partial-psi-z} once more and using that
        $\partial_{z_b}M^{-1}=-M^{-1}M_bM^{-1}$ and that $
\partial_{z_b}M_a=\mathbf{1}_{a=b}M_a$
we have that,
\begin{align*}
    \Gamma_{ab}:=    \frac{\partial^2\psi}{\partial z_a\partial z_b}
        &=
        \tr\left((\partial_{z_b}M^{-1})M_a\right)
        +\tr\left(M^{-1}\partial_{z_b}M_a\right) \\
        &=-\tr(M^{-1}M_bM^{-1}M_a)
        +\mathbf{1}_{a=b}\tr(M^{-1}M_a) \\
        &=-\tr(B_bB_a)+\mathbf{1}_{a=b}\tr B_a.
\end{align*}
This $z$-Hessian $\Gamma$ is PSD as,
\begin{align*}
     D^2_{zz}\psi(x,z)[\zeta,\zeta] &=    \sum_a\zeta_a^2\tr B_a
        -\sum_{a,b}\zeta_a\zeta_b\tr(B_aB_b) \\
        &=\frac12\sum_{a,b}\tr(B_aB_b)(\zeta_a-\zeta_b)^2 \geq 0,
\end{align*}
where in the second line we used that $\sum_a \zeta_a^2 \tr(B_a) = \sum_{a,b} \zeta_a^2 \tr(B_aB_b)$ as $\sum_bB_b=I$. 
The non-negativity follows as $\tr(B_aB_b)\geq 0$ as $B_a,B_b\succeq0$.

Finally, we compute the mixed derivative. 

By \eqref{eq:partial-psi-z} we have that $\partial\psi/\partial z_a=\tr(M^{-1}M_a)$. Taking differential of this along the $x$-direction $\delta$ gives,
\begin{align*}
        \eta_a(\delta):= D_x\left(\frac{\partial\psi}{\partial z_a}\right)[\delta]
        &=
        \tr(M^{-1} (D_x M_a[\delta]))
        -\tr(M^{-1} (D_x M[\delta]) M^{-1}M_a) \\
        &=
        \tr C_a(\delta)-\tr\big(C(\delta)B_a\big).
\end{align*}
To see that $\eta(\delta)\in\one^\perp$,
observe that
\[
\sum_a \eta_a(\delta)
=\sum_a \bigl(\tr C_a(\delta)-\tr(C(\delta)B_a)\bigr)
=\tr(C(\delta))-\tr(C(\delta))=0,
\] where we use that $\sum_a C_a(\delta)=C(\delta)$ and $\sum_a B_a =I$.
\end{proof}

\paragraph{Derivatives of $F(x)$.}
\label{sec:F-calculus}
We now describe how
$F(x)=\inf_{z\in\one^\perp}\psi(x,z)$ changes when $x$ is perturbed infinitesimally. To do this, we will explicitly describe $z(x)$ as a function of $x$, and then use the results in Lemma~\ref{lem:derivatives-psi}.
 
We assume throughout this section that $x$ is
nondegenerate and the inner
infimum is attained by a unique minimizer $z(x)\in\one^\perp$. Degenerate points are handled by the exact splits in Section~\ref{sec:nondegenerate-supports}.
All the  quantities such as $w_i,\ell_i,B_a,C_a(\delta),C(\delta)$ and $\eta(\delta)$ are henceforth evaluated at
$z=z(x)$.


We first describe some properties of $\psi(x,z)$ at the minimizer $z(x)$.
\begin{lemma}
\label{lem:partwise-balance}
For any $x$ for which the inner minimizer $z(x)$ exists, 
the $z$-gradient $\nabla_z \psi(x,z) = \one$, and hence $\tr B_a=1$ for each part $a\in[r]$.
Moreover, for any point $(x,z(x))$, the vector $\one$ lies in the kernel of the $z$-Hessian $\Gamma$ of $\psi(x,z)$.
\end{lemma}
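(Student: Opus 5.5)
The plan is to combine the first-order optimality condition for the constrained inner minimization with two structural identities: the scale invariance of $\psi$ along $\one$, and the normalization $\sum_a B_a = I$.

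\textbf{Step 1 (gradient is parallel to $\one$).} Fix $x$ and suppose $z(x)$ minimizes $\psi(x,\cdot)$ over the subspace $\one^\perp$. Since $\psi$ is smooth in $z$ wherever $M\succ 0$, the first-order condition says $D_z\psi(x,z(x))[\zeta]=0$ for every $\zeta\in\one^\perp$. Hence $\nabla_z\psi \in (\one^\perp)^\perp=\operatorname{span}(\one)$, so $\nabla_z\psi = c\,\one$ for some scalar $c$.

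\textbf{Step 2 (the constant is $1$).} By Lemma~\ref{lem:derivatives-psi}, the $a$-th coordinate of $\nabla_z\psi$ is $\tr B_a$. Summing over $a$ gives
\[
\sum_a \tr B_a=\tr\Big(\sum_a B_a\Big)=\tr I_r = r,
\]
since $\sum_a B_a = I$. On the other hand the sum equals $rc$, so $c=1$ and $\tr B_a=1$ for every $a$. Equivalently, $\psi(x,z+t\one)=\psi(x,z)+rt$, because $M$ scales by $e^t$ in dimension $r$; differentiating in $t$ gives $\langle\nabla_z\psi,\one\rangle=r$.

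\textbf{Step 3 (Hessian kernel).} For the kernel claim, I compute $(\Gamma\one)_a=\tr B_a-\sum_b\tr(B_aB_b)$. Using $\sum_b B_b=I$ again, this equals $\tr B_a-\tr B_a=0$. This holds at any point $(x,z)$, not only at the minimizer. Alternatively, differentiating the identity $\nabla_z\psi(x,z+t\one)=\nabla_z\psi(x,z)$, which follows from the scaling relation above, gives the same conclusion.

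\textbf{Main subtlety.} The only point needing care is that $M(x,z(x))\succ 0$, so that $\psi$ is finite and differentiable and the first-order condition applies. This is what the existence of the minimizer supplies: if $M$ were singular, $\psi$ would be $-\infty$ everywhere in $z$, and nondegeneracy rules this out. I expect no real obstacle beyond this point.
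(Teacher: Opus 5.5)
Your proposal is correct and matches the paper's proof: the first-order condition on $\one^\perp$ forces $\nabla_z\psi\in\operatorname{span}(\one)$, the identity $\sum_a B_a=I$ pins the constant to $1$, and the kernel claim follows by summing a row of $\Gamma$ using $\sum_b B_b=I$. Your remark that $\Gamma\one=0$ holds at every $(x,z)$ is a correct and slightly sharper observation, since $(\Gamma\one)_a=\tr B_a-\tr B_a$; the paper instead substitutes $\tr B_a=1$, valid only at the minimizer.
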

\begin{proof}
Fix an $x\in \Delta_r$. As the only constraint on $z$ is that $z\in \one^\perp$, taking the Lagrangian $\psi(x,z) +\lambda \langle z,\one \rangle$ and setting the gradient with respect to $z$ gives that $\nabla_z \psi(x,z) = -\lambda \one$ for some $\lambda \in \R$.

By Lemma~\ref{lem:derivatives-psi}, the $z$-gradient has coordinates $(\nabla_z \psi(x,z))_a  = \tr(B_a)$, and thus all $\tr(B_a)$ are equal.
As $\sum_aB_a=I$ we have that 
        $\sum_a\tr B_a=r$, and thus each $\tr(B_a)=1$.
  
 The second claim follows as the $z$-Hessian has entries
$\Gamma_{ab}=\mathbf{1}_{a=b}-\tr(B_aB_b)$, and thus for each $a$, the entry  \[(\Gamma \one)_a = \sum_b \Gamma_{ab} = 1 - \sum_b \tr(B_aB_b) = 1-\tr(B_a)= 0,\] 
where the third equality uses that $\sum_b B_b=I$ and the last equality uses that $\tr(B_a)=1$.
\end{proof}

We now compute the derivatives of $F(x)$. This will be the main tool used in Section~\ref{sec:continuous-compression}.
Recall the definition of the vector $\eta(\delta)$ of mixed derivatives, from Lemma~\ref{lem:derivatives-psi}. We have the following. 
\begin{lemma}[Derivatives of $F$]
\label{lem:F-derivatives}
At a nondegenerate point $x$, for every feasible direction $\delta$, 
\begin{equation}
\label{eq:F-hessian}
       DF(x)[\delta]=\tr(C(\delta)) \quad \text{ and } \quad D^2F(x)[\delta,\delta]
        =
        -\tr\big(C(\delta)^2\big)
        -\eta(\delta)^\top\Gamma_{\one^\perp}^{-1}\eta(\delta).
\end{equation}
Here $\Gamma_{\one^\perp}^{-1}$ denotes the inverse of $\Gamma$ restricted to
$\one^\perp$. 
In particular, if $\eta(\delta)=0$, then
        \[D^2F(x)[\delta,\delta]
        =-\tr\big(C(\delta)^2\big).\]
\end{lemma}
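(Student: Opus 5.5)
The plan is to treat $F(x)=\psi(x,z(x))$ as a value function and use the envelope theorem together with implicit differentiation of the first-order condition for $z(x)$. At a nondegenerate point, Lemma~\ref{lem:partwise-balance} gives the optimality condition $\nabla_z\psi(x,z(x))=\one$. Equivalently, $P\nabla_z\psi(x,z(x))=0$, where $P$ is the orthogonal projection onto $\one^\perp$.

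\textbf{Step 1: regularity of $z(x)$.} First I will argue that $\Gamma$ restricted to $\one^\perp$ is invertible. By Lemma~\ref{lem:derivatives-psi}, $\zeta^\top\Gamma\zeta=\frac12\sum_{a,b}\tr(B_aB_b)(\zeta_a-\zeta_b)^2$. If this quadratic form vanished for some nonzero $\zeta\in\one^\perp$, then $\psi(x,\cdot)$ would have a flat direction at the minimizer. I expect this step to be the main obstacle, because it requires connecting nondegeneracy (uniqueness and attainment of the minimizer) to strict positivity of $\Gamma$ on $\one^\perp$. I would argue it as follows. Vanishing of the form forces the parts to split into groups $G$ with $\tr(B_aB_b)=0$ across groups. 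This means the supports of $\sum_{a\in G}B_a$ are orthogonal subspaces. Shifting $z$ by a constant on one group then rescales $M$ blockwise, and the change in $\psi$ is exactly linear in the shift, with slope $\sum_{a\in G}\tr B_a-|G|\cdot(\text{constant})$. That slope must be zero at a minimizer, so $\psi$ is constant along the whole line and the minimizer is not unique, contradicting nondegeneracy. Alternatively, one can simply take invertibility as part of the paper's definition of nondegenerate. Given invertibility, the implicit function theorem applied to $P\nabla_z\psi(x,z)=0$ on $\one^\perp$ shows that $z(x)$ is smooth near $x$.

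\textbf{Step 2: first derivative.} By the chain rule, $DF(x)[\delta]=D_x\psi[\delta]+\langle\nabla_z\psi,\dot z\rangle$, where $\dot z=Dz(x)[\delta]\in\one^\perp$. Since $\nabla_z\psi=\one$ and $\dot z\perp\one$, the second term vanishes. Lemma~\ref{lem:derivatives-psi} then gives $DF(x)[\delta]=\tr C(\delta)$.

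\textbf{Step 3: second derivative.} Differentiating the optimality condition $P\nabla_z\psi(x,z(x))=0$ along $\delta$ gives $P(\eta(\delta)+\Gamma\dot z)=0$. Lemma~\ref{lem:derivatives-psi} gives $\eta(\delta)\in\one^\perp$, and Lemma~\ref{lem:partwise-balance} gives $\Gamma\one=0$; since $\Gamma$ is symmetric, its range lies in $\one^\perp$. Hence $\dot z=-\Gamma_{\one^\perp}^{-1}\eta(\delta)$.

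Differentiating $F$ twice along the curve $t\mapsto(x+t\delta,z(x+t\delta))$ gives
\[
D^2F[\delta,\delta]=D^2_{xx}\psi[\delta,\delta]+2\eta(\delta)^\top\dot z+\dot z^\top\Gamma\dot z+\langle\nabla_z\psi,\ddot z\rangle.
\]
The last term vanishes because $\ddot z\in\one^\perp$ and $\nabla_z\psi=\one$. Substituting $\dot z=-\Gamma_{\one^\perp}^{-1}\eta(\delta)$, the middle terms combine to $-2\eta^\top\Gamma_{\one^\perp}^{-1}\eta+\eta^\top\Gamma_{\one^\perp}^{-1}\eta=-\eta^\top\Gamma_{\one^\perp}^{-1}\eta$. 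Using $D^2_{xx}\psi[\delta,\delta]=-\tr(C(\delta)^2)$ from Lemma~\ref{lem:derivatives-psi} yields \eqref{eq:F-hessian}.

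\textbf{Step 4: the special case.} If $\eta(\delta)=0$, the Schur-complement term vanishes, which gives the stated formula for $D^2F(x)[\delta,\delta]$.
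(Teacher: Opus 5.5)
Your proposal is correct and follows the paper's proof essentially step for step: the envelope argument for $DF$ using $\nabla_z\psi=\one$ and $\dot z\perp\one$, implicit differentiation of the balance condition to get $\dot z=-\Gamma_{\one^\perp}^{-1}\eta(\delta)$, and the chain rule for $D^2F$. Your Step 1 is covered in the paper by citing Proposition~\ref{prop:finiteness-balanced-scaling} (positive definiteness of the $z$-Hessian on $\one^\perp$), and your check that the $\langle\nabla_z\psi,\ddot z\rangle$ term vanishes supplies a term the paper's chain-rule expansion omits without comment.
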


\begin{proof}
Fix a point $x$, and let $z=z(x)$ be the inner minimizer. When $x$ moves infinitesimally in the direction $\delta$, let
\[
        \zeta:=Dz(x)[\delta]
\]
denote the direction of movement of $z(x)$. 
As $z(x) \in \one^\perp$ for every $x$, we have that $\zeta \in\one^\perp$.
Then, by the chain-rule
\begin{align*}
        DF(x)[\delta]&=D_x\psi(x,z)[\delta] + D_z\psi(x,z)[\zeta] \\
        &=D_x\psi(x,z)[\delta] = \tr(C(\delta)).
\end{align*}
Here, the second step uses that $D_z\psi(x,z)[\zeta] =0$ as $\nabla_z\psi(x,z) = \one$ at $z=z(x)$ by Lemma~\ref{lem:partwise-balance},  and that $\zeta \perp \one$. The last equality follows by Lemma~\ref{lem:derivatives-psi}.

To obtain $\zeta$ explicitly, we differentiate the condition from Lemma~\ref{lem:partwise-balance}  that $\Pi_{\one^\perp}\nabla_z\psi(x,z)=0$ at $z=z(x)$,
 in the $x$-direction $\delta$. Applying the chain-rule, and using the expressions in  Lemma~\ref{lem:derivatives-psi}, this gives
\[
        \Pi_{\one^\perp}D_x\nabla_z\psi(x,z)[\delta]+\Gamma\zeta=0.
\]
Since $\eta(\delta)\in\one^\perp$, this is equivalently $\eta(\delta)+\Gamma\zeta=0$ inside $\one^\perp$. As $\one$ lies in the kernel of $\Gamma$ by Lemma~\ref{lem:partwise-balance} and $\zeta \in \one^\perp$, this gives  \[\zeta=-\Gamma_{\one^\perp}^{-1}\eta(\delta).\]
For the second derivative of $F$, we differentiate $F(x)=\psi(x,z(x))$ twice along $\delta$.  By the chain rule
\begin{align}
        D^2F(x)[\delta,\delta]
        =D^2_{xx}\psi[\delta,\delta]
        +2D^2_{xz}\psi[\delta,\zeta]
        +D^2_{zz}\psi[\zeta,\zeta].
\label{eq:d2f}
\end{align}
By Lemma~\ref{lem:derivatives-psi}, these terms are  
\[
        D^2_{xx}\psi[\delta,\delta]
        =-\tr(C(\delta)^2), \quad 
        D^2_{xz}\psi[\delta,\zeta]
        =\ip{\eta(\delta)}{\zeta}, \text{ and } D^2_{zz}\psi[\zeta,\zeta]=\zeta^\top\Gamma\zeta.
\]
Using $\Gamma\zeta=-\eta(\delta)$ (inside $\one^\perp$), the last two terms in \eqref{eq:d2f} give,
\begin{align*}
        2\ip{\eta(\delta)}{\zeta}+\zeta^\top\Gamma\zeta
        &=2\ip{\eta(\delta)}{\zeta}-\ip{\eta(\delta)}{\zeta} 
        =\ip{\eta(\delta)}{\zeta} =-\eta(\delta)^\top\Gamma_{\one^\perp}^{-1}\eta(\delta).
\end{align*}
Together with the first term in \eqref{eq:d2f}, which equals $-\tr(C(\delta)^2)$, gives the claimed result in \eqref{eq:F-hessian}.
\end{proof}

\subsection{Nondegenerate supports}
\label{sec:nondegenerate-supports}
Fix a rank-$r$ instance and a point $x\in\Delta_r$.  Let us denote the support
$
E:=\supp(x)=\{i:x_i>0\}$ and let $
E_a:=E\cap P_a$ for each part $a\in [r]$.
For a subset of parts $S\subseteq[r]$, define
\[
\rho_E(S)
:=
\dim \operatorname{span}\{v_i:i\in E_a,\ a\in S\}.\]
Note that $\rho_E(S)$ depends only on the support $E$, and not on the specific positive values of the coordinates $x_i$.
We say that the support $E$ is \emph{admissible} if
$\rho_E([r])=r$ and
$\rho_E(S)\ge |S|$, 
for all  $S\subseteq[r]$. By Rado's theorem (the linear Hall theorem), any admissible $E$
 contains a full-rank transversal.
Thus, given an $x$ with support $E$, the objective $F(x)$ is finite if and only if $E$ is admissible.

{\bf Nondegenerate support.}
A proper nonempty set $S\subsetneq[r]$ is called \emph{tight} if $
\rho_E(S)=|S|$.
A support
$E$ is called \emph{nondegenerate} if it is admissible and has no proper tight set, i.e.,
\[
\rho_E(S)>|S|
\qquad
\text{for all } \emptyset\ne S\subsetneq[r].\]
We have the following important fact.

\begin{proposition}
\label{prop:finiteness-balanced-scaling}
Let $x\in\Delta_r$ and $E=\supp(x)$. The inner infimum defining $F(x)$ is attained by a unique minimizer $z(x)\in\one^\perp$ iff $E$ is nondegenerate.
Moreover, the $z$-Hessian at $z(x)$ is positive definite on $\one^\perp$.
\end{proposition}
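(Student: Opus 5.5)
The plan is to reduce both claims to the standard analysis of a log-sum-exp function, using the Cauchy--Binet expansion mentioned after \eqref{eq:prog-b}.

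\textbf{Step 1: Cauchy--Binet form.} For an $r$-subset $T\subseteq E$, let $\chi_T\in\mathbb{Z}_{\ge0}^r$ be the vector with $(\chi_T)_a=|T\cap P_a|$. Let $\mathcal{T}$ be the family of $r$-subsets $T\subseteq E$ with $\det(V_T)\neq0$. Cauchy--Binet gives
\[
\det M(x,z)=\sum_{T\in\mathcal{T}} c_T\, e^{\ip{z}{\chi_T}},\qquad c_T=\det(V_T)^2\prod_{i\in T}x_i>0 .
\]
Since $\ip{z}{\one}=0$ on $\one^\perp$, we may write $u_T:=\chi_T-\one$ and $f(z):=\psi(x,z)=\log\sum_T c_Te^{\ip{z}{u_T}}$, where $u_T\in\one^\perp$.

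The $z$-Hessian of $f$ is the covariance matrix of $u_T$ under the Gibbs distribution $p_T\propto c_Te^{\ip{z}{u_T}}$. This covariance is positive definite on $\one^\perp$ exactly when the vectors $\{u_T\}$ affinely span $\one^\perp$.

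\textbf{Step 2: The convex-analysis criterion.} I will prove the following for $f$ restricted to $\one^\perp$. If $0$ lies in the interior, relative to $\one^\perp$, of $K:=\operatorname{conv}\{u_T\}$, then:
\begin{enumerate}
\item the affine hull of $K$ is $\one^\perp$, so the Hessian is positive definite everywhere and $f$ is strictly convex;
\item for every unit $d\in\one^\perp$ we have $\max_T\ip{u_T}{d}>0$, so $f$ is coercive.
\end{enumerate}
Coercivity plus strict convexity give a unique minimizer, together with the Hessian claim.

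Conversely, suppose $0\notin\operatorname{int}K$. Then there is a nonzero $d\in\one^\perp$ with $\ip{u_T}{d}\le0$ for all $T$, so $t\mapsto f(z+td)$ is nonincreasing.
\begin{itemize}
\item If some $\ip{u_T}{d}<0$, the corresponding term strictly decreases along $d$, so no point can be a minimizer.
\item If all $\ip{u_T}{d}=0$, then $f$ is constant along lines in direction $d$, so any minimizer is non-unique.
\end{itemize}
If $\mathcal{T}=\emptyset$, then $f\equiv-\infty$.

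\textbf{Step 3: Identifying $K$ (the main obstacle).} I expect the crux to be the claim
\[
\operatorname{conv}\{\chi_T:T\in\mathcal{T}\}=\{y\in\R^r_{\ge0}:\ y(S)\le\rho_E(S)\ \forall S,\ y([r])=r\}=:P.
\]
The sets in $\mathcal{T}$ are the bases of the linear matroid on $E$. The map $T\mapsto\chi_T$ aggregates this matroid along the partition, and $\rho_E$ is the induced submodular function, namely the matroid rank of unions of the parts in $S$.

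The inclusion $\subseteq$ is immediate, since $\chi_T(S)=|T\cap\bigcup_{a\in S}E_a|\le\rho_E(S)$. For $\supseteq$, I would use two facts:
\begin{itemize}
\item the polymatroid base polytope $P$ is integral, by Edmonds;
\item every integral point $y\in P$ equals $\chi_T$ for some basis $T$. This is Rado's theorem applied to the instance in which part $a$ is replicated $y_a$ times, with Hall-type condition $\rho_E(S)\ge y(S)$ for every $S$.
\end{itemize}
If a citation-free route is preferred, one can instead use the known fact that images of matroid base polytopes under such aggregation maps are polymatroid base polytopes.

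\textbf{Step 4: Conclusion.} We have $0\in\operatorname{int}K$ (relative to $\one^\perp$) iff $\one\in P$ with every defining inequality strict at $\one$. Strict inequalities define an open set in the hyperplane $y([r])=r$, so strictness gives interiority; conversely, a tight inequality places $\one$ on a face of $P$.

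Next, translate the inequalities at $y=\one$:
\begin{itemize}
\item $y(S)\le\rho_E(S)$ becomes $|S|\le\rho_E(S)$, strict for every proper nonempty $S$;
\item $y_a\ge0$ holds strictly at $\one$;
\item the condition $y([r])=r\le\rho_E([r])$ forces $\rho_E([r])=r$.
\end{itemize}
This is exactly nondegeneracy of $E$. In the degenerate case, the separating direction in Step 2 can be taken explicitly: for a tight $S$, use $d=\one_S-\tfrac{|S|}{r}\one$, which satisfies $\ip{u_T}{d}=\chi_T(S)-|S|\le0$.
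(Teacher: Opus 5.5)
Your argument is correct and complete. The paper does not prove this proposition itself; it defers to Appendix B of \cite{MadanNikolovSinghTantipongpipat2020}. So your write-up is a self-contained substitute rather than a variant of an in-paper proof.

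Your route is the standard Newton-polytope argument:
\begin{enumerate}
\item You write $\psi(x,\cdot)$ on $\one^\perp$ as a log-sum-exp over the Cauchy--Binet terms.
\item You reduce existence and uniqueness of the minimizer to whether $0$ lies in the interior of $K=\operatorname{conv}\{u_T\}$ relative to $\one^\perp$.
\item You identify $K+\one$ with the base polytope of the integral polymatroid $\rho_E$, using Edmonds' integrality and Rado's theorem.
\end{enumerate}
The identification in step 3 is needed only for the direction nondegenerate $\Rightarrow$ unique minimizer. The converse uses only the trivial bound $\chi_T(S)\le\rho_E(S)$, your explicit direction $d=\one_S-\frac{|S|}{r}\one$, and the separate case $\mathcal T=\emptyset$.

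One phrasing fix: ``every defining inequality strict at $\one$'' should exclude $S=\emptyset$ and $S=[r]$. Those two inequalities are constant on the hyperplane $y([r])=r$, so they cannot be strict there. Your bullet list already treats them correctly.

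Your route also yields more than the cited statement:
\begin{itemize}
\item It shows the $z$-Hessian is positive definite on $\one^\perp$ at every $z$, not only at $z(x)$.
\item It recovers the quantitative bound of Proposition~\ref{lem:scaling-box}. Because $\rho_E$ is integer valued, nondegeneracy gives $\rho_E(S)\ge|S|+1$ for every proper nonempty $S$. Hence $K$ contains the ball of radius $1/\sqrt r$ around $0$ in $\one^\perp$, and so $\max_T\langle u_T,z\rangle\ge\|z\|_2/\sqrt r$. Comparing $f(z(x))\ge\log c_{\min}(x)+\|z(x)\|_2/\sqrt r$ with $f(z(x))\le f(0)=\log C(x)$ gives exactly $\|z(x)\|_2\le\sqrt r\log(C(x)/c_{\min}(x))$.
\end{itemize}
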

We also note a quantitative bound that will motivate our polynomial-time algorithm.
\begin{proposition}
\label{lem:scaling-box}
Suppose $E=\supp(x)$ is nondegenerate, and every coordinate in $E$ satisfies $x_i\ge \alpha>0$. Then
$
\|z(x)\|_2
\le
\sqrt{r} \log(C(x)/c_{\min}(x)),$
where, for $r$-subsets $T\subseteq E$, $c_T(x):=\det(V_T)^2\prod_{i\in T}x_i$ are the Cauchy--Binet coefficients,
\[
        C(x):=\sum_{T\subseteq E,\ |T|=r}c_T(x),
        \qquad
        c_{\min}(x):=\min_{T:c_T(x)>0}c_T(x).
\] 

In particular, $\|z(x)\|_2$ is polynomial in $r, \log (1/\alpha)$ and the bit complexity of the input vectors.    
\end{proposition}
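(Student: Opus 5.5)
Via Cauchy--Binet, $\det M(x,z)=\sum_{T} c_T(x)\, e^{\langle z,\one_{a(T)}\rangle}$. For a full-rank transversal $T$ of the support (exactly one element per part), $\langle z,\one_{a(T)}\rangle=\sum_a z_a=0$. So the sum runs over $r$-subsets $T\subseteq E$ with $\det V_T\neq 0$, and the exponent is $\langle z,\chi_T\rangle$, where $\chi_T\in\mathbb{Z}^r$ counts the elements of $T$ in each part. The plan is to sandwich $\psi(x,z(x))$. From above, $\psi(x,z(x))\le\psi(x,0)=\log C(x)$, since $z(x)$ is the minimizer over $\one^\perp$ and $0\in\one^\perp$. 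From below, I will show that for any $z\in\one^\perp$ there is some $T$ with $c_T(x)>0$ and $\langle z,\chi_T\rangle\ge \|z\|_\infty$, or at least $\ge \|z\|_2/\sqrt r$. Then $\psi(x,z)\ge \log c_{\min}(x)+\|z\|_2/\sqrt r$. Applying this at $z=z(x)$ and comparing with the upper bound gives $\|z(x)\|_2\le \sqrt r\,\log(C(x)/c_{\min}(x))$.

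The key step is the lower bound on $\max_T\langle z,\chi_T\rangle$, and this is where nondegeneracy enters. Order the parts so that $z_{a_1}\ge\dots\ge z_{a_r}$, and let $S_j=\{a_1,\dots,a_j\}$. Each nonempty proper prefix $S_j$ is non-tight, so $\rho_E(S_j)\ge j+1$. Consider the linear matroid on $E$ and greedily build a basis by scanning elements part by part in the order $a_1,a_2,\dots$ (this is a max-weight basis with weights $z_{a(i)}$). The resulting basis $T$ has exactly $\rho_E(S_j)$ elements in the parts of $S_j$, for every $j$. Writing $\langle z,\chi_T\rangle$ via Abel summation gives
\[
\langle z,\chi_T\rangle=\sum_{j=1}^{r-1}(z_{a_j}-z_{a_{j+1}})\,\rho_E(S_j)+z_{a_r}\, r .
\]
Subtracting the transversal value $0=\sum_{j}(z_{a_j}-z_{a_{j+1}})\,j+z_{a_r}r$ leaves
\[
\langle z,\chi_T\rangle=\sum_{j=1}^{r-1}(z_{a_j}-z_{a_{j+1}})(\rho_E(S_j)-j)\ge z_{a_1}-z_{a_r}.
\]
Since $z\in\one^\perp$, we have $z_{a_1}\ge 0\ge z_{a_r}$, so $z_{a_1}-z_{a_r}\ge\|z\|_\infty\ge\|z\|_2/\sqrt r$. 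Finally, $T$ is a basis of the span of $E$, so $\det V_T\ne0$ and $c_T(x)>0$, using $x_i>0$ on $E$.

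For the "in particular" clause, I bound $C(x)\le\binom{|E|}{r}\max_T\det(V_T)^2$ using $x_i\le1$. I bound $c_{\min}(x)\ge \alpha^r\min_{T:\det V_T\neq0}\det(V_T)^2$. For integer (or rational) inputs, $\det(V_T)^2$ is at least the inverse of the denominators raised to the power $2r$, and at most exponential in the bit complexity by Hadamard's inequality. Taking logarithms gives a bound polynomial in $r$, $\log(1/\alpha)$, and the bit size. The main obstacle is the greedy/prefix-rank identity in the previous paragraph, in particular making sure the greedy basis attains $\rho_E(S_j)$ for every prefix simultaneously. This is the standard property of greedy bases on nested prefixes of a matroid, and ties in $z$ do not matter because the identity holds for any consistent ordering.
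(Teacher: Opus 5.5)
Your proof is correct and complete. The paper gives no argument for this proposition; it simply refers to \cite{MadanNikolovSinghTantipongpipat2020} (Appendix B). Your route is therefore a self-contained alternative to that citation.

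You sandwich $F(x)$ between two bounds. The upper bound is $\log C(x)$, obtained by taking $z=0$. The lower bound is $\log c_{\min}(x)+\max_{T:c_T>0}\ip{z}{\chi_T}$. You then lower-bound this support function with a greedy basis built in order of decreasing $z_a$. That basis makes every prefix count equal $\rho_E(S_j)$, and Abel summation together with nondegeneracy ($\rho_E(S_j)-j\ge 1$) yields $\ip{z}{\chi_T}\ge z_{\max}-z_{\min}\ge\|z\|_\infty$.

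Geometrically, this says that $\one$ has slack at least $1$ in every facet inequality $y(S)\le\rho_E(S)$ of the Newton polytope $\operatorname{conv}\{\chi_T\}$. That polytope is the base polytope of the polymatroid $S\mapsto\rho_E(S)$ on parts. In other words, $\one$ is robustly interior, which is the usual mechanism behind norm bounds for such log-sum-exp scaling problems. Your Edmonds-greedy computation checks this directly, without appealing to the polytope description.

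Your version also buys two things beyond the stated result. First, it gives the sharper bound $\|z(x)\|_\infty\le\log(C(x)/c_{\min}(x))$, from which the stated $\ell_2$ bound follows with its $\sqrt r$ factor. Second, it shows that $\psi(x,\cdot)$ is coercive on $\one^\perp$, which gives the existence part of the ``if'' direction of Proposition~\ref{prop:finiteness-balanced-scaling}. Finally, you correctly use the hypothesis $x_i\ge\alpha$ only in the ``in particular'' clause, which is the only place it is needed.
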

We refer to \cite{MadanNikolovSinghTantipongpipat2020} (Appendix B) for the proof of these results.

\subsection{Exact tight-set split}
Suppose $x$ is a degenerate point. In this case, we can split the instance into lower-dimensional instances without any loss in objective value. More precisely, one has the following. 

\begin{lemma}[Tight-set split]
\label{lem:tight-split-pedagogical}
Given $x$, suppose $\emptyset\ne S\subsetneq [r]$ is tight with
$\rho_E(S)=|S|$. Let
$W:=\operatorname{span}\{v_i:i\in E_a,\ a\in S\}$ and let $\Pi$ be the orthogonal projection onto $W^\perp$. Then, the instance decomposes into two disjoint instances, one with parts in
$S$ and the original vectors, viewed in $W$, and the other with parts in $S^c$
and projected vectors $\Pi v_i$, viewed in $W^\perp$. Moreover, $x$ decomposes into $x_{|S}$ and $x_{|S^c}$, and
\[
F(x)=F^W(x_{|S})+F^{W^\perp}(x_{|S^c}),
\]
where the superscript indicates that the same scaling relaxation is evaluated in the indicated subspace.
Similarly, for every integral transversal $B$ supported on $E$, with $B=B_S\cup B_{S^c}$, the determinant decomposes as
\[
\det\Big(\sum_{i\in B}v_i v_i^\top\Big)
=
\det_W\Big(\sum_{i\in B_S}v_i v_i^\top\Big)
\det_{W^\perp}\Big(\sum_{i\in B_{S^c}}(\Pi v_i)(\Pi v_i)^\top\Big).
\]
\end{lemma}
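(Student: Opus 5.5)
Our plan is to work in an orthonormal basis adapted to $\R^r=W\oplus W^\perp$, with $\dim W=|S|=:s$. Every $v_i$ with $i\in E_a$, $a\in S$, lies in $W$. So for any positive semidefinite combination, the matrix $\sum_a e^{z_a}A_a(x)$ has a block-triangular structure. Write $v_i=(u_i,\Pi v_i)$ with $u_i\in W$ the component along $W$. Parts in $S$ contribute only to the $W\times W$ block. Parts in $S^c$ contribute full rank-one matrices $(u_i,\Pi v_i)(u_i,\Pi v_i)^\top$. We will use the square-root representation $M=\sum_i x_iw_iw_i^\top = \tilde V\tilde V^\top$ with columns $\sqrt{x_i}w_i$. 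In the adapted basis, $\tilde V$ is block upper triangular: its columns indexed by $S$-parts have zero $W^\perp$-component.

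\textbf{The integral identity.} This is the easy step. For a transversal $B=B_S\cup B_{S^c}$, the square matrix $V_B$ (columns in the adapted basis) has the form $\begin{pmatrix} U_{S} & * \\ 0 & \Pi V_{B_{S^c}}\end{pmatrix}$. Hence $\det V_B=\det_W(U_S)\det_{W^\perp}(\Pi V_{B_{S^c}})$, and squaring gives the claim.

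\textbf{The relaxation identity.} Here we use Cauchy--Binet: $\det M(x,z)=\sum_{|T|=r}c_T(x)e^{\sum_{i\in T}z_{a(i)}}$. Only $T$ with $\det V_T\neq0$ contribute. We claim every such $T$ must contain exactly $s$ indices from $S$-parts. There cannot be more, since those vectors span a space of dimension only $s$. There cannot be fewer: the columns from $S^c$-parts would then number more than $r-s$, and projecting onto $W^\perp$ via the block-triangular form shows they cannot all contribute. More precisely, $\det V_T=\det(U_{T_S})\det(\Pi V_{T_{S^c}})$ requires $|T_S|=s$. Hence the sum factorizes as
\[
\det M(x,z)=\Big(\sum_{T_S}c^W_{T_S}(x)e^{z(T_S)}\Big)\Big(\sum_{T_{S^c}}c^{W^\perp}_{T_{S^c}}(x)e^{z(T_{S^c})}\Big)=\det_W M^W(x_{|S},z_S)\cdot\det_{W^\perp}M^{W^\perp}(x_{|S^c},z_{S^c}),
\]
where $M^{W^\perp}$ uses the projected vectors $\Pi v_i$. (Alternatively, we could get this directly from the block-triangular $\tilde V$ via a Schur complement, but Cauchy--Binet avoids computing it.) So $\psi(x,z)=\psi^W(x_{|S},z_S)+\psi^{W^\perp}(x_{|S^c},z_{S^c})$.

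\textbf{Decoupling the infimum.} This is the main subtlety. The constraint $\sum_a z_a=0$ couples the two blocks. We handle it with the scaling invariance of each factor. Each factor is a sum over transversals of its own sub-instance: every $T_S$ picks exactly one element per part in $S$, because $T$ is a transversal of the full instance. Shifting $z_S\mapsto z_S+t\one$ therefore multiplies the first factor by $e^{ts}$ and leaves the second unchanged. Given any $(z_S,z_{S^c})$, we can shift $z_S$ by $t$ and $z_{S^c}$ by $t'$ with $st+(r-s)t'=0$, arranging $\sum z_S=0=\sum z_{S^c}$. This changes $\psi$ by $st+(r-s)t'=0$. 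Hence the infimum over $\one^\perp$ equals the sum of the two independent infima over $\one_S^\perp$ and $\one_{S^c}^\perp$, which is $F^W(x_{|S})+F^{W^\perp}(x_{|S^c})$. The identity also holds when values are $-\infty$, since each step is an equality of extended reals.
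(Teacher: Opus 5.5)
Your relaxation identity rests on the claim that every $T$ with $\det V_T\neq0$ has $|T_S|=s$. That would give the exact factorization $\det M(x,z)=\det_W M^W(x_{|S},z_S)\cdot\det_{W^\perp}M^{W^\perp}(x_{|S^c},z_{S^c})$, but the claim is false. The Cauchy--Binet sum runs over all $r$-subsets of the support, not over transversals. The $S^c$-vectors also have $W$-components, and these can supply the directions missing from $T_S$.

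For instance, take $r=2$, $P_1=\{v_1\}$, $P_2=\{v_2,v_3\}$ with $v_1=e_1$, $v_2=e_1+e_2$, $v_3=e_1-e_2$, and $x=(1,\tfrac12,\tfrac12)$. Take the tight set $S=\{1\}$, so $W=\operatorname{span}(e_1)$. Then $T=\{2,3\}$ has $|T_S|=0$ but $\det V_T=-2$, and
\[
\det M(x,z)=e^{z_1+z_2}+e^{2z_2},\qquad \det_W M^W\cdot\det_{W^\perp}M^{W^\perp}=e^{z_1+z_2}.
\]
So on $\one^\perp$ you have $\psi=\log(1+e^{2z_2})>0=F^W+F^{W^\perp}$ for every $z$, and $F(x)=0$ is reached only as $z_2\to-\infty$.

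In your block notation $\tilde V=\left(\begin{smallmatrix}A&B\\0&D\end{smallmatrix}\right)$, the diagonal blocks are not square. The Schur complement then gives only $\det M=\det(DD^\top)\det\bigl(AA^\top+B\,\Pi_{\ker D}B^\top\bigr)\ge\det(AA^\top)\det(DD^\top)$, where $\Pi_{\ker D}$ is the projection onto $\ker D$. This is unavoidable. An exact factorization would make the inner infimum attained whenever both children attain theirs, as the rank-one children above do. That contradicts Proposition~\ref{prop:finiteness-balanced-scaling} at the degenerate support $E$. Relatedly, $T_S$ need not pick one element per part; the degree-$s$ homogeneity you use still holds, but only because $M^W$ is $s\times s$.

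What survives is an inequality. The terms with $|T_S|=s$ have square diagonal blocks, so they are exactly the product of the two child Cauchy--Binet sums, and all other terms are nonnegative. Combined with your shift normalization, this proves $F(x)\ge F^W(x_{|S})+F^{W^\perp}(x_{|S^c})$. That is the paper's first direction, which it obtains from the same Schur-complement inequality.

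The reverse inequality needs a genuinely extra idea: a degenerate scaling limit, since the infimum is not attained. The paper multiplies the $S$-part scalings by $t^{r-m}$ and the $S^c$-part scalings by $t^{-m}$. It then conjugates by $D_t=t^{-(r-m)/2}I_W\oplus t^{m/2}I_{W^\perp}$ and lets $t\to\infty$.

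In your language, replace $z$ by $z+t\bigl((r-s)\mathbf{1}_S-s\mathbf{1}_{S^c}\bigr)$, which stays in $\one^\perp$. A term with $|T_S|=j$ gets multiplied by $e^{-r(s-j)t}$. So the product terms are unchanged and the others vanish, giving $F(x)\le\psi^W(x_{|S},z_S)+\psi^{W^\perp}(x_{|S^c},z_{S^c})$ for all normalized child scalings.

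With that limit added, your Cauchy--Binet route is a valid proof. The integral identity part is fine as written.
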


\begin{proof}
Use the orthogonal decomposition
$\R^r=W\oplus W^\perp$.
The parts in $S$ lie entirely inside $W$, and the parts in $S^c$ may have components in both $W$ and $W^\perp$.
So we can write
\begin{equation}
\label{eq:split-1}
\sum_{a\in S}y_aA_a(x)=
\begin{pmatrix}
P(y_S)&0\\
0&0
\end{pmatrix}, \qquad 
\sum_{b\notin S}y_bA_b(x)=
\begin{pmatrix}
N_{11}(y_{S^c})&N_{12}(y_{S^c})\\
N_{21}(y_{S^c})&N_{22}(y_{S^c})
\end{pmatrix}.
\end{equation}
Here $N_{22}$ is the matrix formed from the projected outer vectors $\Pi v_i$ in
$W^\perp$. So the full matrix 
\begin{equation}
\label{eq:split-2}
M(y)=
\begin{pmatrix}
P(y_S)+N_{11}(y_{S^c})&N_{12}(y_{S^c})\\
N_{21}(y_{S^c})&N_{22}(y_{S^c})
\end{pmatrix}.
\end{equation}

First we prove 
$F(x)\ge F^W(x_{|S})+F^{W^\perp}(x_{|S^c})$.
For ease of notation, let us drop $y_S,y_{S^c}$ in \eqref{eq:split-1}, \eqref{eq:split-2}.
By the Schur complement formula,
\[
\det M(y)=\det(N_{22})
\det\bigl(P+N_{11}-N_{12}N_{22}^{-1}N_{21}\bigr) \geq \det(N_{22})\det(P).
\]
Here the inequality uses that the matrix in \eqref{eq:split-1}, corresponding to $S^c$ 
is PSD, and hence its Schur complement
$N_{11}-N_{12}N_{22}^{-1}N_{21} \succeq 0$ (if $N_{22}$ is singular, we can add $\epsilon I$ and take the limit).

We now give a valid scaling for $S,S^c$. Consider the original scaling $y$, and let
$
y_S:=\prod_{a\in S}y_a$ and $
y_{S^c}:=\prod_{b\notin S}y_b$. Note that $y_Sy_{S^c}=1$. 
Normalize the scaling inside $S$ by dividing each $y_a$ by $y_S^{1/m}$, where $m=|S|$, and normalize the scaling inside $S^c$ analogously. Then $\det P$ changes by the factor $y_S$ and $\det N_{22}$ changes by the factor $y_{S^c}$. Since $y_Sy_{S^c}=1$, the product $\det(N_{22})\det(P)$ is unchanged. Minimizing over the two normalized child scalings can only decrease the right-hand side, and hence $F(x)\ge F^W(x_{|S})+F^{W^\perp}(x_{|S^c})$.

Now we prove the reverse inequality $F(x)\le F^W(x_{|S})+F^{W^\perp}(x_{|S^c})$.  
Fix unit-product scalings for the two subinstances:
$\prod_{a\in S}\widehat y_a=1$, and 
$\prod_{b\notin S}\widehat z_b=1$.
Let us denote $m=|S|$. 

Let $t>0$ be a parameter. We combine these into a full product-one scaling as 
\[
 y_a(t)=t^{r-m}\widehat y_a\quad(a\in S),
 \qquad
 y_b(t)=t^{-m}\widehat z_b\quad(b\notin S).
\]
This is valid as
$\prod_{a\in S}t^{r-m}\prod_{b\notin S}t^{-m}
=t^{m(r-m)}t^{-m(r-m)}=1$.
Under this scaling,
\[
M(t)=
\begin{pmatrix}
 t^{r-m}P+t^{-m}N_{11} & t^{-m}N_{12}\\
 t^{-m}N_{21} & t^{-m}N_{22}
\end{pmatrix},
\]
where $P,N_{11},N_{12},N_{21},N_{22}$ are computed from the fixed child scalings
$\widehat y$ and $\widehat z$ on the two child instances.

Consider the diagonal matrix $D_t = t^{-(r-m)/2}I_W \oplus t^{m/2}I_{W^\perp}$.
As $\dim W=m$ and $\dim W^\perp=r-m$, we have
$\det D_t=t^{-m(r-m)/2}t^{m(r-m)/2}=1
$, and 
thus
$
\det M(t)=\det(D_tM(t)D_t)$.
But
\[
D_tM(t)D_t=
\begin{pmatrix}
 P+t^{-r}N_{11} & t^{-r/2}N_{12}\\
 t^{-r/2}N_{21} & N_{22}
\end{pmatrix}.
\]
Letting $t\to\infty$, the off-diagonal blocks and the extra $N_{11}$ term vanish
and we get that
$
\det M(t)\rightarrow \det_W(P)\det_{W^\perp}(N_{22})$.
Taking infima over the two child scalings gives
$
F(x)\le F^W(x_{|S})+F^{W^\perp}(x_{|S^c})$.

The integral determinant identity follows as the full parallelotope volume factors as the base volume in $W$ times  the height volume after projection to $W^\perp$.
\end{proof}
Applying this split repeatedly, we can ensure that the restriction of $x$ to every residual instance is nondegenerate.

\paragraph{Singleton part contraction.} 
A useful special case occurs when $x_i=1$ for some coordinate $i$. In this case, the part $P_a$ containing $i$ becomes degenerate. 
We record the corresponding contraction separately.

\begin{lemma}
\label{lem:singleton-pedagogical}
Suppose a part $P$ has a single vector $a:=v_i$ with
$x_i=1$.  Let $\Pi$ be the orthogonal projection onto $a^\perp$. Then for
the residual instance with projected vectors $\Pi v_i$ in the remaining parts, one has
\[
F(x)=\log\|a\|_2^2+F^{a^\perp}(x'),
\]
where $x'$ is $x$ restricted to remaining parts.
Moreover, for every transversal $T$ of remaining parts,
\[
\det\Big(aa^\top+\sum_{i\in T}v_i v_i^\top\Big)
=
\|a\|_2^2\,\,
\det_{a^\perp}\Big(\sum_{i\in T}(\Pi v_i)(\Pi v_i)^\top\Big).
\]
\end{lemma}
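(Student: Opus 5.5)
The plan is to derive Lemma~\ref{lem:singleton-pedagogical} as the special case $S=\{a_0\}$ of Lemma~\ref{lem:tight-split-pedagogical}, where $a_0$ is the index of the part $P$. We may take $a\neq 0$: if $a=0$, then $\rho_E(\{a_0\})=0<1$, so the support is not admissible, and both sides of the identities are $-\infty$ (respectively $0$).

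\emph{Tightness of $S$.} Since $P\cap E=\{i\}$, we have $\rho_E(\{a_0\})=\dim\operatorname{span}\{a\}=1=|S|$, so $S$ is tight. The corresponding subspace is $W=\operatorname{span}\{a\}$, and $\Pi$ is exactly the projection onto $W^\perp=a^\perp$.

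\emph{Applying the split.} Lemma~\ref{lem:tight-split-pedagogical} gives
\[F(x)=F^W(x_{|S})+F^{a^\perp}(x').\]
It remains to evaluate the one-dimensional, one-part instance $F^W(x_{|S})$. Here the product-one constraint forces $y_{a_0}=1$, so no scaling freedom remains. The matrix is $x_i\,aa^\top=aa^\top$ restricted to $W$, and its determinant in $W$ is $\|a\|_2^2$. Hence $F^W(x_{|S})=\log\|a\|_2^2$, which is the first claim.

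\emph{Determinant identity.} We use the block form from the proof of Lemma~\ref{lem:tight-split-pedagogical}, written in an orthonormal basis adapted to $W\oplus a^\perp$. The matrix $aa^\top$ contributes $P=\|a\|_2^2$ in the $W$-block. The transversal $T$ contributes $N$ with $N_{22}=\sum_{i\in T}(\Pi v_i)(\Pi v_i)^\top$. The Schur complement then gives
\[\det=\det(N_{22})\bigl(P+N_{11}-N_{12}N_{22}^{-1}N_{21}\bigr).\]
Since $N$ has rank at most $|T|=r-1$, which equals the dimension of the $N_{22}$ block, the Schur complement $N_{11}-N_{12}N_{22}^{-1}N_{21}$ vanishes when $N_{22}$ is invertible. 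If $N_{22}$ is singular, both sides are $0$. Equivalently, one can use the base-times-height argument: the volume of the parallelotope spanned by $a$ and $\{v_i\}_{i\in T}$ equals $\|a\|$ times the volume of the projections $\{\Pi v_i\}_{i\in T}$, and squaring gives the identity.

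The only point needing care is this degenerate-rank check. Everything else follows directly from Lemma~\ref{lem:tight-split-pedagogical}.
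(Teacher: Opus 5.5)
Your proposal is correct and takes the same route as the paper, which records this lemma as the special case $S=\{a_0\}$, $W=\operatorname{span}\{a\}$ of Lemma~\ref{lem:tight-split-pedagogical}, with the determinant identity being the base-times-height factorization you also give. The only case the split lemma does not literally cover is $r=1$, where $S$ is not a proper subset, but there the claim is immediate because $F^{a^\perp}$ of the empty instance equals $0$.
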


The important point is that this will allow us to apply the following three structural steps without any loss in objective value: (i) delete coordinates with $x_i=0$; (ii) contract singleton parts; (iii) split along proper tight sets.

Note that these steps only depend on the support and not on the specific values of $x$. Thus zero-coordinate deletions are performed at most 
$O(n)$ times and the split and singleton contractions are performed only $O(d)$ times throughout the algorithm.

Whenever the support changes, the tight sets can also be determined algorithmically. To do this we observe that $g(S)= \rho(S)-|S|$ is a submodular function, because $\rho$ is a linear-matroid rank function and $|S|$ is modular.
Thus tight or violated Hall-type inequalities can be found by submodular minimization using standard reductions.
More simply, one can use matroid intersection between the represented linear matroid and
the partition matroid.

%% file: continuous-dynamics-no-highlights.tex
\section{Continuous Process}
\label{sec:continuous-compression}
We now describe the continuous process for reducing the support to $O(r)$. We assume throughout that we are at a nondegenerate point $x$, and show the existence of a local direction $\delta$ at $x$, such that the process makes large progress for the objective $L(x)=F(x)+\Phi(x)$, as long as the support is large. If $x$ is at a degenerate point, we apply the split discussed previously and proceed on the lower-dimensional instances.

Notice that we are completely ignoring any convergence or running-time issues. In particular, the continuous process may not actually converge to a degenerate point in a finite number of steps, since it may be forced to take smaller and smaller steps as the point $x$ becomes more degenerate.
However, this continuous description allows us to describe several key ideas involving the potential function $\Phi$ and the dimension counting argument for the null-Schur direction, in a clean way. The full discrete version is described in Section \ref{sec:finite-dynamics}. 

We begin by defining the potential function $\Phi$.

\subsection{The Potential Function}
\label{subsec:capped-potential}

Fix the constant $H:=20$.
Define the function
\[
   p(s):=
   \begin{cases}
   -s+\dfrac H2s^2, & 0\le s\le 1/H,\\[2mm]
   -\dfrac1{2H}, & 1/H\le s\le 1.
   \end{cases}
\]
The following properties of $p(s)$ are easily verified. 
\begin{enumerate}
\item    $-s\le p(s)\le 0$ for all $s\in [0,1]$.
\item The derivative $p'(s) \in [-1,0]$ and is continuous for all $s\in [0,1]$. 
\item The second derivative satisfies $p''(s)=H$ on $s\in (0,1/H)$ and $p''(s)=0$ for $s>1/H$. 
\end{enumerate}
Note that the second derivative is discontinuous at $s=1/H$, but this will not affect our argument, since we only use $p''(s)$ for $s\leq 1/2H$.
Intuitively, a key property of this function is that it is highly convex in $[0,1/H]$ and yet $p(s)$ and $p'(s)$ are bounded (independent of $H$).

For a fractional point $x$, set $\Phi(x):=\sum_i p(x_i)$, and define the potential \[L(x):=F(x)+\Phi(x).
\]
As $\sum_i x_i=r$, the first property $-s\le p(s)\le0$ gives that 
\[
   -r\le \Phi(x)\le0.
\]
For $x$, let $
   T(x):=\{i\in E:0<x_i\le 1/(2H)\}$ denote the set {\em tiny} coordinates, and note that they are safely inside the quadratic region of $p$.  

We call a coordinate active if $0< x_i < 1$. By the zero-deletion step and singleton-part contraction, we can assume that at any time, the support consists only of active coordinates.

\subsection{Progress Argument}
\label{subsec:first-second-compression}
We now describe the progress argument outlined in Section \ref{sec:introduction}.

For every active coordinate $i$ of $x$, define
\[
   g_i:= \frac {\partial L(x)}{\partial x_i} = \ell_i+p'(x_i).
\]
Then, for every tangent direction $\delta$,
 the differential  $DL(x)[\delta]=\sum_i\delta_i g_i$.
We have the following useful fact. 

\begin{lemma}[First-order progress or leverage control]
\label{lem:first-order-gap-or-leverage-control}
For any nondegenerate $x$,
 there is either a unit tangent direction $\delta$ with   $DL(x)[\delta]>1/\sqrt2$,
or else $\ell_i\le 3$ for every active coordinate $i$.
\end{lemma}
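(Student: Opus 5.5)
We prove the contrapositive: suppose some active coordinate $j$ has $\ell_j>3$, and exhibit a unit tangent direction with $DL(x)[\delta]>1/\sqrt2$. Tangent directions are those with $\sum_{i\in P_a}\delta_i=0$ for every part $a$. They must also be feasible at an interior point of the support, which is automatic because all support coordinates are active.

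\textbf{Step 1: the balance identity.} Let $a$ be the part containing $j$. By Lemma~\ref{lem:partwise-balance}, $\tr B_a=\sum_{i\in P_a}x_i\ell_i=1$. Since $x(P_a)=1$, the $x$-weighted average of the leverages $\ell_i$ over $E_a=E\cap P_a$ equals $1$.

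\textbf{Step 2: a low-leverage partner.} Because $x_j<1$, part $a$ contains at least one other active coordinate. Since the weighted average of $\ell_i$ over $E_a$ is $1$ and $\ell_j>3$, some $i\in E_a$ has $\ell_i<1$. More precisely, the weight $x_j\ell_j$ is at most $1$, so the remaining mass is compensated by coordinates with $\ell_i\le 1$. In any case there exists $i\in E_a\setminus\{j\}$ with $\ell_i\le 1$.

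\textbf{Step 3: the two-coordinate direction.} Take $\delta=(e_j-e_i)/\sqrt2$. This is a unit vector, it preserves $x(P_a)=1$, and it is feasible in both signs since $i,j$ are active. Then
\[
DL(x)[\delta]=\tfrac{1}{\sqrt2}\bigl(g_j-g_i\bigr)=\tfrac1{\sqrt2}\bigl(\ell_j-\ell_i+p'(x_j)-p'(x_i)\bigr).
\]
Since $p'\in[-1,0]$, we have $p'(x_j)-p'(x_i)\ge -1$. Combining this with $\ell_j>3$ and $\ell_i\le1$ gives
\[
g_j-g_i>3-1-1=1,
\]
and hence $DL(x)[\delta]>1/\sqrt2$.

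\textbf{Where care is needed.} The only delicate point is Step 2: guaranteeing a partner $i$ in the same part with $\ell_i\le 1$. This follows from the averaging identity $\sum_{i\in P_a}x_i\ell_i=1=\sum_{i\in P_a}x_i$ together with the fact that the part has at least two active coordinates. The rest is the bound on $p'$. The constant $3$ is chosen precisely so that the margin of $1$ survives the $1/\sqrt2$ normalization.
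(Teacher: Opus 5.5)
Your proof is correct and is essentially the paper's argument run in contrapositive form. Both use the same partwise balance identity $\sum_{i\in P_a}x_i\ell_i=1$, the same two-coordinate direction $(e_j-e_i)/\sqrt2$, and the bound $p'\in[-1,0]$. The only difference is what gets averaged: the paper averages the gradients $g_i$ (mean at most $1$, so all $g_i\le 2$ once intra-part gaps are at most $1$), whereas you average the $\ell_i$ directly to find a partner with $\ell_i<1$.
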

\begin{proof}
Suppose that some part $P_a$ has active coordinates
$i,j\in P_a\cap E$ such that $g_i-g_j>1$.
Then take
   $\delta:=(e_i-e_j)/\sqrt{2}$.
This direction is tangent to $\Delta_r$, has $\|\delta\|_2=1$, and satisfies
\[
   DL(x)[\delta]=(g_i-g_j)/{\sqrt2}>1/{\sqrt2}.
\]
Otherwise, it must be that within every active part, the marginals differ by at most $1$.
Fix a part $P_a$.  The $x$-weighted average of
$g_i$ in this part is
\[
   \sum_{i\in P_a}x_ig_i
   =\sum_{i\in P_a}x_i\ell_i+
     \sum_{i\in P_a}x_ip'(x_i)
   =1+\sum_{i\in P_a}x_ip'(x_i)
   \le 1,
\]
because $p'\le0$.  As all active $g_i$'s in $P_a$ differ by at most $1$, every active
$i\in P_a$ satisfies
   $g_i\le 2$.
Using $p'\ge -1$, we obtain
   $\ell_i=g_i-p'(x_i)\le 2+1=3$.
\end{proof}

Call $x$ {\em approximately stationary} if there is no unit tangent direction $\delta$ with $DL(x)[\delta]>1/\sqrt2$. We will show that in this case, if the support is large, one can make large second-order progress on tiny coordinates. 

Recall the definitions, \[
   C(\delta):=\sum_i\delta_iQ_i, \,\,C_a(\delta):=\sum_{i\in P_a}\delta_iQ_i  
\text{ and } \eta_a(\delta):=\tr C_a(\delta)-\tr(B_aC(\delta)).\]
Call a direction $\delta$ {\em null-Schur} if
$\eta(\delta)=0$.
For such directions, the Hessian of $F$ becomes
   \[D^2F(x)[\delta,\delta]
   =-\tr(C(\delta)^2).\]

\begin{lemma}
\label{lem:low-rayleigh-null-schur-continuous} If $x$ is approximately stationary, and the number of tiny coordinates
   $|T(x)|>4r$,
then there is a unit tangent direction $\delta$, supported on $T(x)$, such that
\[\eta(\delta)=0  \text{ and }
   \tr(C(\delta)^2)\le 18.\]
In particular, there is a direction with $DL(X)[\delta]\geq 0$ and
\[
   D^2L(x)[\delta,\delta]
   =H-\tr(C(\delta)^2)
   \ge 2.
\]
\end{lemma}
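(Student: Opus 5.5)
The plan is a dimension-counting argument in the coordinate space $\R^{T}$, where $T=T(x)$. I would look for a large subspace on which the quadratic form $q(\delta):=\tr(C(\delta)^2)$ has small Rayleigh quotient. Then I would intersect it with the linear constraints "tangent to $\Delta_r$" and "null-Schur", and check that the intersection is nontrivial.

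\emph{Step 1 (the quadratic form).} Write $q(\delta)=\sum_{i,j}\delta_i\delta_j\tr(Q_iQ_j)=\delta^\top G\delta$, where $G_{ij}=\tr(Q_iQ_j)=(w_i^\top M^{-1}w_j)^2$. Since $G$ is a Gram matrix of the $Q_i$ under the trace inner product, $G\succeq 0$. Let $G_T$ be its restriction to $T\times T$. Its trace is
\[\textstyle\sum_{i\in T}\tr(Q_i^2)=\sum_{i\in T}\ell_i^2,\]
because $Q_i$ is rank one with trace $\ell_i$. Since $x$ is approximately stationary, Lemma~\ref{lem:first-order-gap-or-leverage-control} gives $\ell_i\le 3$ for every active $i$, so $\tr G_T\le 9|T|$. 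At most $9|T|/18=|T|/2$ eigenvalues of $G_T$ can exceed $18$. Hence the span $U$ of the eigenvectors with eigenvalue at most $18$ has $\dim U\ge |T|/2$, and every $\delta\in U$ satisfies $q(\delta)\le 18\|\delta\|_2^2$.

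\emph{Step 2 (linear constraints).} Tangency to $\Delta_r$ requires $\sum_{i\in P_a}\delta_i=0$ for each part, which is at most $r$ linear constraints. The null-Schur condition $\eta(\delta)=0$ is linear in $\delta$. By Lemma~\ref{lem:derivatives-psi} it takes values in $\one^\perp$, so it imposes at most $r-1$ independent constraints. Intersecting $U$ with these $2r-1$ constraints leaves dimension at least $|T|/2-(2r-1)$, which is positive since $|T|>4r$. So there is a unit $\delta$ supported on $T$ that is tangent, null-Schur, and has $\tr(C(\delta)^2)\le 18$.

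\emph{Step 3 (conclusion).} All of these conditions are invariant under $\delta\mapsto-\delta$, so I can choose the sign so that $DL(x)[\delta]\ge 0$. By Lemma~\ref{lem:F-derivatives} with $\eta(\delta)=0$,
\[D^2L(x)[\delta,\delta]=-\tr(C(\delta)^2)+\sum_{i\in T}p''(x_i)\delta_i^2.\]
Every $i\in T$ has $x_i\le 1/(2H)$, so $p''(x_i)=H$, and the potential term equals $H\|\delta\|^2=H=20$. This gives $D^2L\ge 20-18=2$.

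The main obstacle is Step 1: finding a bound on the spectrum of $G_T$ that does not depend on the tiny values $x_i$. Weighted identities such as $\sum_{i,j} x_ix_j G_{ij}=r$ degenerate when the $x_i$ are small. The unweighted trace bound through $\ell_i\le 3$ is exactly what makes the counting work. The constants $4r$, $18$ and $H=20$ are calibrated so that the count closes.
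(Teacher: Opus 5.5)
Your proof is correct and is essentially the paper's argument: both use the leverage bound $\ell_i\le 3$ to bound the trace of $\delta\mapsto\tr(C(\delta)^2)$ on $\R^{T}$ by $\sum_{i\in T}\ell_i^2\le 9|T|$, and both count the $r+(r-1)$ linear constraints coming from tangency and from $\eta(\delta)\in\one^\perp$. The only difference is the order of operations, and it is cosmetic. The paper first restricts to the constraint subspace $K$, which has dimension $>|T|/2$, and averages the Rayleigh quotient over $K$. You instead first take the spectral subspace where the eigenvalues are at most $18$, and then intersect it with the constraints.
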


\begin{proof}
Let $T:=T(x)$.  Consider the linear subspace
\[
   K:=
   \Big\{
   \delta\in\mathbb R^T:
   \sum_{i\in P_a\cap T}\delta_i=0\ \forall a,
   \quad
   \eta(\delta)=0
   \Big\}.
\]
The equations for parts impose at most $r$ linear constraints, and
$\eta(\delta)=0$ imposes at most $r-1$ linear constraints, as
$\eta(\delta)\in\mathbf 1^\perp$. As $|T|>4r$, this gives
\[ \dim(K)\ge |T|-2r+1 > |T|/2.\]
Now consider the positive semidefinite quadratic form
\[
   R(\delta):=\tr(C(\delta)^2) = \tr\Big((\sum_{i\in T} \delta_i Q_i)^2\Big)
   =\Big\|\sum_{i\in T}\delta_iQ_i\Big\|_F^2
\]
on $\mathbb R^T$.  Its trace in the standard coordinate basis is
\[
   \sum_{i\in T}R(e_i)
   =\sum_{i\in T}\tr(Q_i^2) = \sum_{i\in T} \ell_i^2 \leq 9|T|,
\]
where we use that $Q_i= q_iq_i^\top=M^{-1/2} w_iw_i^\top M^{-1/2}$ is rank one, and thus
   $\tr(Q_i^2)=(\tr Q_i)^2=\ell_i^2$, and that $\ell_i \leq 3$ for each active $i$ by Lemma~\ref{lem:first-order-gap-or-leverage-control}.

So the average eigenvalue of the restriction of $R$ to the subspace $K$ is at most
   $9|T|/\dim(K)<18$, and there exists some unit direction $\delta\in K$ with 
   $\tr(C(\delta)^2)\leq 18$, and hence
   \[D^2 F(x)[\delta,\delta] = -\tr(C(\delta)^2) \geq -18.\]
   As $\delta$ is supported on $T$, and $p''(x_i)=H$ for tiny $i$, we have  $D^2 \Phi(x)[\delta,\delta]=H\|\delta\|_2^2=H$ and thus
   \[D^2 L(x)[\delta,\delta] \geq H-18 = 2.\qedhere \]
\end{proof}
Thus, as long as a nondegenerate rank-$r$ component has more than $(2H+4)r$ active coordinates, either Lemma~\ref{lem:first-order-gap-or-leverage-control} gives a first-order ascent direction or Lemma~\ref{lem:low-rayleigh-null-schur-continuous} gives a second-order ascent direction. Indeed, outside $T(x)$ every active coordinate has mass larger than $1/(2H)$, so there are at most $2Hr$ such coordinates. Following these local ascent directions keeps $L$ nondecreasing. Since $-r\le \Phi\le0$, the resulting loss in the log-objective $F$ on a rank-$r$ component is at most $O(r)$; exact contractions and tight-set splits do not change the relaxation value, so the total loss over all residual components is $O(r)$.

\subsection{Terminal rounding}
\label{sec:terminal}
Once the support of $x$ has size $O(r)$, we can apply the $\exp(-O(n))$ approximation algorithm of \cite{EbrahimiStraszakVishnoi2017}, which results in only an $\exp(-O(r))$ factor loss because the residual support size is $O(r)$.
More formally, we use the following result.
\begin{theorem}[\cite{EbrahimiStraszakVishnoi2017}]
    Let \(A\in \mathbb R^{n\times n}\) be PSD, and let $\mathcal M$ be a rank-one partition matroid on \([n]\), with rank $r$ and parts $P_a$ where each part has rank $1$. Let
\[
        \mathrm{OPT}:=\max_{B\text{ base of }\mathcal M}\det(A_{B,B}).
\]
Then there is a randomized polynomial-time algorithm that outputs a base $S$ of $\mathcal M$ such that, with  probability at least $3/4$,
\[
        \det(A_{S,S})
        \ge
        \mathrm{OPT}\cdot (2e)^{-2r}
        \prod_{a=1}^r |P_a|^{-1}.
\]
\end{theorem}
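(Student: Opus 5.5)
We prove the statement by bounding the optimum against the Nikolov--Singh relaxation and analysing a simple randomized rounding of the uniform point. The value $(2e)^{-2r}\prod_a |P_a|^{-1}$ suggests this rounding: pick one element independently and uniformly from each part, i.e. sample from $\bar x$ with $\bar x_i = 1/|P_a|$ for $i\in P_a$, after the geometric scaling has been applied. Concretely, write $A=V^\top V$ and work in the rank-$r$ setting as in Section~\ref{sec:geom-prog}. Since every part has rank one and the scaling $\prod_a y_a=1$ is invisible to transversals, the plan has three steps. First, compute a near-optimal scaling $y=e^{z}$ with $z\in\one^\perp$ for the point $\bar x$. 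If $\bar x$ is degenerate, first apply the exact tight-set splits of Lemma~\ref{lem:tight-split-pedagogical}; these are lossless and multiplicative in both the objective and the bound. Second, show that a random transversal has large expected determinant. Third, convert this expectation into a high-probability statement.

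\emph{Expectation.} By Cauchy--Binet,
\[
\E_B\det\Big(\sum_{i\in B}w_iw_i^\top\Big)=\sum_{B}\prod_a\frac{1}{|P_a|}\det(W_B)^2=\det M(\bar x,z)\prod_a |P_a|\cdot\prod_a\frac1{|P_a|}\cdot\frac{1}{\prod_a|P_a|}.
\]
More directly, $\E=\prod_a|P_a|^{-1}\sum_B\det(W_B)^2=\prod_a|P_a|^{-1}\det(\sum_a\sum_{i\in P_a}w_iw_i^\top)$, up to the mixed (non-transversal) Cauchy--Binet terms. We handle this by noting that $\det(\sum_a M_a)\ge \sum_{B}\det(W_B)^2$ restricted to transversals, and that the relaxation value $F$ upper bounds OPT. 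The cleaner route is through $\det(\sum_i \bar x_i w_iw_i^\top)\ge e^{F(\bar x)}$ together with the stable-polynomial estimate of Nikolov--Singh: the mixed-discriminant coefficient $\sum_{B}\prod \bar x_i\det(W_B)^2$ is at least $e^{-r}\inf_{y}\det(\sum_a y_a A_a(\bar x))$. Then $\E\ge e^{-r}e^{F(\bar x)}$. Finally, $F(\bar x)\ge \log \mathrm{OPT}-\sum_a\log|P_a|$, since for the optimal transversal $B^*$ we have $A_a(\bar x)\succeq |P_a|^{-1}w_{b_a}w_{b_a}^\top$ for its element $b_a\in P_a$, and the scaling-invariance gives $\det\ge \prod_a|P_a|^{-1}\mathrm{OPT}$. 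Combining, $\E_B\det\ge e^{-r}\prod_a|P_a|^{-1}\mathrm{OPT}$.

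\emph{Concentration.} This is the main obstacle, since the determinant is heavy tailed and a large expectation does not imply that a typical sample is good. Our first attempt is an anti-concentration argument for the product structure: use conditional expectations part by part, i.e. the method of conditional expectations with exact computation of $\E[\det\mid b_1,\dots,b_j]$ by Cauchy--Binet, which is a mixed discriminant and hence computable when the remaining parts are averaged. This would make the procedure deterministic and yield $\det\ge \E$, avoiding probability entirely. If the conditional expectations are not efficiently computable, we fall back on a two-stage argument: bound $\E[\det^{1/2}]$ from below using log-concavity of the determinant polynomial in the sampling variables, losing $(2e)^{-r}$ in the square root, and then apply a Paley--Zygmund-type bound together with $O(1)$ repetitions to reach probability $3/4$. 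The factor $(2e)^{-2r}$ in the statement matches exactly this square-root loss.
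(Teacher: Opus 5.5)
\textbf{There is a genuine gap, in the concentration step.} Your expectation bound is true but roundabout. Product-one scalings leave every transversal determinant unchanged, so directly
$\E_B\det(V_B)^2=\prod_a|P_a|^{-1}\sum_B\det(V_B)^2\ge\prod_a|P_a|^{-1}\,\mathrm{OPT}$, with no need for $F(\bar x)$ or a capacity bound.

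The concentration step is not just unproven: it is false for your sampling scheme. Take $P_a=\{e_a,e_{a+1}\}$, indices mod $r$. Then $\mathrm{OPT}=1$ and the support has size $2r$. But a transversal is nonsingular only if it is one of the two cyclic shifts, so a uniformly random transversal has $\det(V_B)^2>0$ with probability only $2^{1-r}$. The scaling $z$ changes no transversal determinant, so it cannot help. No Paley--Zygmund or log-concavity argument with $O(1)$, or even polynomially many, repetitions reaches probability $3/4$.

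Your derandomized alternative fails for a different reason. $\E[\det(V_B)^2\mid b_1,\dots,b_j]$ is, up to normalization, a mixed discriminant of $v_{b_1}v_{b_1}^\top,\dots,v_{b_j}v_{b_j}^\top,A_{j+1},\dots,A_r$ with $A_a=\sum_{i\in P_a}v_iv_i^\top$. Mixed discriminants are \#P-hard to compute exactly: with $P_a=\{\sqrt{m_{a1}}e_1,\dots,\sqrt{m_{ar}}e_r\}$, the total $\sum_B\det(V_B)^2$ is the permanent of the nonnegative matrix $(m_{ai})$. If you substitute $e^{O(r)}$-approximate estimates, the per-round errors compound and the natural bound degrades to $e^{O(r^2)}$. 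This is exactly the estimation-versus-algorithm gap described in the introduction.

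\textbf{The missing idea} (the paper's route) is to randomize \emph{continuously} inside each part and then round \emph{deterministically}.
\begin{itemize}
\item Draw $U_{a,i}\sim\operatorname{Unif}[-1,1]$ independently and form the columns $C_a(U_a)=\sum_{i\in P_a}U_{a,i}v_{a,i}$. Then $D(U)=\det[C_1(U_1)|\cdots|C_r(U_r)]$ is linear in each block.
\item Expose the blocks one at a time, and let $G_t$ be the best value obtainable by completing the remaining blocks with vertices. Step $t$ replaces a vertex block by $U_t$ inside a linear form $b\cdot U_t$ with $\|b\|_\infty=G_{t-1}$.
\item The one-line anti-concentration bound $\Prb[|b\cdot U_t|\le\eta\|b\|_\infty]\le\eta$ gives $\E[(\log(G_{t-1}/G_t))_+\mid U_1,\dots,U_{t-1}]\le1$.
\item Summing over $t$ and applying Markov gives $|D(U)|\ge e^{-4r}D_{\max}$ with probability $3/4$. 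Unlike a random transversal, this random point loses only $O(1)$ per block on the log scale.
\item Fix the blocks in order. At block $t$ one has $D=\sum_iU_{t,i}c_i$, where each $c_i$ is an ordinary determinant of a fixed matrix. Choosing the $i$ maximizing $|c_i|$ loses at most a factor $|P_t|$, since $|U_{t,i}|\le1$.
\end{itemize}
This argument controls the unsquared volume $|D|$. Squaring gives $\det(A_{S,S})\ge e^{-8r}\prod_a|P_a|^{-2}\,\mathrm{OPT}$, not literally the constant in the statement. Either form suffices downstream, where $\prod_a|P_a|=e^{O(r)}$.
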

In particular, if the compressed residual instance has total support $m=O(r)$, then
$\prod_{a=1}^r |P_a|\le (m/r)^r=O(1)^r$ by AM--GM, and the terminal rounding loss is only $\exp(O(r))$.

For completeness, we give a self-contained proof of this result in Appendix \ref{sec:terminal-esv}.

%% file: finite-steps-no-highlights.tex
\section{Polynomial-time Discrete Dynamics}
\label{sec:finite-dynamics}
We now turn the continuous compression argument into a polynomial-time finite-step procedure, proving Theorem \ref{thm:main}.
We use the same potential
$L(x)=F(x)+\Phi(x)$
and the same dimension-counting argument as previously. However, we need several technical ideas to ensure that each step makes sufficient progress.

A key problem is that when some $x_i$ approaches $0$, the z-Hessian $\Gamma^{-1}$ may become more unstable, and the Schur term $\eta(\delta)^\top \Gamma^{-1} \eta(\delta)$ can change substantially when we move from $x$ to $x+t\delta$, forcing the step size $t$ to be small. To get around this, we will work with the single floor $\Delta_r(\tau):=\{x \in \Delta_r:x_i \geq \tau\text{ for every active } i\}$
and delete a coordinate $i$ whenever  $x_i=\tau$. Thus the same parameter $\tau$ serves both as the computational floor and as the deletion threshold. 
To handle this, we modify the arguments in Section \ref{sec:continuous-compression}, to work with $x \in \Delta_r(\tau)$, and to make them robust to step sizes of $1/\poly(n,d)$. Crucially, we will not analyze the stability of $\Gamma^{-1}$ directly, but instead we use a probability interpretation of the derivatives of $F$, based on the Cauchy-Binet terms, which leads to substantially cleaner computations.

We now give the details.

Let $\tau:=1/(10^4dn)$ be a small inverse-polynomial deletion threshold. We use the same constant $H$ as in Section~\ref{sec:continuous-compression}.

\subsection{Initialization}
We begin by solving the relaxation \eqref{eq:prog-b}, to compute   $F^\star:=\max_{x\in\Delta_d}F(x)$ and an approximately optimal solution $x^*$ using the algorithm in \cite{MadanNikolovSinghTantipongpipat2020}.

We contract singleton parts and split all proper exact tight subsets of parts, and also delete the zero-coordinates. After this, every residual instance is nondegenerate.

Fix some residual instance of rank $r$, and let us (re)use $x^*$ to denote the optimizer for it. We first transform $x^*$ to lie in
        \[\Delta_r(\tau)
        :=
        \{x\in\Delta_r: x_i\ge\tau\text{ for every active }i\}.\]
without significantly affecting $F(x^*)$, as follows.
\begin{lemma}[Interior restriction]
\label{lem:interior-restriction-balanced}
Any $x \in \Delta_r$ can be transformed into $y \in \Delta_r(\tau)$ such that $F(y) \geq F(x) + r \log (1-n\tau)$.
\end{lemma}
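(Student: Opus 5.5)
Fix the support $E=\supp(x)$ and set $y:=(1-n\tau)\,x+n\tau\,u$, where $u$ is a feasible point of $\Delta_r$ with the same support: for each part $P_a$, take $u_i:=1/|E_a|$ for $i\in E_a$ and $u_i:=0$ otherwise. Then $y\in\Delta_r$, because each part sums to $(1-n\tau)+n\tau=1$, and $\supp(y)=E$. Every active coordinate satisfies $y_i\ge n\tau/|E_a|\ge n\tau/n=\tau$, so $y\in\Delta_r(\tau)$. Since $y$ lies on the segment between two points of $\Delta_r$ with the same support, no degeneracy is created: the support is unchanged, so admissibility and nondegeneracy are inherited from $x$ through the definitions in Section~\ref{sec:nondegenerate-supports}. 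We may assume $n\tau<1$; this holds for $\tau=1/(10^4dn)$.

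\textbf{Main bound.} We compare $F(y)$ with $F(x)$ using a pointwise matrix inequality that holds uniformly in $z$. Since $y\ge(1-n\tau)x$ coordinatewise and every $A_a(\cdot)$ is a nonnegative combination of PSD rank-one terms, for every $z\in\one^\perp$ we get
\[
M(y,z)=\sum_a e^{z_a}A_a(y)\succeq(1-n\tau)\sum_a e^{z_a}A_a(x)=(1-n\tau)M(x,z).
\]
The function $\log\det$ is monotone on the PSD cone, and $\log\det(cM)=r\log c+\log\det M$ in dimension $r$. Hence $\psi(y,z)\ge r\log(1-n\tau)+\psi(x,z)$ for every $z$. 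Taking the infimum over $z\in\one^\perp$ on both sides gives $F(y)\ge F(x)+r\log(1-n\tau)$. This step does not need the infimum to be attained, so it also covers degenerate $x$. If $F(x)=-\infty$ the claim is trivial.

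\textbf{Technical point.} The only point needing care is that the uniform-over-$z$ inequality survives taking infima; it does, since the inequality holds for each $z$ separately. A secondary point is making the floor come out exactly $\tau$ when parts have different support sizes. Choosing the convex-combination weight $n\tau$, rather than something like $\tau\max_a|E_a|$, handles this uniformly and matches the stated loss. If we want the later dynamics to start from a nondegenerate point, we reapply the tight-set split and singleton contraction to $y$. This costs nothing, because these steps depend only on the support, and the support is unchanged.
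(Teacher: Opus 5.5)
Your proof is correct and is essentially the paper's argument: mix $x$ with a uniform point using weight $\varepsilon=n\tau$, observe $A_a(y)\succeq(1-\varepsilon)A_a(x)$ so that every product-one scaling loses at most a factor $(1-\varepsilon)^r$, and take the infimum over scalings. The only difference is that the paper uses $u_i=1/|P_{a(i)}|$ over whole parts, so $y$ gets full support, whereas your $u$ is uniform on $\supp(x)\cap P_a$ and keeps the support fixed; this makes no difference in the algorithm, since zero coordinates are deleted before the restriction is applied.
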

\begin{proof}
Let $u$ be the point with
$u_i:=1/|P_{a(i)}|$ for each $i$. Then $u(P_a)=1$ for each part, and hence $u\in \Delta_r$. Let
$\varepsilon:=n\tau$, and consider 
        $y=(1-\varepsilon)x+\varepsilon u$.
Then $y\in \Delta_r(\tau)$, as each $y_i\geq \tau$ (as each $|P_a|\le n$).
Moreover, as
        $A_a(y)\succeq(1-\varepsilon)A_a(x)
$ for each $a\in [r]$, every product-one scaling loses at most the factor $(1-\varepsilon)^r$ in determinant. 
Taking the infimum over scalings gives the claim.
\end{proof}
As the loss $O(rn\tau)$ is negligible, we can assume without loss of generality that $x^* \in \Delta_r(\tau)$.

\subsection{A Concave Surrogate for $L$ and Progress}
The continuous proof used the stationarity of $L(x)=F(x)+\Phi(x)$. Fix a point $x$. As $F$ is concave while $\Phi$ is convex, it will be convenient to replace $L(y)$ in the neighborhood of $x$, by a concave surrogate
\begin{equation}
\label{eq:proximal-surrogate-balanced}
        \Psi_x(u)
        :=
        F(u)+\ip{\nabla\Phi(x)}{u-x}
        -\frac H2\norm{u-x}_2^2.
\end{equation}
Note that $\Psi_x(u)$ is strictly concave in $u$. It also satisfies that 
$\Psi_x(u) +\Phi(x) \leq L(u)$ for all $u$.

Fix the current active support $E$. To solve the next step, we will solve an optimization problem over the region.
\begin{equation}
\label{eq:truncated-face-balanced}
        \mathcal K_E
        :=
        \left\{u:
        u(P_a)=1\ \forall a,
        \ u_i\ge \tau\ \forall i\in E
        \right\}.
\end{equation}
For $x\in\mathcal K_E$, the next step is given by the unique optimizer 
\begin{equation}
    y = \argmax_{u\in \mathcal K_E} \Psi_x(u).
\end{equation}
We have the following analogue of Lemma \ref{lem:first-order-gap-or-leverage-control}.

\begin{lemma}[Progress or approximate stationarity]
\label{lem:proximal-progress-balanced}
At the next point $y$, either $\|y-x\|_2 \geq 1/(5H)$, in which case $L(y)-L(x)
        \ge 1/(50H)$, or otherwise, the leverage $\ell_i(y) \leq 3$ for
        every $i \in E$.
\end{lemma}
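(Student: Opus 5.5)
Throughout, $y$ is the maximizer of the strictly concave $\Psi_x$ over the convex set $\mathcal K_E$. The plan is to use two facts about $y$: its optimality conditions, and the strong concavity of $\Psi_x$ with modulus $H$.

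\textbf{Large step gives progress.} Since $x\in\mathcal K_E$ and $\Psi_x$ is $H$-strongly concave, optimality of $y$ gives
\[
\Psi_x(y)-\Psi_x(x)\ \ge\ \frac H2\norm{y-x}_2^2 .
\]
We then use $\Psi_x(x)=F(x)$ together with the stated inequality $\Psi_x(u)+\Phi(x)\le L(u)$. The latter holds because $p$ is convex with $p''\le H$, so $\Phi(u)\ge \Phi(x)+\ip{\nabla\Phi(x)}{u-x}-\frac H2\norm{u-x}^2$, a bound that is far from tight but suffices. This yields
\[
L(y)-L(x)\ \ge\ \frac H2\norm{y-x}_2^2 .
\]
If $\norm{y-x}_2\ge 1/(5H)$, the gain is at least $1/(50H)$, as claimed.

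\textbf{Small step gives leverage control.} Suppose $\norm{y-x}_2<1/(5H)$. We write the KKT conditions at $y$ over $\mathcal K_E$. The gradient of $\Psi_x$ at $y$ has coordinates
\[
\ell_i(y)+p'(x_i)-H(y_i-x_i),
\]
using $DF(y)[\delta]=\tr C(\delta)$ from Lemma~\ref{lem:F-derivatives}, which gives $\partial F/\partial u_i=\ell_i(y)$. This requires $y$ to be nondegenerate; we get it by applying the tight-set split of Section~\ref{sec:nondegenerate-supports} first if needed, or by noting that $F(y)$ is finite and $\Psi_x$ attains its maximum.

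Set $h_i:=\ell_i(y)+p'(x_i)-H(y_i-x_i)$. KKT gives, within each part $a$, a multiplier $\lambda_a$ with $h_i\le\lambda_a$ for all $i\in P_a\cap E$, and equality whenever $y_i>\tau$.

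Next we bound $\lambda_a$. Multiplying by $y_i$ and summing over the part, and using $\sum_{i\in P_a}y_i\ell_i(y)=\tr B_a=1$ (Lemma~\ref{lem:partwise-balance}), we get
\[
\sum_{i\in P_a} y_i h_i \;=\; 1+\sum_{i\in P_a} y_ip'(x_i)-H\sum_{i\in P_a} y_i(y_i-x_i)\;\le\; 1+H\norm{y-x}_2\;\le\; 1.2 .
\]
The first inequality uses $p'\le 0$ and $|y_i-x_i|\le\norm{y-x}$.

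The coordinates at the floor satisfy only $h_i\le\lambda_a$. Hence the weighted average $\sum_{i\in P_a} y_ih_i$ is at least $\lambda_a\bigl(1-\sum_{i:\,y_i=\tau}y_i\bigr)$ plus the floor terms. The floor terms are $\tau h_i\ge \tau(-1-0.2)$, since $\ell_i\ge0$, $p'\ge-1$ and $H|y_i-x_i|\le 0.2$. Their total weight is at most $n\tau\ll 1$. From this we get
\[
\lambda_a\;\le\;\frac{1.2+1.2\,n\tau}{1-n\tau}\;\le\;1.3 .
\]
(If $\lambda_a$ were negative, the bound holds trivially.)

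Finally, for every $i\in E$,
\[
\ell_i(y)\;=\;h_i-p'(x_i)+H(y_i-x_i)\;\le\;\lambda_a+1+0.2\;\le\;2.5\;\le\;3 .
\]

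\textbf{Main obstacle.} The delicate points are (i) making sure $F$ is differentiable at $y$, which needs nondegeneracy or a split, and (ii) handling the floor coordinates, where the KKT condition is only an inequality. Both are addressed by the weighted-average argument above, using $n\tau\le 10^{-4}$.
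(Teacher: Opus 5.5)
Your proof is correct and follows essentially the paper's route: a gain $L(y)-L(x)\ge\frac H2\|y-x\|_2^2$ from optimality of $y$, then KKT at $y$ combined with the balance identity $\sum_{i\in P_a}y_i\ell_i(y)=1$ and the negligible floor mass. The differences are minor: you get the gain from strong concavity plus $\Psi_x+\Phi(x)\le L$ where the paper uses $\Psi_x(y)\ge\Psi_x(x)$ plus convexity of $\Phi$, and bounding $\lambda_a$ by a $y$-weighted average is a cleaner variant of the paper's comparison with a low-leverage non-floor coordinate $j$. One correction: finiteness of $F(y)$ only gives admissibility, not nondegeneracy, so it does not show that $z(y)$ exists; the right reason is that $\supp(y)=E=\supp(x)$ and nondegeneracy depends only on the support (Proposition~\ref{prop:finiteness-balanced-scaling}), which is what makes $\ell_i(y)$ and the balance identity at $y$ available.
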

\begin{proof}
Since $x\in\mathcal K_E$, by optimality of $y$ we have $\Psi_x(y)\geq \Psi_x(x)$, which gives that
\[  F(y)-F(x)+\ip{\nabla\Phi(x)}{y-x}
        \ge
        \frac H2\norm{y-x}_2^2.\]
As $\Phi$ is convex, $\Phi(y)-\Phi(x)\ge\ip{\nabla\Phi(x)}{y-x}$,
which gives that \[L(y)-L(x) \geq \frac{H}{2} \|y-x\|_2^2.\]
If $\|y-x\|_2 \geq 1/(5H)$, this already gives $L(y)-L(x)\geq 1/(50H)$ and we are done. So, let us assume that $\|y-x\|_2 < 1/(5H)$. We use this to bound the $\ell_i$.

Consider the surrogate gradient coordinate at $y$,
\begin{equation}    
\label{eq:deriative-hi}
h_i:=\partial_i\Psi_x(y)
        =\ell_i(y)+p'(x_i)-H(y_i-x_i)
        \end{equation}
A key observation is the following. Fix some part $a$. Then, there is a $\beta_a$ such that for all $j\in P_a$ with $y_j > \tau$, we have $h_j=\beta_a$. For all variables $i\in P_a$ such that $y_i=\tau$, we have  that $h_i \leq \beta_a$. To see this, consider the Lagrangian of the optimization problem for $y$:
\[ \Psi_x(y) + \sum_a \beta_a (1-\sum_{i\in P_a} y_i) + \sum_i \lambda_i (y_i - \tau),\]
where $\beta_a \in \R$ and $\lambda_i \geq 0$. Taking the partial derivative with respect to $y_i$ gives $h_i  = \beta_{a(i)} - \lambda_i \leq \beta_a$.

Next, observe that by Lemma \ref{lem:partwise-balance} about partwise-balance, for each part $a$ we have
\[ \sum_{i\in P_a} y_i \ell_i(y) =1.\]
But as $y(P_a)=1$, the coordinates with $y_i = \tau$ can have a total mass of at most $n\tau < 10^{-4}$. Thus, by averaging, there is some coordinate $j\in P_a$ with $y_j> \tau$ and $\ell_j(y) \leq (1-10^{-4})^{-1}$. Fix this $j$. By the observation above, for each $i \in P_a$, we have $h_i \leq h_j$. Hence, plugging the expression for $h_i, h_j$ using \eqref{eq:deriative-hi} gives
\[ \ell_i(y) +p'(x_i) - H(y_i-x_i) \leq \ell_j(y) +p'(x_j) - H(y_j-x_j).\]
Noting that
the gradient of the true potential $L(y)$ is $\partial_iL(y)=\ell_i(y)+p'(y_i)$, plugging this and rearranging, the above gives
\begin{align*}
\partial_iL(y) - \partial_jL(y) &\leq 
p'(y_i)- p'(x_i) - (p'(y_j) - p'(x_j)) +H(y_i-x_i - (y_j-x_j))  \\
& = \langle \nabla \Phi(y)-\nabla \Phi(x) +H(y-x), e_i-e_j \rangle \\
& \leq 2\sqrt{2} H \|y-x\|_2 \leq 2\sqrt{2}/5.
\end{align*}
Here the second step uses that $p'(y_i) = (\nabla \Phi(y))_i$ and the third step uses that $p'$ is $H$-Lipschitz and that $|e_i-e_j|\leq \sqrt{2}$ and our assumption $\|y-x\|_2 < 1/(5H)$.

As $p'(\cdot)\in [-1,0]$, this gives that $\ell_i(y)-\ell_j(y) \leq 2\sqrt{2}/5+1$. Together with the bound on $\ell_j(y)$, this gives $\ell_i(y) \leq (1-10^{-4})^{-1} + 1+ 2\sqrt{2}/5 \leq 3$, as desired.
\end{proof}

The Lemma gives a win-win argument: a large move gives the fixed gain $L(y)-L(x)\ge 1/(50H)= \Omega(1)$, which can only happen $O(r)$ times during the course of the algorithm, since $\Phi$ never exceeds $r$. On the other hand, a small move gives a leverage bound, which will be exploited crucially.

\subsection{Deleting small coordinates}
Once a coordinate reaches $\tau$, we can use this leverage bound to safely delete it (if a coordinate reaches $\tau$ and a move is large, we wait until the next small move). Without loss of generality, we can assume that we delete one coordinate at a time. Suppose $i\in P_a$ and $x_i=\tau$.

By Lemma \ref{lem:proximal-progress-balanced}, we know that the leverage $\ell_i \le 3$.  Let $\overline x$ be obtained by deleting $i$ and renormalizing its part:
\begin{equation}
\label{eq:one-delete-renormalize-balanced}
        \overline x_i=0,
        \qquad
        \overline x_k=x_k/(1-\tau)\quad(k\in P_a\setminus\{i\}),
        \qquad
        \overline x_k=x_k\quad(k\notin P_a).
\end{equation}

The following lemma shows that deleting such a coordinate $i$ affects the objective $F(x)$ and the potential $L(x)$ negligibly.
\begin{lemma}[One-coordinate deletion]
\label{lem:finite-small-deletion-balanced}
Suppose $x$ is nondegenerate and $\ell_i\le 3$ for the coordinate $i$ being deleted. Then \[F(\overline x)\ge F(x)+r\log(1-2\tau\ell_i) \text{ and } 
        L(\overline x)
        \ge
        L(x)-(12r+1)\tau.\]
        In particular, as $F(\overline{x}) > -\infty$, the new support is still admissible.
\end{lemma}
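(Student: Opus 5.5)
Fix the inner minimizer $z=z(x)$ and use it as a (suboptimal) scaling for $\overline x$. Since $F(\overline x)=\inf_{z'}\psi(\overline x,z')$ and $\psi(\overline x,z)$ is invariant under adding multiples of $\one$ to $z$, it suffices to compare $\psi(\overline x,z)$ with $\psi(x,z)=F(x)$. Write $M=M(x,z)=\sum_k x_k w_kw_k^\top$. The deletion gives
\[
M(\overline x,z)=M-\tau w_iw_i^\top+\frac{\tau}{1-\tau}\sum_{k\in P_a\setminus\{i\}}x_kw_kw_k^\top \succeq M-\tau w_iw_i^\top .
\]
By the matrix determinant lemma,
\[
\psi(\overline x,z)\ge \log\det M+\log(1-\tau\ell_i).
\]
This already yields
\[
F(\overline x)\ge F(x)+\log(1-\tau\ell_i)\ge F(x)+r\log(1-2\tau\ell_i),
\]
which is at least as strong as the claimed bound, since $r\ge1$ and $\log(1-2s)\le\log(1-s)$. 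If one prefers to avoid the rank-one lemma, a cruder route gives exactly the stated form. From $\tau w_iw_i^\top\preceq \tau\ell_i M$ (as $Q_i\preceq \ell_i I$) we get $M(\overline x,z)\succeq(1-\tau\ell_i)M$, hence a loss of at most $r\log(1-\tau\ell_i)$. Either way the first claim follows, and since $\tau\ell_i\le 3\tau<1$, $F(\overline x)$ is finite, so the new support is admissible.

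For the potential, we write
\[
L(\overline x)-L(x)=\bigl(F(\overline x)-F(x)\bigr)+\bigl(\Phi(\overline x)-\Phi(x)\bigr)
\]
and bound each piece. With $\ell_i\le3$ and $\tau$ tiny, $r\log(1-6\tau)\ge -12r\tau$ (using $\log(1-s)\ge -2s$ for $s\le1/2$). For $\Phi$, the deleted coordinate changes from $p(\tau)$ to $p(0)=0$, which is a gain, since $p(\tau)\le0$. The other coordinates in $P_a$ increase by $x_k\tau/(1-\tau)$. Because $p'\in[-1,0]$, each $p(x_k)$ decreases by at most that increment, and the total decrease is at most
\[
\frac{\tau}{1-\tau}\sum_{k\ne i}x_k=\tau .
\]
Summing the two pieces gives $L(\overline x)\ge L(x)-(12r+1)\tau$.

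The only delicate points are twofold. First, the renormalized coordinates stay feasible: they only grow, so they remain at least $\tau$ and $\overline x\in\Delta_r$. Second, we need the scaling-comparison argument (evaluating at the old $z$) rather than differentiating $F$, because $\overline x$ may well be degenerate, in which case $z(\overline x)$ need not exist. The infimum formulation handles this automatically, and finiteness certifies admissibility. The main step, then, is the matrix inequality $M(\overline x,z)\succeq(1-c\,\tau\ell_i)M$, which rests entirely on the leverage bound $\ell_i\le3$ supplied by Lemma~\ref{lem:proximal-progress-balanced}.
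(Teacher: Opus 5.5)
Your lower bound on $F(\overline x)$ runs in the wrong direction. Since $F(\overline x)=\inf_{z'\in\one^\perp}\psi(\overline x,z')$, plugging in the old minimizer $z=z(x)$ only gives $F(\overline x)\le\psi(\overline x,z)$. So a lower bound on $\psi(\overline x,z)$, which is all the matrix determinant lemma provides, says nothing about $F(\overline x)$ from below.

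Your ``cruder route'' $M(\overline x,z)\succeq(1-\tau\ell_i)M$ has the same defect. At another scaling $z+s$ the leverage of $i$ changes, and it can approach $1/x_i=1/\tau$: push $s_a\to\infty$ when $v_i$ is not spanned by the rest of its part. So no uniform factor $1-O(\tau)$ comes out of that comparison.

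Your stronger intermediate claim $F(\overline x)\ge F(x)+\log(1-\tau\ell_i)$ is in fact false. Take $r=2$, $P_1=\{i,j\}$, $P_2=\{k,m\}$, $v_i=v_k=e_1$, $v_j=v_m=e_2$, $x_i=\tau$, $x_m=4\tau$.
\begin{itemize}
\item This $x$ is nondegenerate, and $\ell_i=2+O(\tau)$.
\item A direct computation gives $F(x)=-\tau+O(\tau^2)$.
\item After deletion, part $P_1$ is the singleton $\{j\}$, and $F(\overline x)=\log(1-4\tau)$, with the infimum approached as $y_1\to\infty$.
\end{itemize}
The loss is therefore $3\tau+O(\tau^2)$, while $-\log(1-\tau\ell_i)=2\tau+O(\tau^2)$. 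This also shows that your remark that the infimum formulation ``handles this automatically'' is backwards: re-optimizing the scaling after the deletion is exactly what pushes the value down. Your finiteness/admissibility conclusion inherits the same gap.

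The missing ingredient is a lower bound on $\det M(\overline x,z+s)$ that holds uniformly over all $s\in\one^\perp$. The paper obtains it in the normalized frame at $z(x)$:
\begin{itemize}
\item Deleting $i$ replaces $B_a$ by $B_a-R$, where $R=\tau Q_i$ has $\tr R=\rho=\tau\ell_i$.
\item Conjugating by $(I-R)^{-1/2}$ gives matrices $\widetilde D_b\succeq0$ summing to $I$, whose traces $\alpha_b$ satisfy $\|\alpha-\one\|_1\le 2\rho/(1-\rho)$.
\item The near-balanced inequality $\det(\sum_b q_bE_b)\ge(1-\epsilon/2)^r\prod_b q_b$, with $\epsilon=\|\alpha-\one\|_1$, is applied for \emph{every} product-one $q$.
\item This yields $\det(\sum_b q_bD_b)\ge(1-\rho)(1-\rho/(1-\rho))^r\ge(1-2\rho)^r$.
\end{itemize}
This uniform bound is where the factor $r$ in the statement comes from. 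An alternative would be a Gibbs certificate as in Lemma~\ref{lem:balanced-certificate}: exhibit a distribution $\nu$ supported on sets avoiding $i$, with $\Ex_\nu R=\one$ and small $D(\nu\|\mu_x)$. Constructing such a $\nu$ is the nontrivial part.

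Your treatment of $\Phi$ and the estimate $r\log(1-6\tau)\ge-12r\tau$ are correct and match the paper.
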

\begin{proof}
Using notation from Section \ref{sec:derivatives}, let us denote $Q_i=M^{-1/2}w_iw_i^\top M^{-1/2}$, where $w_i = \exp(z(x)_{a(i)}/2) v_i$ and $B_a=\sum_{i\in P_a}x_iQ_i$ at the point $x$. Recall that $\sum_a B_a=I$ and by Lemma \ref{lem:partwise-balance}, we have $\tr(B_a)=1$. 

At the current scaling (depending on $z(x)$), the deletion removes
        $R:=\tau Q_i$ from $B_a$. 
        Let us denote
        \[\rho:=\tr R=\tau \tr Q_i=\tau \ell_i.\]
Since $Q_i=q_iq_i^\top$ is rank one, $R$ has exactly one nonzero eigenvalue $\rho$.  Thus, if we kept the old scaling the determinant $\det(M)$ would lose the factor $
        \det(I-R)=1-\rho$.
However, we need to show that the inner minimization over product-one scalings at $\overline x$ cannot make the determinant much smaller.

\medskip
Let us define $D_a:=B_a-R$ and $D_b:=B_b$ for $b\ne a$.
Then $\sum_b D_b=I-R$. Define the normalized version, $
        \widetilde D_b:=(I-R)^{-1/2}D_b (I-R)^{-1/2}$ so that 
        $\sum_b\widetilde D_b=I$, and 
let $\alpha_b:=\tr\widetilde D_b$. 
Then,
\[
        \alpha_b-1
        =\tr(((I-R)^{-1}-I)B_b) \text{ for $b\ne a$,  and } 
  \alpha_a-1
        =\tr(((I-R)^{-1}-I)B_a)-\tr((I-R)^{-1}R).\]
As $\rho<1/4$, a simple computation gives that $\|\alpha-\one\|_1
        \le 2\rho/(1-\rho)$.

We use the following matrix bound:
Let $E_1,\ldots,E_r\succeq0$ satisfy
$\sum_{b=1}^r E_b=I$. Let $\alpha \in \R^r$ be the vector with coordinates
        $\alpha_b:=\tr E_b$, and let  $\epsilon:=\|\alpha-\one\|_1$.
Then, for every $q_1,\ldots,q_r>0$,
\begin{equation}
\label{eq:elementary-matrix-product-bound-expanded}
        \det\Big(\sum_b q_bE_b\Big)
        \ge
        \left(1-\epsilon/2\right)^r\prod_b q_b .
\end{equation}
Now fix any positive product-one scaling $q$, so $\prod_b q_b=1$.  As $D_b=
        (I-R)^{1/2}\widetilde D_b (I-R)^{1/2}$ for each $b$, and
applying \eqref{eq:elementary-matrix-product-bound-expanded} to $E_b=\widetilde D_b$ gives,
\[
        \det\Big(\sum_bq_bD_b\Big)
        =
        \det(I-R)\cdot
        \det\Big(\sum_bq_b\widetilde D_b\Big) \geq 
        (1-\rho)
        \Big(1-\frac{\rho}{1-\rho}\Big)^r \geq 
        (1-2\rho)^r.\]
As this holds for every product-one $q$, the unnormalized deletion decreases $F$ by at most $-r\log(1-2\rho) = -r \log (1-2\tau \ell_i)$.
Finally, the renormalization \eqref{eq:one-delete-renormalize-balanced} rescales part $a$ by $1/(1-\tau)>1$, which only increases $F$.  This proves the lower bound bound on $F(\overline{x})$.

We now lower bound $L(\overline{x}) = F(\overline{x})+\Phi(\overline{x})$.
 Using that $\log(1-2 \tau \ell_i)\ge-4\tau \ell_i$ as $0\le \tau \ell_i\le1/4$, and that $\ell_i\le 3$, the decrease in $F(\overline{x})$ is at most $12 r\tau$.
For $\Phi$, setting $x_i$ to zero removes $p(s)\le0$ and therefore does not decrease $\Phi$.  Renormalizing the surviving coordinates in part $a$ adds total mass exactly $\tau$, and as $|p'|\le1$ the $\Phi$ can decrease by at most $\tau$.  
\end{proof}

As there are at most $n$ deletions, the total loss over the course of the algorithm is $O(nr\tau) \ll 1$ which is negligible.

\subsection{Probabilistic interpretation of derivatives}
\label{subsec:probabilistic-derivative-dictionary}
 We now describe a probabilistic interpretation of the derivatives in Section \ref{sec:derivatives}, based on the Cauchy-Binet formula.
This dictionary will be crucial for proving the finite-step version of the null-Schur argument.

Recall that
 $M(x,z) = \sum_i x_i e^{z_{a(i)}} v_iv_i^\top$ and $\psi(x,z) = \log \det M(x,z)$. 
Fix a non-degenerate $x$ and let $z=z(x)$. We write $M$ for $M(x,z(x))$ henceforth.

 By the Cauchy-Binet formula, we have
 \begin{equation}
 \label{eq:prob-detm}
 \det M  = \sum_{T: |T|=r} \det(V_T)^2 \prod_{i\in T}x_i e^{z_{a(i)}} = \sum_{T: |T|=r} \det(V_T)^2 e^{\langle R(T),z \rangle} \prod_{i\in T}x_i 
 \end{equation}
        where $V_T$ is the $r \times r $ matrix with columns $v_i$ for $i\in T$, and for a set $T$ we define its part-count vector
\[R(T) \in \R^r, \text{ 
    with }R_a(T):=|T\cap P_a|.\]
Note that a term is positive only if the vectors in $T$ are independent. 
This defines the Cauchy--Binet distribution on such sets,
\begin{equation}
\label{eq:mux}
           \mu_x(T)
        :=
        \frac{\det(V_T)^2 e^{\langle R(T),z \rangle}\prod_{i\in T}x_i}{\det M}.
\end{equation}
Next, for a tangent $x$-direction
$\delta$, define
      \begin{equation}
      \gamma_i:=\frac{\delta_i}{x_i}, \qquad \sigma_\delta(T):=\sum_{i\in T}\gamma_i,
        \qquad
        a_\delta(T):=\sum_{i\in T}\gamma_i^2.
        \label{eq:gst}
        \end{equation}
Recall the terms $C(\delta),C_a(\delta),B_a, \ell_i$ and $\eta_a(\delta)$ from Section \ref{sec:derivatives}.
\begin{lemma}
\label{lem:prob-derivatives}
The derivatives of $\psi(x,z)$ in Lemma \ref{lem:derivatives-psi} can be expressed in a probabilistic way based on the Cauchy--Binet distribution $\mu_x$ as follows:
\begin{enumerate}
    \item $\frac{\partial \psi(x,z)}{\partial z_a} =\Ex R_a $. 
     \item    $D_x \psi(x,z)[\delta] = \Ex \sigma_\delta$.
    \item $D_x \left(\frac{\partial \psi}{\partial z_a}\right)[\delta] = \operatorname{Cov}(R_a,\sigma_\delta)$.
    \item $D^2_{xx}\psi(x,z)[\delta,\delta] = \operatorname{Var}(\sigma_\delta)-\Ex a_\delta$.
\end{enumerate} 
In particular this gives
\begin{align}
\label{eq:prob-derivatives}
\Ex R_a= \tr(B_a),    \quad \Ex \sigma_\delta = \tr(C(\delta)), \quad  \operatorname{Cov}(R_a,\sigma_\delta) = \eta_a(\delta), 
    \quad \operatorname{Var}(\sigma_\delta)-\Ex a_\delta=-\tr(C(\delta)^2).
\end{align}
Moreover, by Lemma \ref{lem:partwise-balance}, at the point $(x,z(x))$, we have that $\Ex R_a = \tr(B_a)=1$.
\end{lemma}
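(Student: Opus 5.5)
The plan is to treat $\log\det M$ as a log-partition function. By \eqref{eq:prob-detm},
\[
\psi(x,z)=\log \sum_T c_T\, e^{\langle R(T),z\rangle}\prod_{i\in T}x_i ,
\]
where $c_T=\det(V_T)^2$ does not depend on $(x,z)$. Every identity then comes from differentiating a log-sum-exp. The one thing to watch is that derivatives in $x$ act on the \emph{polynomial} weights $\prod_{i\in T}x_i$, not on exponentials.

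\textbf{Reparametrizing the $x$-direction.} Moving $x\mapsto x+t\delta$ multiplies each weight by
\[
\prod_{i\in T}(1+t\gamma_i),\qquad \log\prod_{i\in T}(1+t\gamma_i)=t\,\sigma_\delta(T)-\tfrac{t^2}{2}a_\delta(T)+O(t^3).
\]
This uses that $T$ contains each $i$ at most once, so the product is multilinear. Both perturbations therefore fit one template. For each $T$, write the log-weight as $\log w_T(s)$ with $s$ a perturbation parameter, so that $\psi=\log\sum_T w_T$.

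\textbf{The generic log-sum-exp identities.}
\begin{enumerate}
\item The first derivative is $\partial_s\psi=\Ex_\mu[\partial_s\log w_T]$.
\item The mixed second derivative is $\partial_s\partial_u\psi=\operatorname{Cov}_\mu(\partial_s\log w_T,\partial_u\log w_T)+\Ex_\mu[\partial_s\partial_u\log w_T]$.
\end{enumerate}
Here $\mu$ is the normalized weight distribution, which is exactly $\mu_x$ from \eqref{eq:mux}. These follow by differentiating $\log Z$ with $Z=\sum_T w_T$ once and then differentiating the resulting expectation.

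\textbf{Reading off items 1--4.}
\begin{enumerate}
\item For $s=z_a$ we have $\partial\log w_T/\partial z_a=R_a(T)$, which gives $\partial\psi/\partial z_a=\Ex R_a$.
\item For the $x$-direction, the first-order coefficient of $\log w_T$ is $\sigma_\delta(T)$. Its second derivative in $t$ is $-a_\delta(T)$, because $\frac{d^2}{dt^2}\log(1+t\gamma_i)\big|_{t=0}=-\gamma_i^2$. This gives $D_x\psi[\delta]=\Ex\sigma_\delta$.
\item The same expansion gives $D^2_{xx}\psi[\delta,\delta]=\operatorname{Var}(\sigma_\delta)-\Ex a_\delta$.
\item For the mixed term, $\log w_T$ is separately additive in $z$ (linear) and in $x$, so $\partial_{z_a}D_x\log w_T=0$. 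Only the covariance term survives, giving $\operatorname{Cov}(R_a,\sigma_\delta)$.
\end{enumerate}

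\textbf{Identification with the trace expressions.} The expressions in \eqref{eq:prob-derivatives} then follow by comparing with Lemma~\ref{lem:derivatives-psi}. Both sides are the same derivatives of the same function $\psi$, computed once via $\log\det$ calculus and once via Cauchy--Binet, so they agree. The statement $\Ex R_a=1$ at $z=z(x)$ is Lemma~\ref{lem:partwise-balance} together with item 1.

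\textbf{Main obstacle.} The only real care is justifying Cauchy--Binet and the differentiation. We need $M\succ0$ at the nondegenerate point, so $Z>0$ and $\mu_x$ is a genuine probability distribution. We also need $x_i>0$ on the support so that $\gamma_i=\delta_i/x_i$ is defined. Coordinates with $x_i=0$ contribute no positive terms, and we take $\delta$ supported on $\supp(x)$. A secondary subtlety is the sign of the $-\Ex a_\delta$ term. It arises precisely because the weights are products of $x_i$ rather than exponentials in $x$, and it can be double-checked against the rank-one sanity case $\tr(C^2)=\sum_i\gamma_i^2\ldots$ if desired.
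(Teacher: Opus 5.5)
Your proposal is correct and follows essentially the same route as the paper. Both arguments treat the Cauchy--Binet expansion of $\log\det M$ as a log-partition function, read off $\Ex R_a$ and $\Ex\sigma_\delta$ as first derivatives, obtain the covariance and variance terms by differentiating the normalized weights, and match these with Lemma~\ref{lem:derivatives-psi} because both compute derivatives of the same $\psi$. Your generic identity $\partial_s\partial_u\log Z=\operatorname{Cov}_\mu(\partial_s\log w,\partial_u\log w)+\Ex_\mu[\partial_s\partial_u\log w]$ only makes explicit why the $x$-Hessian picks up the extra $-\Ex a_\delta$ term while the mixed $z$--$x$ derivative does not.
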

\begin{proof}
Each of these identities follows directly  using the Cauchy--Binet expansion \eqref{eq:prob-detm} and the definition of the distribution $\mu_x$ in \eqref{eq:mux}. In particular,
\[\frac{\partial \log \det M}{\partial z_a}  =\sum_T \mu_x(T) R_a(T)  = \Ex R_a, \text{ and } 
 \frac{\partial \log \det M}{\partial x_i} = \sum_{T\ni i} \frac{\mu_x(T)}{x_i},\] and using the definition of $\sigma_\delta$, this gives  $D_x \psi(x,z)[\delta] = \Ex \sigma_\delta$.

 Differentiating the first expression with respect to $x$ again, in the direction $\delta$, differentiates the normalized weights and gives 
        $\operatorname{Cov}(R_a,\sigma_\delta)$.
Similarly differentiating the  second term again with respect $x_j$ and considering the direction $(\delta,\delta)$ gives $\operatorname{Var}(\sigma_\delta)-\Ex a_\delta$.
\end{proof}

\subsection{Bounding $F$ in null-Schur direction via Gibbs lower bound}
We now prove the finite-step equivalent of Lemma \ref{lem:low-rayleigh-null-schur-continuous},  which lower bounds the change $F(x)$ when $x$ moves to $x+t\delta$, where $\delta$ is a null-Schur direction satisfying $\eta(\delta)=0$. Note crucially that $\eta$ is computed at $x$. 

The proof will crucially exploit the multilinearity of the determinant and the probabilistic interpretation of the derivatives in Section \ref{subsec:probabilistic-derivative-dictionary}. This will allow us to avoid directly dealing with stability of the $z$-Hessian in null-Schur directions.

\begin{lemma}[Finite null-Schur step]
\label{lem:finite-null-schur-estimate-balanced}
Let $x$ be nondegenerate. Let $\delta$ be any feasible direction with $\delta(P_a)=0$ for each part $a$, such that $\eta(\delta)=0$.
Let  $K_0:=\max_{i\in\supp(x)} |\delta_i|/x_i$. 
Then 
for any $t$ with
        $|t|\le 1/(4rK_0)$, \[
        F(x+t\delta)
        \ge{}
        F(x)+tDF(x)[\delta]
        -\frac{t^2}{2}\tr(C(\delta)^2)
        -cr^3K_0^3|t|^3,\]
where $c$ is a universal constant.
\end{lemma}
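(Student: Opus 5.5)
The plan is to lower bound $\psi(x+t\delta,z')$ uniformly over all $z'\in\one^\perp$ by a Gibbs variational inequality. I would never touch the minimizer $z(x+t\delta)$ or the $z$-Hessian $\Gamma$.

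Write $z=z(x)$, let $\mu=\mu_x$ be the Cauchy--Binet distribution of \eqref{eq:mux}, and set $z'=z+\zeta$ with $\zeta\in\one^\perp$. By multilinearity of the Cauchy--Binet expansion \eqref{eq:prob-detm}, using $\prod_{i\in T}(x_i+t\delta_i)=\prod_{i\in T}x_i(1+t\gamma_i)$,
\[
\frac{\det M(x+t\delta,z+\zeta)}{\det M(x,z)}=\Ex_\mu\Big[e^{\langle R(T),\zeta\rangle}W(T)\Big],\qquad W(T):=\prod_{i\in T}(1+t\gamma_i).
\]
Since $|t\gamma_i|\le |t|K_0\le 1/(4r)$, we have $W>0$ and $|t\sigma_\delta|\le 1/4$ on every $T$.

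The key step is the Gibbs variational principle $\log\Ex_\mu[e^{f}]\ge \Ex_\pi f-\mathrm{KL}(\pi\|\mu)$, applied with $f=\langle R,\zeta\rangle+\log W$. I choose the probability measure
\[
\pi(T):=\mu(T)\bigl(1+t(\sigma_\delta(T)-\Ex_\mu\sigma_\delta)\bigr),
\]
which is a valid distribution because $|t(\sigma_\delta-\Ex\sigma_\delta)|\le 1/2$. The point of this choice is that it is exactly balanced. By Lemma~\ref{lem:prob-derivatives},
\[
\Ex_\pi R_a=\Ex_\mu R_a+t\operatorname{Cov}_\mu(R_a,\sigma_\delta)=1+t\eta_a(\delta)=1,
\]
using the null-Schur hypothesis $\eta(\delta)=0$. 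Hence $\Ex_\pi\langle R,\zeta\rangle=\langle\one,\zeta\rangle=0$ for every $\zeta\in\one^\perp$. Taking the infimum over $\zeta$ then gives
\[
F(x+t\delta)\ge F(x)+\Ex_\pi\log W-\mathrm{KL}(\pi\|\mu).
\]
This bound holds regardless of whether $x+t\delta$ is degenerate or how unstable $\Gamma$ is.

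It remains to expand both terms. Write $\bar\sigma:=\sigma_\delta-\Ex_\mu\sigma_\delta$, and note $|\bar\sigma|\le 2rK_0$. From $\log(1+u)=u-u^2/2+O(|u|^3)$ we get $\log W=t\sigma_\delta-\tfrac{t^2}{2}a_\delta+O(rK_0^3|t|^3)$. Therefore
\[
\Ex_\pi\log W=t\Ex\sigma_\delta+t^2\operatorname{Var}(\sigma_\delta)-\tfrac{t^2}{2}\Ex a_\delta+O(r^3K_0^3|t|^3).
\]
The cross term $t\bar\sigma\cdot(-t^2a_\delta/2)$ is also $O(r^3K_0^3|t|^3)$, since $a_\delta\le rK_0^2$. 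For the divergence, $\mathrm{KL}(\pi\|\mu)=\Ex_\mu[(1+t\bar\sigma)\log(1+t\bar\sigma)]$. Using $(1+u)\log(1+u)=u+u^2/2+O(|u|^3)$ and $\Ex\bar\sigma=0$, this equals $\tfrac{t^2}{2}\operatorname{Var}(\sigma_\delta)+O(r^3K_0^3|t|^3)$.

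Subtracting, the surviving terms are
\[
t\Ex\sigma_\delta+\tfrac{t^2}{2}\bigl(\operatorname{Var}(\sigma_\delta)-\Ex a_\delta\bigr)=tDF(x)[\delta]-\tfrac{t^2}{2}\tr(C(\delta)^2),
\]
by \eqref{eq:prob-derivatives} and Lemma~\ref{lem:F-derivatives}. This is the claim, with a universal constant $c$.

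The main obstacle is the choice of $\pi$. The naive choices do not work:
\begin{itemize}
\item Taking $\pi=\mu$ recovers only $\Ex_\mu\log W$, which loses the $+\tfrac{t^2}{2}\operatorname{Var}\sigma_\delta$ term.
\item Taking the tilt $\pi\propto W\mu$ is not exactly balanced, so $\Ex_\pi\langle R,\zeta\rangle$ would be unbounded below over $\zeta\in\one^\perp$.
\end{itemize}
The linear tilt above is the one that uses $\eta(\delta)=0$ to make the balance exact. After that, the remaining work is routine third-order bookkeeping, with all error terms controlled by $|T|\le r$ and $|\gamma_i|\le K_0$.
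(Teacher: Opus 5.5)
Your proposal is correct and follows the same route as the paper. You lower-bound $\inf_{s\in\one^\perp}\log\Ex_\mu\exp\{u+\ip{s}{R-\one}\}$ via the Gibbs/Donsker--Varadhan inequality with the linear tilt $d\nu/d\mu=1+t(\sigma_\delta-\Ex_\mu\sigma_\delta)$, which is exactly balanced because $\operatorname{Cov}_\mu(R,\sigma_\delta)=\eta(\delta)=0$, and then you do the same second-order expansions of $\Ex_\nu u$ and $D(\nu\|\mu)$ with errors $O(r^3K_0^3|t|^3)$.
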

Before proving this, let us recall two basic facts.

Let $\mu, \nu$ be distributions on a discrete set $\Omega$, such that $\nu$ is absolutely continuous with respect to $\mu$, i.e., $\nu(y)=0$ whenever $\mu(y)=0$. The relative entropy or KL divergence of $\nu$ with respect to $\mu$ is defined as
\[  D(\nu ||\mu) = \sum_y  \nu(y) \log\ \frac{\nu(y)}{\mu(y)} = \E_\nu \log \frac{d\nu}{d\mu}\]
One has the following Donsker--Varadhan/Gibbs variational inequality.
For any function $f: \Omega \rightarrow \R$,
\[ \log \Ex_\mu  [e^{f}] \geq \E_\nu [f] -  D(\nu||\mu)\]
\begin{lemma}
\label{lem:balanced-certificate}
Let $\mu,\nu$ be probability distributions on a finite set $\Omega$. Let $R:\Omega\to\R^r$, and $u:\Omega\to\R$.  Suppose $\nu$ is absolutely continuous with respect to $\mu$, and $\Ex_\nu [R]=\one.$
Then
\begin{equation}
\label{eq:balanced-certificate}
        \inf_{s\in\one^\perp}
        \log\Ex_{T\sim\mu}
        \exp\{u(T)+\ip{s}{R(T)-\one}\}
        \ge
        \Ex_\nu [u]-D(\nu\|\mu).
\end{equation}
\end{lemma}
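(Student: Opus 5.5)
The plan is to apply the Gibbs variational inequality pointwise in $s$, and then use the balance condition $\Ex_\nu[R]=\one$ to remove the dependence on $s$. Fix an arbitrary $s\in\one^\perp$ and set $f_s(T):=u(T)+\ip{s}{R(T)-\one}$, a real-valued function on the finite set $\Omega$.

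First, apply the Donsker--Varadhan/Gibbs inequality stated just above with the pair $(\mu,\nu)$. Since $\nu$ is absolutely continuous with respect to $\mu$, the relative entropy $D(\nu\|\mu)$ is finite, and we get
\[
\log\Ex_{\mu}\big[e^{f_s}\big]\ \ge\ \Ex_\nu[f_s]-D(\nu\|\mu).
\]

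Second, use linearity of expectation together with the hypothesis $\Ex_\nu[R]=\one$:
\[
\Ex_\nu[f_s]=\Ex_\nu[u]+\ip{s}{\Ex_\nu[R]-\one}=\Ex_\nu[u].
\]
So the lower bound $\Ex_\nu[u]-D(\nu\|\mu)$ holds for every $s\in\one^\perp$ and does not depend on $s$. Taking the infimum over $s$ gives \eqref{eq:balanced-certificate}.

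It is worth noting that the restriction $s\in\one^\perp$ is not actually used. The linear term vanishes in $\nu$-expectation for every $s\in\R^r$; the restriction is only there to match how the lemma will be applied.

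If one wants a self-contained argument for the Gibbs inequality, the route is Jensen's inequality. Write
\[
\Ex_\mu e^{f}\ \ge\ \sum_{y:\nu(y)>0}\nu(y)\,e^{f(y)}\frac{\mu(y)}{\nu(y)}.
\]
Then apply concavity of $\log$ to the $\nu$-average:
\[
\log\Ex_\mu e^{f}\ \ge\ \Ex_\nu\big[f-\log(d\nu/d\mu)\big]=\Ex_\nu[f]-D(\nu\|\mu).
\]
This step is routine, and I do not expect any step of the proof to be a real obstacle.
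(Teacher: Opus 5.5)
Your proposal is correct and follows the paper's proof essentially verbatim: fix $s\in\one^\perp$, apply the Gibbs/Donsker--Varadhan inequality, use $\Ex_\nu[R]=\one$ to kill the linear term, and take the infimum over $s$. Your two added remarks are also correct and go slightly beyond the paper: the Jensen derivation of the Gibbs inequality, and the observation that the restriction $s\in\one^\perp$ is not needed.
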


\begin{proof}
For any fixed $s\in\one^\perp$, Gibbs variational lower bound gives
\[
        \log\Ex_\mu e^{u+\ip{s}{R-\one}}
        \ge
        \Ex_\nu\bigl[u+\ip{s}{R-\one}\bigr]-D(\nu\|\mu).\]
Because $\Ex_\nu R=\one$, the term
        $\Ex_\nu\ip{s}{R-\one}=0$.
So the resulting lower bound is independent of $s$, and remains true after taking the infimum over $s\in\one^\perp$.
\end{proof}

 We now prove Lemma~\ref{lem:finite-null-schur-estimate-balanced}. We use the notation from Subsection~\ref{subsec:probabilistic-derivative-dictionary}.  Thus $T\sim\mu_x$, $R(T)$ is the part-count vector, $\gamma_i=\delta_i/x_i$, $\sigma_\delta(T)=\sum_{i\in T}\gamma_i$, and $a_\delta(T)=\sum_{i\in T}\gamma_i^2$.
\begin{proof}
At the old balanced point $x$, $F(x)=\log\det M$. Consider the perturbed point $x+t\delta$, and consider an arbitrary $\widetilde{z} = z(x)+s$ with  $s\in\one^\perp$. Let us denote $\widetilde{M} := M(x+t\delta,z(x)+s)$.

By Cauchy--Binet and the definition of $\mu_x$, we have
\begin{equation}
    \label{eq:m-ratio}
        \frac{\det(\widetilde{M})}{\det(M)} = \frac{
        \det\left(\sum_a e^{s_a} e^{z_a}\sum_{i\in P_a}(x_i+t\delta_i)v_iv_i^\top\right)
        }{\det M} =
        \Ex_{\mu_x}
        \Big[
        \exp\bigl(\ip{s}{R(T)}\bigr)
        \prod_{i\in T}(1+t\gamma_i)
        \Big].
\end{equation}
We denote $\mu_x$ by $\mu$ to avoid subscript overload below.

Set \begin{equation} u(T) := u_t(T)=\sum_{i\in T}\log(1+t\gamma_i).
\label{eq:yoo-t}
\end{equation}
Since $s\in\one^\perp$, we have $\ip{s}{R(T)}=\ip{s}{R(T)-\one}$.  Therefore, taking logarithm in \eqref{eq:m-ratio}, and minimizing over $s$ to obtain the inner minimum defining $z(x+t\delta)$, we have
\begin{equation}
\label{eq:finite-mgf-identity-balanced}
        F(x+t\delta)-F(x)
        =
        \inf_{s\in\one^\perp}
        \log\Ex_{\mu} \Big[ \exp\left(
        u(T)+\ip{s}{R(T)-\one}
        \right) \Big].
\end{equation}
This gives an exact formula for the decrease, but we need to lower bound this decrease over all possible $s \in \one^\perp$. We will do this by considering a suitable tilt of the distribution of $\mu$.

For a set $T$, let us denote $X(T):=\sigma_\delta(T)-\Ex_{T\sim \mu} [\sigma_\delta(T)]$. 
Define the tilted distribution $\nu$ as
\begin{equation}
\label{eq:nut}
        \frac{d\nu}{d\mu}(T):=1+tX(T).
        \end{equation}
We first verify that $\nu$ is a valid distribution.
For any $T$, note that $|\sigma_\delta(T)| = |\sum_{i\in T} \gamma_i| \le rK_0$, giving the absolute bound $|X(T)|\le2rK_0$. As $|t|\le1/(4rK_0)$, this gives that the density $1+tX$ lies between $1/2$ and $3/2$. 
Additionally as $\E_{T\sim \mu} X(T)=0$, it follows that $\nu$ is a distribution and is absolutely continuous with respect to $\mu$.

Next, $\nu$ satisfies the condition $\Ex_{\nu} R=\one$ needed in Lemma \ref{lem:balanced-certificate}. To see this,
\begin{align*}
        \Ex_{\nu}(R-\one)
        &=
        \Ex_{\mu}\bigl[(1+tX)(R-\one)\bigr]
        = \Ex_{\mu}\bigl[(tX)(R-\one)\bigr] \\
        &= \Ex_{\mu}\bigl[tXR\bigr]
        = t\operatorname{Cov}_{\mu}(R,\sigma_\delta)
        = t\eta(\delta)
        =0.
\end{align*}
Here the first equality uses \eqref{eq:nut} and the second equality uses that $\E_{\mu} R = \one$ by Lemma \ref{lem:prob-derivatives} and \ref{lem:partwise-balance}. The third equality uses that $\Ex_{\mu} X= \E_{\mu}[\sigma_\delta] - \E_{\mu}[\sigma_\delta] =0$ and the second last equality uses the mixed-derivative and covariance connection in Lemma \ref{lem:prob-derivatives}.

Applying Lemma~\ref{lem:balanced-certificate} and using \eqref{eq:finite-mgf-identity-balanced} gives the lower bound
\begin{equation}
\label{eq:certificate-applied-balanced}
        F(x+t\delta)-F(x)
        \ge
        \Ex_{\nu}[u]-D(\nu\|\mu).
\end{equation}
We now bound the terms on the right suitably. 

Since $|t\gamma_i|\le1/4$ for every $i$, by Taylor expansion $\log (1+t\gamma_i) = t\gamma_i - t^2 \gamma_i^2/2 + O(|t|^3 |\gamma_i|^3)$.  Using \eqref{eq:yoo-t}, and the definition of $\sigma_\delta(T)$ and $a_\delta(T)$ in \eqref{eq:gst} and that $|\gamma_i|\leq K_0$, this gives
\begin{align}
        u(T)
        = \sum_{i\in T}\log(1+t\gamma_i) 
        = t\sigma_\delta(T)-\frac{t^2}{2}a_\delta(T)+O(rK_0^3|t|^3), \label{eq:yoo-t-taylor}
        \end{align} and thus  $\E_{\mu} [u] = t \E_{\mu}[\sigma_\delta] - (t^2/2) \E_{\mu} a_\delta+O(rK_0^3|t|^3)$. 
        
Using $d\nu/d\mu=1+tX$, this gives 
\begin{align}
        \Ex_{\nu}[u]
        &=\Ex_{\mu} [u]  + \Ex_{\mu} [t u X] \nonumber \\
        & = t \E_{\mu}[\sigma_\delta] -\frac{t^2}{2}\Ex a_\delta+ t^2\operatorname{Var}(\sigma_\delta)
        +O(r^3K_0^3|t|^3). \label{eq:nu-u-mean}
\end{align}
where second step used that $\E_{\mu}[X]=0$ and that $\Ex_{\mu} [X\sigma_\delta]= \operatorname{Var}(\sigma_\delta)$; the remaining terms are bounded using $a_\delta\le rK_0^2$ and $|X|\le2rK_0$.

To bound the relative entropy term, we use that  $(1+y)\log(1+y)=y+y^2/2+O(|y|^3)$ for $|y|\le1/2$. Setting $y=tX$ and using that  $\E_{\mu} X=0$, this  gives,
\begin{equation}
       D(\nu\|\mu)
        =
        \frac{t^2}{2}\operatorname{Var}(\sigma_\delta)+O(r^3K_0^3|t|^3).
\label{eq:rel-ent-nu-u}
\end{equation}
Plugging the estimates \eqref{eq:nu-u-mean}, \eqref{eq:rel-ent-nu-u} in \eqref{eq:certificate-applied-balanced} gives
\[
        F(x+t\delta)-F(x)
        \ge
        t\Ex_\mu[\sigma_\delta]+\frac{t^2}{2}\bigl(\operatorname{Var}(\sigma_\delta)-\Ex a_\delta\bigr)
        -O(r^3K_0^3|t|^3).
\]
Using that  $\Ex_\mu[\sigma_\delta]=DF(x)[\delta]$

and
$
        \operatorname{Var}(\sigma_\delta)-\Ex a_\delta=-\tr(C(\delta)^2)$
by Lemma \ref{lem:prob-derivatives} gives the result.
\end{proof}

\subsection{Second Order Progress}

We now show that, after a small rebalancing step has certified bounded leverage and after all floor coordinates have been deleted, either the current component already has small support, or a finite null-Schur step gives inverse-polynomial progress.

Define the small-coordinate set
\begin{equation}
\label{eq:small-set-balanced}
        I_{\rm sm}(x):=
        \left\{i:\tau<x_i\le 1/(2H)\right\}.
\end{equation}
If every active coordinate is larger than $\tau$, then every active coordinate outside $I_{\rm sm}(x)$ has mass larger than $1/(2H)$. Hence there are at most $2Hr$ such coordinates.

\begin{lemma}
\label{lem:small-progress-balanced}
Assume $x_i>\tau$ for every active coordinate and $\ell_i\le 3$ for every active coordinate. If
$
        |I_{\rm sm}(x)|>4r,
$
then one can compute a feasible point $x^+\in \Delta_r(\tau)$ such that either some coordinate of $x^+$ is equal to $\tau$, or
\[
        L(x^+)-L(x)\ge c_2\frac{\tau^6}{r^6}
        \ge c_2\frac{\tau^6}{d^6},
\]
for a universal constant $c_2>0$.
\end{lemma}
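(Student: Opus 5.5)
Set $T:=I_{\rm sm}(x)$ and reuse the dimension-counting argument of Lemma~\ref{lem:low-rayleigh-null-schur-continuous} with $T$ in place of $T(x)$. Consider the subspace
\[K:=\{\delta\in\R^T:\ \delta(P_a\cap T)=0\ \forall a,\ \eta(\delta)=0\}.\]
Here $\eta$ is computed at the current nondegenerate point $x$, so these are $\le 2r-1$ linear constraints. Since $|T|>4r$, we get $\dim K>|T|/2$. The trace bound $\sum_{i\in T}\ell_i^2\le 9|T|$ uses the hypothesis $\ell_i\le 3$, and it yields a unit $\delta\in K$ with $\tr(C(\delta)^2)\le 18$. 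This $\delta$ can be computed by an eigenvector computation of the quadratic form $R$ restricted to $K$. Replacing $\delta$ by $-\delta$ if necessary, we may assume $DF(x)[\delta]+\ip{\nabla\Phi(x)}{\delta}\ge 0$, i.e.\ $DL(x)[\delta]\ge0$.

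Next we control the step size. Since $x_i>\tau$ on the support and $\|\delta\|_2=1$, we have $K_0=\max_i|\delta_i|/x_i\le 1/\tau$. Move along $x+t\delta$ with $t\ge0$ increasing up to $t_0:=\min\{c_0\tau/r,\ t_{\rm hit}\}$, where $t_{\rm hit}$ is the first time some coordinate reaches $\tau$. Feasibility holds because $\delta(P_a)=0$ and coordinates outside $T$ are unchanged. If $t_{\rm hit}\le c_0\tau/r$, we output $x^+=x+t_{\rm hit}\delta$, which has a coordinate equal to $\tau$. We must still ensure $L$ does not decrease along the way, which the bound below provides for all $t\le t_0$.

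To lower bound $F$, apply Lemma~\ref{lem:finite-null-schur-estimate-balanced}; this is valid since $|t|\le 1/(4rK_0)$ once $c_0\le 1/4$. It gives
\[F(x+t\delta)\ge F(x)+tDF(x)[\delta]-9t^2-c r^3K_0^3t^3.\]
For $\Phi$, every $i\in T$ stays in the quadratic region: $x_i\le 1/(2H)$ and $|t\delta_i|\le t\le 1/(2H)$, so $x_i+t\delta_i\le 1/H$, and on $T$ the region $[0,1/H]$ is exactly quadratic. Hence
\[\Phi(x+t\delta)=\Phi(x)+t\ip{\nabla\Phi}{\delta}+\tfrac H2t^2.\]
Adding, and using $DL(x)[\delta]\ge0$,
\[L(x+t\delta)-L(x)\ge \left(\tfrac H2-9\right)t^2-c r^3\tau^{-3}t^3=t^2-c r^3\tau^{-3}t^3.\]
Choosing $t=c_0\tau/r$ with $c_0$ small makes the cubic term at most $t^2/2$, since $c r^3\tau^{-3}t\le c c_0^{3}\cdot r^3/(r\cdot r^{2})\cdot\ldots$, which scales as $c_0$ times a constant. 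Thus the gain is at least $t^2/2=\Omega(\tau^2/r^2)$, which is far better than the claimed $\tau^6/r^6$. Alternatively, keep $t\sim\tau^3/r^3$ to be safe with respect to the $r^3K_0^3$ factor and obtain the stated $\tau^6/r^6$ bound. For $t\le t_0$ the same inequality shows $L$ is nondecreasing on the truncated path, which handles the case where a coordinate hits $\tau$.

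The main obstacle is the cubic error term $r^3K_0^3|t|^3$ with $K_0$ as large as $1/\tau$. This forces the balance $t\approx \tau^3/r^3$ in the worst accounting, and it is the source of the $\tau^6/r^6$ rate. Keeping every tiny coordinate inside the exactly quadratic region of $p$ is what makes the $\Phi$ part exact.
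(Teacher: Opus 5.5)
Your setup matches the paper's proof. You use the same kernel cut out by $\delta(P_a)=0$ and $\eta(\delta)=0$, the same trace/averaging bound giving $\tr(C(\delta)^2)\le 18$, the sign choice $DL(x)[\delta]\ge0$, and the bound $K_0\le 1/\tau$. You also apply Lemma~\ref{lem:finite-null-schur-estimate-balanced} to $F$ and use the exact quadratic expansion of $\Phi$ on $I_{\rm sm}(x)$. This correctly yields $L(x+t\delta)-L(x)\ge t^2-cr^3\tau^{-3}t^3$.

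\textbf{The gap: your primary step size $t=c_0\tau/r$ does not work.} With $K_0$ as large as $1/\tau$, the cubic error at $t=c_0\tau/r$ is $cr^3K_0^3t^3=c\,c_0^3$, an absolute constant. The quadratic gain is only $t^2=c_0^2\tau^2/r^2$. Their ratio is $c\,c_0r^2/\tau^2\gg1$, so no small constant $c_0$ makes the cubic term at most $t^2/2$. Your computation of $cr^3\tau^{-3}t$ as ``$c_0$ times a constant'' is wrong; it equals $c\,c_0r^2/\tau^2$. The condition you actually need is $cr^3\tau^{-3}t\le 1/2$, i.e.\ $t\le\tau^3/(2cr^3)$. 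Two of your claims therefore do not follow from Lemma~\ref{lem:finite-null-schur-estimate-balanced}:
\begin{enumerate}
\item the $\Omega(\tau^2/r^2)$ gain;
\item that $L$ is nondecreasing along the path up to $t_0=\min\{c_0\tau/r,t_{\rm hit}\}$.
\end{enumerate}

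\textbf{The fix is the option you list as an ``alternative''.} This is exactly what the paper does: cap the step at $t_0:=\min\{c_1\tau^3/r^3,\,t_{\rm hit}\}$ with $c_1\le 1/(2c)$. Then the same inequality gives $L(x+t\delta)-L(x)\ge t^2/2\ge0$ for every $t\le t_0$. Either you stop at the floor without losing $L$, or $t=c_1\tau^3/r^3$ and the gain is $\Omega(\tau^6/r^6)$.

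\textbf{Why monotonicity at a floor hit matters.} It is not part of the lemma's statement, but the global accounting of the final algorithm relies on it. There can be up to $n$ floor hits. With $\tau/r$ steps, the available bound only guarantees $L(x+t\delta)-L(x)\ge -c\,c_0^3$ per hit, i.e.\ a possible $O(n)$ total loss, far beyond the $O(r)$ budget. Drop the $\tau/r$ claim and commit to the $\tau^3/r^3$ cap; with that change your proof is the paper's.
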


\begin{proof}
Let $I:=I_{\rm sm}(x)$. We argue as in the dimension-counting argument in the continuous proof. On the coordinate space supported on $I$, the equations
\[
        \delta(P_a)=0\quad(a\in[r]),
        \qquad
        \eta(\delta)=0
\]
impose at most $r+(r-1)$ independent linear constraints, because $\eta(\delta)\in\one^\perp$. Since $|I|>4r$, their common kernel has dimension larger than $|I|/2$.

The trace of the quadratic form
$       \delta\mapsto \tr(C(\delta)^2)
$
on $\R^I$ is
\[
        \sum_{i\in I}\tr(Q_i^2)=\sum_{i\in I}\ell_i^2\le 9|I|,
\]
where we use the leverage bound and the fact that each $Q_i$ is rank one. Hence the restriction of this quadratic form to the kernel has average eigenvalue less than $18$. Therefore there is a unit vector $\delta$ in the kernel such that
\[
        \eta(\delta)=0,
        \qquad
        \tr(C(\delta)^2)\le 18.
\]
Choose the sign of $\delta$ so that $DL(x)[\delta]\ge0$.

Since every active coordinate is larger than $\tau$ and $\|\delta\|_2=1$, we have
        $K_0:=\max_i|\delta_i|/x_i\le 1/\tau$.
Let $t$ be the largest value satisfying
$
        0<t\le c_1\tau^3/r^3
$
and such that $x+t\delta\in\Delta_r(\tau)$. Here $c_1>0$ is a sufficiently small universal constant. This choice guarantees $t\le 1/(4rK_0)$, so Lemma~\ref{lem:finite-null-schur-estimate-balanced} applies. It also keeps every moved coordinate inside the quadratic region of $p$, unless the step stops because a coordinate hits the floor.

Set $x^+:=x+t\delta$. If the maximality of $t$ is caused by the floor constraint, then some coordinate of $x^+$ is equal to $\tau$, and we are done. Otherwise $t=c_1\tau^3/r^3$. Since $\delta$ is supported on $I_{\rm sm}(x)$ and no coordinate leaves the quadratic region of $p$, we have
\[
        \Phi(x+t\delta)=\Phi(x)+tD\Phi(x)[\delta]+\frac{H}{2}t^2.
\]
Combining this identity with Lemma~\ref{lem:finite-null-schur-estimate-balanced} gives
\[
\begin{aligned}
        L(x+t\delta)-L(x)
        &\ge
        tDL(x)[\delta]
        +\frac{t^2}{2}\bigl(H-\tr(C(\delta)^2)\bigr)
        -cr^3K_0^3t^3 .
\end{aligned}
\]
The first term is nonnegative by the choice of sign, and $H-\tr(C(\delta)^2)\ge H-18$. Taking $c_1$ small enough makes the cubic term at most one half of the positive quadratic margin. Thus
\[
        L(x+t\delta)-L(x)\ge c_2 t^2
        = c_2\frac{\tau^6}{r^6},
\]
after decreasing the universal constant $c_2$ if necessary.
\end{proof}

\subsection{The Final Algorithm}

We now assemble the pieces. The algorithm maintains a forest of residual instances. Singleton contractions contribute an accumulated log-value $\Lambda$, while tight-set splits decompose the current relaxation value exactly. On each nondegenerate instance, the algorithm repeats the following steps.

First,  from the current point, compute the point $y$ from \eqref{eq:proximal-surrogate-balanced}. If the step is large, Lemma~\ref{lem:proximal-progress-balanced} gives a fixed increase in $L$. If the step is small, replace $x$ by $y$ and use the leverage bound from Lemma~\ref{lem:proximal-progress-balanced}. Then delete any coordinate at the floor $\tau$, using Lemma~\ref{lem:finite-small-deletion-balanced}, and immediately perform all exact singleton contractions and tight-set splits. If no deletion or structural operation is possible and $|I_{\rm sm}(x)|>4r$, apply Lemma~\ref{lem:small-progress-balanced}. Finally, if $|I_{\rm sm}(x)|\le4r$, declare the component compressed: it has at most $4r$ small coordinates and at most $2Hr$ other active coordinates, so its support size is at most $(2H+4)r$.

 Initially, the interior restriction loses only $O(dn\tau)$ in $F$, and $\sum_c\Phi(x^c)\ge -d$, so $\mathcal L\ge F^\star-O(d)$. Also, throughout the algorithm, $\mathcal L\le F^\star+O(1)$: exact contractions and tight-set splits preserve the relaxation value.
Each large rebalancing step increases $\mathcal L$ by the fixed constant $1/(50H)$. Each finite null-Schur step that does not hit the floor increases $\mathcal L$ by at least $c_2\tau^6/d^6=1/\poly(n,d)$. Each deletion removes one ground-set element and decreases $\mathcal L$ by at most $O(r\tau)$, so all deletions together lose only $O(nd\tau)=O(1)$. Singleton contractions and tight-set splits are exact. Therefore the number of large rebalancing steps and null-Schur steps is polynomial, and there are at most $n$ deletions. The process terminates after polynomially many arithmetic iterations.

Combining this with the terminal rounding theorem from Section~\ref{sec:continuous-compression} gives a transversal in each residual component with an additional $\exp(O(d))$ total multiplicative loss. The exact split and contraction identities multiply the residual determinants and add the accumulated log-value $\Lambda$. Hence the final transversal for the original instance has determinant at least $\exp(-O(d))$ times the optimum relaxation value, and therefore at least $\exp(-O(d))$ times the optimum integral value.

%% file: local-search-barrrier.tex
\section{A  barrier for cycle exchanges}
\label{sec:local-barrier}
We describe the example from~\cite{BrownLaddhaPittuSinghTetali2022},  
which shows that a local-improvement method cannot give an approximation better than $d^{\Theta(d)}$ even for partition matroids version.
In the lower bound below, we allow more general edges corresponding to $\ell$-wise swaps (\cite{BrownLaddhaPittuSinghTetali2022} considered $\ell=1$).

Fix
$\ell\ge1$ and set $\gamma:=1/\sqrt{2\ell}$.
Consider a $d\times d$ Hadamard matrix $H=[h_1|\cdots|h_d]$, and consider a subdeterminant maximization instance with $d$ parts
\[
        P_a=\{e_a,\gamma h_a\},\qquad a=1,\ldots,d,
\]
where $e_1,\ldots,e_d$ are the standard basis vectors and each part has rank $1$.  

\begin{proposition}
\label{prop:local-barrier}
The standard
transversal $E:=\{e_1,\ldots,e_d\}$
is a strict local optimum with respect to every exchange involving at most
$\ell$ parts.  However the all-Hadamard transversal $H:=\{\gamma h_1,\ldots,\gamma h_d\}$
has determinant larger by the factor
$(d/2\ell)^d.$
\end{proposition}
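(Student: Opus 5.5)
The plan is to compute both determinants directly. Throughout, the objective is the squared volume $\det(V_B)^2$.

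\emph{The global comparison.} For $E$ the matrix is $I$, so its value is $1$. For $H$ the matrix is $\gamma H$, a scaled Hadamard matrix. Since $H^\top H=dI$, we have $\det(H)^2=d^d$. Therefore
\[
\det(\gamma H)^2=\gamma^{2d}d^d=\Big(\frac{d}{2\ell}\Big)^d,
\]
which is the claimed factor.

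\emph{Exchanges from $E$.} Any exchange involving at most $\ell$ parts starting from $E$ can only swap $e_a$ for $\gamma h_a$ on some set $S$ of parts, because each part has exactly two elements. Let $m:=|S|$, with $1\le m\le\ell$. The new matrix has columns $e_b$ for $b\notin S$ and $\gamma h_a$ for $a\in S$.

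\begin{enumerate}
\item Expand the determinant along the standard basis columns $e_b$, $b\notin S$. Each such expansion deletes row $b$ and column $b$. This leaves the $m\times m$ submatrix of $\gamma H$ with rows and columns indexed by $S$. Hence
\[
|\det| = \gamma^m\,|\det H_{S,S}|.
\]
\item The matrix $H_{S,S}$ has $\pm1$ entries, so each of its columns has Euclidean norm $\sqrt m$. Hadamard's inequality then gives $|\det H_{S,S}|\le m^{m/2}$.
\item Squaring, the new value is at most
\[
\gamma^{2m}m^m=\Big(\frac{m}{2\ell}\Big)^m\le 2^{-m}<1 .
\]
So every such exchange strictly decreases the objective, and $E$ is a strict local optimum.
\end{enumerate}

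The only mildly delicate step is the cofactor reduction to the principal submatrix $H_{S,S}$. One should check that the identity columns sit in their original positions, so the sign is irrelevant after squaring. The rest is Hadamard's inequality together with $m\le \ell$.
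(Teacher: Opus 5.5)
Your proposal is correct and follows the paper's proof essentially verbatim. Both arguments get $\det(\gamma H)^2=\gamma^{2d}d^d=(d/2\ell)^d$ from the orthogonality of the Hadamard columns, and both handle an exchange on $m\le\ell$ parts by expanding along the untouched $e_b$ columns to reach $\gamma^m|\det H_{S,S}|$ and then applying Hadamard's inequality. Your chain $\gamma^{2m}|\det H_{S,S}|^2\le\gamma^{2m}m^m=(m/2\ell)^m\le 2^{-m}<1$ puts the inequality in the right place; the paper's display $\gamma^t|\det H[J,J]|=\gamma^t t^{t/2}\le(t/2\ell)^{t/2}$ has the equality and the inequality swapped.
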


\begin{proof}
The standard transversal has volume $1$.  Since the Hadamard columns are
orthogonal and have norm $\sqrt d$, the all-Hadamard transversal has volume $
        \gamma^d |\det H|=\gamma^d d^{d/2}$,
and hence squared determinant $(d/(2\ell))^d$.

Suppose we exchange the parts in set $J\subseteq[d]$ with
$|J|=t\le\ell$, replacing $e_j$ by $\gamma h_j$ for $j\in J$ and keeping
$e_j$ for $j\notin J$.  Expanding along the unchanged standard-basis columns,
the volume is
\[
        \gamma^t\,|\det H[J,J]| = \gamma^t t^{t/2} \le (t/2\ell)^{t/2}<1.
\]
where the first inequality follows from Hadamard's inequality, as each column of $H[J,J]$ has length $t^{1/2}$. 
So every exchange of at most $\ell$ parts strictly decreases the volume, so $E$ is
a strict $\ell$-local optimum, even though $H$ is better
by factor at least $(d/(2\ell))^d$.
\end{proof}

%% file: terminal-appendix.tex
\section{Proof of Terminal Rounding}
\label{sec:terminal-esv}
Let $P_1,\ldots,P_r$ be the parts, each with rank $1$. We assume that vectors have dimension $r$, and index the vectors as $v_{a,i}$ where $i\in [|P_a|]$ for each $P_a$.
Let $n_a:=|P_a|$ and $n:=\sum_a n_a$.
Fix some part $P_a$. For a vector $\zeta_a \in \R^{n_a}$, define the column
\[
        C_a(\zeta_a):=\sum_{i\in P_a}\zeta_{a,i}v_{a,i}.
\]
Consider the $r\times r$ matrix $C(\zeta)$ with columns $C_a(\zeta_a)$, and let $D(\zeta) = \det C(\zeta)$.
By multilinearity of the determinant, $D(\zeta)$ is linear in each block variable $\zeta_a$, when all
other blocks are fixed. 

Call $\zeta_a$ a vertex if $\zeta_a=e_i$ for some $i\in [n_a]$.
Setting $\zeta_a=e_i$
means choosing vector $v_{a,i}$ from part $a$. 
Call $\zeta$ a vertex if each $\zeta_a$ is a vertex for $a\in [r]$.  
Define
\[
        D_{\max}:=\max_{\text{vertex }\zeta}|D(\zeta)|.
\]

For every
part $P_a$ and every $(a,i)\in P_a$, sample independently
$        U_{a,i}\sim\operatorname{Unif}[-1,1]$.
This point is not meant to be feasible.

The proof will follow from two facts.
\begin{lemma}
\label{lem:block-exposure}
$|D(U)|\ge e^{-4r}D_{\max}, $ with probability at least $3/4$ over $U$.
\end{lemma}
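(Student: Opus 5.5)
We expose the blocks one at a time and use multilinearity. Fix the vertex $\zeta^*$ attaining $D_{\max}$. Think of $D(\zeta)$ as a function of block $\zeta_a$ with the other blocks held fixed: it has the form $\ip{c_a}{\zeta_a}$ for a vector $c_a\in\R^{n_a}$ depending on the other blocks.

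The basic one-block fact is the following. If $U_a$ is uniform on $[-1,1]^{n_a}$, then for any fixed vector $c$,
\[\Ex\log|\ip{c}{U_a}|\ge \log\|c\|_\infty - O(1).\]
To see this, condition on all coordinates except the one $i$ attaining $\|c\|_\infty$. Then $\ip{c}{U_a}=c_iU_{a,i}+\beta$ for some constant $\beta$. For any $\beta$, the quantity $\Ex\log|U+\beta'|$ with $U\sim\operatorname{Unif}[-1,1]$ is minimized at $\beta'=0$, where it equals $-1$. This holds because $\log|\cdot|$ has symmetric decreasing level structure, so a shift only helps; this can be checked directly. Hence $\Ex\log|\ip{c}{U_a}|\ge\log|c_i|-1$. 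Since $\max_i|c_i|$ is the maximum of the linear function over vertices $e_i$, the block gain is controlled by the best vertex choice.

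We then chain the blocks, as in the paper's block-exposure idea. Define hybrids $\zeta^{(a)}=(U_1,\dots,U_a,\zeta^*_{a+1},\dots,\zeta^*_r)$, but replace the fixed vertices by an adaptively maximal choice. Let $G_a:=\max_{\text{vertices }\xi_{a+1},\dots,\xi_r}|D(U_1,\dots,U_a,\xi_{a+1},\dots,\xi_r)|$, so that $G_0=D_{\max}$ and $G_r=|D(U)|$. Conditional on $U_1,\dots,U_{a-1}$, pick the vertex tuple $\xi^\circ$ attaining $G_{a-1}$. The linear function $c$ of block $a$, with the later blocks fixed to $\xi^\circ_{>a}$, has $\|c\|_\infty=G_{a-1}$. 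The one-block fact then gives $\Ex[\log G_a\mid U_{<a}]\ge\Ex\log|\ip{c}{U_a}|\ge\log G_{a-1}-1$. Summing over $a$, we get $\Ex\log(|D(U)|/D_{\max})\ge -r$.

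The main obstacle is converting this expectation bound into a probability bound: we need $\Pr[\log(D_{\max}/|D(U)|)\le 4r]\ge 3/4$. The nonnegative variable $Y=\log(D_{\max}/G_r)$ is not available directly; only the increments are controlled. We apply Markov to $Y^+$ after checking $G_r\le$ something. Since $|U_{a,i}|\le1$, multilinearity gives $|D(U)|\le\prod_a n_a\cdot D_{\max}$, which is too weak. Instead we use the stronger supermartingale-type bound on each increment: $Y_a:=\log(G_{a-1}/G_a)$ satisfies $\Ex[Y_a\mid U_{<a}]\le1$. We must also have $Y_a\ge$ a bounded negative amount. In fact $G_a\le G_{a-1}$ always, because at the vertices a linear form in $\zeta_a$ on $[-1,1]^{n_a}$ could exceed its vertex max by a factor $n_a$. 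So we use $Y_a^-\le\log n_a$ and handle it via a truncated version. First we try the cleaner route: Markov on $\sum_a Y_a^+$, using $\Ex[Y_a^+\mid U_{<a}]\le 1+O(1)$, which follows from $\Ex\log^-$-type bounds in the one-block computation. Since $\log|U+\beta'|\ge$ its $\beta'=0$ version in the stochastic-order sense, $Y_a^+$ is dominated by $-\log|U|$, which has mean $1$. Markov then gives $\Pr[\sum_a Y_a^+>4r]\le 1/4$. On the complementary event, $\log|D(U)|\ge\log D_{\max}-4r$.
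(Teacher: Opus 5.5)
Your final argument is correct and is essentially the paper's proof. It uses the same exposure process $G_a$ and the same one-block anti-concentration: your stochastic domination of $Y_a^+$ by $-\log|U|$, which has mean $1$, is exactly the paper's bound $\Ex[(\log(\|b\|_\infty/|b\cdot U|))_+]\le 1$. You then sum over blocks via $(\sum_a Y_a)_+\le\sum_a Y_a^+$ and finish with Markov at threshold $4r$, as the paper does. Two side remarks: the detour through $\Ex\log|D(U)|$ is unnecessary, and the claim that $G_a\le G_{a-1}$ always holds is false (only $G_a\le n_aG_{a-1}$ holds), though your final Markov step on $\sum_a Y_a^+$ never uses it.
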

\begin{lemma}
\label{lem:second-terminal}
Any $U$ can be rounded efficiently to a vertex $\zeta$ such that $|D(\zeta)|\ge |D(U)|\prod_{a=1}^r n_a^{-1}$.
\end{lemma}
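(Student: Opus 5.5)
The statement to prove is Lemma~\ref{lem:second-terminal}. We will round one block at a time, using that $D(\zeta)$ is linear in each block $\zeta_a$ when the other blocks are held fixed. We process the parts $a=1,\ldots,r$ in order, keeping the invariant that after processing part $a$, blocks $1,\ldots,a$ are vertices and blocks $a+1,\ldots,r$ still equal $U_{a+1},\ldots,U_r$. We also maintain
\[
|D(\zeta^{(a)})|\ge |D(U)|\prod_{b\le a} n_b^{-1}.
\]
At the start, $\zeta^{(0)}=U$.

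\textbf{Single step.} Fix all blocks except block $a$. The map $\zeta_a\mapsto D(\zeta)$ is then a linear functional $\zeta_a\mapsto \sum_{i\in P_a}\zeta_{a,i}\,c_i$, where $c_i$ is the determinant obtained by placing $v_{a,i}$ in column $a$. The current value of block $a$ is $U_a$, with every $|U_{a,i}|\le 1$. By the triangle inequality,
\[
|D(\zeta^{(a-1)})|=\Big|\sum_i U_{a,i}c_i\Big|\le \sum_i |c_i|\le n_a\max_i|c_i|.
\]
Choose $i^\star=\arg\max_i|c_i|$ and set block $a$ to the vertex $e_{i^\star}$. The absolute value of $D$ then drops by at most a factor of $n_a$, which preserves the invariant. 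After $r$ steps we have a vertex $\zeta$ with $|D(\zeta)|\ge |D(U)|\prod_a n_a^{-1}$.

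\textbf{Efficiency.} Each step computes $n_a$ determinants of $r\times r$ matrices, each a linear combination built from the fixed blocks. In total this is $\sum_a n_a=n$ determinant evaluations, which is polynomial.

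No step is a real obstacle. The only point to check is that the box $[-1,1]$ bound gives $\sum_i|U_{a,i}||c_i|\le\sum_i|c_i|$. This is where the non-feasibility of $U$ is harmless: we never need $U_a$ to lie in the simplex, only in the box. One could sharpen the loss using $\|U_a\|_1$, but the stated bound needs only $\|U_a\|_\infty\le1$.
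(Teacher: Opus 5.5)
Your proposal is correct and matches the paper's proof: you greedily round one block at a time, write the current determinant by multilinearity as $\sum_i U_{a,i}c_i$, and choose the vertex maximizing $|c_i|$, which loses at most a factor $n_a$ per block because $|U_{a,i}|\le 1$. Your remarks on efficiency ($n$ determinant evaluations in total) and on needing only the box bound rather than simplex feasibility are also accurate.
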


We need the following anti-concentration fact.
\begin{lemma}
\label{lem:one-block-uniform}
Let $U=(u_1,\ldots,u_m)$ be a random vector, with the $u_i$ uniform in $[-1,1]$ and independent.  For
any nonzero $b\in\R^m$, we have that
\[
        \Ex\left[(\log(\|b\|_\infty/|b\cdot U|))_+\right]\le1.
\]
\end{lemma}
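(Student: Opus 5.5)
The plan is to reduce to one dimension and then use the explicit density of a sum of independent bounded uniforms. Without loss of generality, rescale $b$ so that $\|b\|_\infty=1$; the quantity $(\log(\|b\|_\infty/|b\cdot U|))_+$ is invariant under scaling $b$. Relabel so that $|b_1|=1$. Condition on $u_2,\ldots,u_m$ and set $c:=\sum_{i\ge2}b_iu_i$. Then $b\cdot U=b_1u_1+c$. Given $c$, this is uniform on an interval of length $2$ (namely $[c-1,c+1]$ up to the sign of $b_1$). So its density is bounded by $1/2$ everywhere, independently of $c$.

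Next, we use the layer-cake formula:
\[
\Ex\big[(\log(1/|Y|))_+\big]=\int_0^\infty \Prb\big(|Y|<e^{-s}\big)\,ds,
\]
where $Y:=b\cdot U$. Since the conditional density of $Y$ is at most $1/2$, we have $\Prb(|Y|<\varepsilon\mid c)\le \min(1,\varepsilon)$, because the window $(-\varepsilon,\varepsilon)$ has length $2\varepsilon$. Averaging over $c$ gives the same unconditional bound, and therefore
\[
\Ex\big[(\log(1/|Y|))_+\big]\le \int_0^\infty e^{-s}\,ds=1.
\]
Equivalently, one can compute directly: for any interval of length $2$, the worst case is the one centered at $0$, where $\int_{-1}^{1}\tfrac12(\log(1/|y|))_+\,dy=\int_0^1\log(1/y)\,dy=1$. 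This gives exactly the bound $1$.

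The only point needing care is that the bound must be uniform in $c$. This follows because the function $y\mapsto(\log(1/|y|))_+$ is symmetric and decreasing in $|y|$, so its integral against a density bounded by $1/2$ is maximized by the uniform density on $[-1,1]$ (a bathtub/rearrangement argument). Apart from this, the argument has no real obstacle: it is a routine anti-concentration estimate obtained by exposing the single largest coordinate of $b$.
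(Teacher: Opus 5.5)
Your proof is correct and follows essentially the same route as the paper: you condition on all coordinates except one attaining $\|b\|_\infty$, obtain the anti-concentration bound $\Prb\{|b\cdot U|\le \eta\|b\|_\infty\}\le\eta$ from the conditional uniform density, and integrate the resulting $e^{-s}$ tail via the layer-cake formula. Your alternative bathtub computation is a valid extra check, and it also yields the bound of exactly $1$.
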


\begin{proof}
We first show that for every
$0<\eta\le1$,
\begin{equation}
\label{eq:anti-conc-1}
      \Prb\{|b\cdot U|\le \eta \|b\|_\infty \}\le \eta.
\end{equation}
Let $j$ be such that $|b_j|=\|b\|_\infty$.  Condition on all the $U_i$ except $U_j$.  Then
        $b\cdot U=b_jU_j+C$
for a fixed $C$, and the event $|b_jU_j+C|\le\eta|b_j|$ forces $U_j$ to
lie in an interval of length at most $2\eta$ in $[-1,1]$. 
Thus the conditional probability is at most $\eta$, and the same
bound holds unconditionally.  

Setting $\eta=e^{-s}$, \eqref{eq:anti-conc-1} can be written as 
        $\Prb\left\{(\log (\|b\|_\infty)/|b\cdot U|)_+\ge s\right\}
        \le e^{-s}$, which upon
integrating gives the claimed expectation bound.
\end{proof}


\begin{proof}{(Lemma \ref{lem:block-exposure}).}
For $t=0,1,\ldots,r$, consider the exposure process
\[
        G_t(U_1,\ldots,U_t)
        :=\max_{\text{future vertex blocks }\zeta_{t+1},\ldots,\zeta_r}
        |D(U_1,\ldots,U_t,\zeta_{t+1},\ldots,\zeta_r)|.
\]
Then $G_0=D_{\max}$ and $G_r=|D(U)|$.

Fix $t$ and condition on the already exposed prefix $U_1,\ldots,U_{t-1}$.  Choose vertices in blocks
$t,t+1,\ldots,r$ attaining $G_{t-1}$.  Keep the future vertices
$\zeta_{t+1},\ldots,\zeta_{r}$ fixed, but let block $t$ vary as $x$. So
\[
       L_t(x)= D(U_1,\ldots,U_{t-1},x,\zeta_{t+1},\ldots,\zeta_r)
        =\sum_{i\in E_t}b_ix_i.
\]
for some fixed $b_i$.
Note that $L_t(e_i)=b_i$ $x=e_i$, for all $i \in [n_t]$. Let $j$ be such that
$|b_j|= \|b\|_\infty = G_{t-1}$. When $x$ is replaced by the random block $U_t$, the value $G_t$ is
at least $|L_t(U_t)|$, because $G_t$ still maximizes over all future vertex
blocks and may use the fixed future vertices chosen above.  So by
Lemma~\ref{lem:one-block-uniform} 
\[
        \Ex\left[\left(\log(G_{t-1}/G_t)\right)_+
        \Bigm| U_1,\ldots,U_{t-1}\right]\le1.
\]
Summing over $t$ gives $\Ex\left[\left(\log(D_{\max}/|D(U)|)\right)_+\right]
        \le r$, and the result now follows by applying Markov's inequality.
\end{proof}

We now prove Lemma \ref{lem:second-terminal}
\begin{proof}{(Lemma \ref{lem:second-terminal}).}
After sampling $U$, consider the rounding algorithm that greedily rounds $U_t$ to  vertex, for each block one at a time. Let $D_t$ denote the determinant after time $t$.
Suppose
we have already rounding $U_1,\ldots,U_{t-1}$ to vertices $\zeta_1,\ldots \zeta_{t-1}$ and are currently rounding part $t$ (the blocks $t+1,\ldots, r$ are still at their sampled values). 
By multilinearity, $
        D_{t-1}=\sum_{i\in P_a}U_{t,i}c_i$,
for some $c_i$'s (independent of $U_t$). 

The algorithm tries all $i\in P_a$ and chooses the one maximizing
$|c_i|$, so that
        $|D_{t}|=\max_i|c_i|$.
As $|U_{t,i}|\le1$,
\[
        |D_{t-1}|
        \le\sum_{i\in P_t}|U_{t,i}||c_i|
        \le n_t\max_i|c_i|
        =n_t|D_{t}|.
\]
Thus rounding block $t$ loses at most a factor $m_t$, deterministically.
Multiplying over all blocks,
the final vertex $\zeta$ satisfies
        $|D(\zeta)|\ge |D(U)|\prod_{a=1}^rn_a^{-1}$.
        \end{proof}

%% file: small-k-no-highlights.tex
\section{The case of $k\le d$}
\label{sec:k-le-d}

We now explain the modifications needed when the number of chosen vectors is
$k\le d$. As before, we assume that there are $k$ parts, each with rank $1$.   When $k<d$, the matrix $\sum_{i\in B}v_iv_i^\top$ is singular,
so its determinant is zero.  The right objective is instead the squared
$k$-dimensional volume
\[
\operatorname{vol}_k(B)^2:=\det(V_B^\top V_B),
\]
where $V_B$ is the $d\times k$ matrix whose columns are the selected vectors.  If
$\operatorname{sym}_s(M)$ denotes the $s$th elementary symmetric polynomial of the
eigenvalues of a positive semidefinite matrix $M$, then
\[
\det(V_B^\top V_B)
=
\operatorname{sym}_k\Big(\sum_{i\in B}v_iv_i^\top\Big).
\]
Indeed, by Cauchy--Binet, both sides are the sum of squares of all $k\times k$
minors of $V_B$.

The proof for $k\le d$ follows the determinant case with one systematic change:
in a residual component with $s$ remaining parts, the determinant objective is
replaced by the degree-$s$ elementary-symmetric objective.  The ambient dimension
of the residual vectors may be larger than $s$, but the degree of the objective is
$s$.  All losses below are proportional to this residual degree, and hence sum to
$O(k)$ over all residual components.

\subsection{The $\operatorname{sym}_s$ scaling relaxation}

As before, for an instance with parts $P_1,\ldots,P_s$, define
$A_a(x):=\sum_{i\in P_a}x_iv_iv_i^\top$, with
$x(P_a)=1$.

We write $G_s$ for the $\operatorname{sym}_s$ analogue of the scaling relaxation:
\begin{equation}
\label{eq:sym-s-relaxation}
G_s(x)
:=
\inf_{z\in \mathbf 1^\perp}
\log \operatorname{sym}_s\Big(\sum_{a=1}^s e^{z_a}A_a(x)\Big),
\end{equation}
where $\mathbf 1^\perp=\{z\in\mathbb R^s:\sum_a z_a=0\}$. 
infimum is over positive scalars $y_a=e^{z_a}$ with $\prod_a y_a=1$.

We shall repeatedly use the following Cauchy--Binet expansion.  If
$w_i=e^{z_{a(i)}/2}v_i$, where $a(i)$ is the part containing $i$, then
\begin{equation}
\label{eq:sym-cb}
\operatorname{sym}_s\Big(\sum_i x_iw_iw_i^\top\Big)
=
\sum_{T:|T|=s}\det(W_T^\top W_T)\prod_{i\in T}x_i .
\end{equation}
Here $W_T$ is the matrix with columns $w_i$, $i\in T$.  If the columns in $T$ are
linearly dependent, the determinant coefficient is zero. We denote $M:=\sum_i x_iw_iw_i^\top$.

The following basic facts are the direct analogues of the determinant case.

\begin{lemma}[Basic properties of $G_s$]
\label{lem:sym-basic}
The relaxation \eqref{eq:sym-s-relaxation} is exact on integral transversals:
if $B=\{i_a\in P_a:a\in[s]\}$, then
\[
G_s(1_B)=\log\det(V_B^\top V_B).
\]
Moreover, $G_s$ is concave in $x$.  Finally, at any nondegenerate point $x$ where
the inner minimizer $z(x)$ exists, the $\operatorname{sym}_s$-leverage scores
\[
\ell_i:=D\log \operatorname{sym}_s(M)[w_iw_i^\top]
\]
satisfy the partwise balance identity
$
\sum_{i\in P_a}x_i\ell_i=1$ for every $a\in[s].
$
\end{lemma}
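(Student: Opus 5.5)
The plan is to prove the three claims of Lemma~\ref{lem:sym-basic} separately, each time working through the Cauchy--Binet expansion \eqref{eq:sym-cb}. This expansion writes $\operatorname{sym}_s(M)$ as a nonnegative combination of monomials $\prod_{i\in T}x_i$ with coefficients $\det(W_T^\top W_T)=\det(V_T^\top V_T)\,e^{\ip{z}{R(T)}}$. Here $R(T)$ is the part-count vector exactly as in Section~\ref{subsec:probabilistic-derivative-dictionary}.

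\emph{Exactness.} For $x=1_B$, only $T=B$ survives, since $|T|=s$ and $T\subseteq B$ force $T=B$. Because $B$ is a transversal, $R(B)=\one$, so the coefficient is $\det(V_B^\top V_B)\,e^{\ip{z}{\one}}=\det(V_B^\top V_B)$ for every $z\in\one^\perp$. The infimum is therefore attained trivially with the stated value. (If $\det(V_B^\top V_B)=0$, both sides are $-\infty$.)

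\emph{Concavity.} Use the standard fact that $\operatorname{sym}_s^{1/s}$ is concave on the PSD cone; this is an $s$-hyperbolic / G\r{a}rding-type property, or equivalently follows from the concavity of $\log\operatorname{sym}_s$ on the PSD cone. Hence $x\mapsto\log\operatorname{sym}_s(\sum_a e^{z_a}A_a(x))$ is concave for each fixed $z$, since the argument is affine in $x$. $G_s$ is an infimum of concave functions and is therefore concave. Alternatively, as in the determinant case, one can note that $\psi$ is concave in $x$ via the Hessian formula, but appealing to hyperbolicity of $\operatorname{sym}_s$ is cleaner.

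\emph{Partwise balance.} Let $\psi_s(x,z)=\log\operatorname{sym}_s(M(x,z))$. Since $\partial_{z_a}M=\sum_{i\in P_a}x_iw_iw_i^\top$, linearity of the differential gives
\[
\frac{\partial\psi_s}{\partial z_a}=D\log\operatorname{sym}_s(M)\Big[\sum_{i\in P_a}x_iw_iw_i^\top\Big]=\sum_{i\in P_a}x_i\ell_i .
\]
Via \eqref{eq:sym-cb}, this also equals $\Ex_{\mu}R_a$ for the $\operatorname{sym}_s$ Cauchy--Binet distribution. At the interior minimizer over $\one^\perp$, the Lagrange condition (as in Lemma~\ref{lem:partwise-balance}) makes all these partials equal. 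Their sum is $\Ex\sum_a R_a(T)=|T|=s$; equivalently, by Euler's identity for the degree-$s$ homogeneous polynomial, $\sum_a\partial_{z_a}\psi_s=D\log\operatorname{sym}_s(M)[M]=s$. Hence each partial equals $1$.

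The only nonroutine step is concavity. I would cite hyperbolicity of elementary symmetric polynomials, since they are derivatives of $\det$ in direction $I$, instead of reproving it.
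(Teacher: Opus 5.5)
Your proposal is correct and follows the paper's proof essentially step for step. Exactness holds because product-one scalings leave each integral transversal's term unchanged (you make this explicit via Cauchy--Binet with $R(B)=\one$). Concavity comes from concavity of $\log\operatorname{sym}_s$ on the PSD cone, composed with a map affine in $x$ and then an infimum over $z$. Balance follows from the Lagrange condition on $\one^\perp$ together with degree-$s$ homogeneity, which forces each $\partial_{z_a}\psi_s=\sum_{i\in P_a}x_i\ell_i$ to equal $1$. Your citation of hyperbolicity for the concavity of $\operatorname{sym}_s$, and the equivalent count $\Ex\sum_a R_a=s$, add justification but are not a different route.
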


\begin{proof}
Exactness follows because product-one scalings multiply every integral
transversal by total factor one.  Concavity follows from the concavity of
$M\mapsto\log\operatorname{sym}_s(M)$ on the positive semidefinite cone, the
affineness of $\sum_a e^{z_a}A_a(x)$ in $x$ for fixed $z$, and preservation of
concavity under taking the infimum over $z$.

For the balance identity, differentiate the inner objective with respect to
$z_a$ at the minimizer.  The derivative is
\[
D\log\operatorname{sym}_s(M)\Big(\sum_{i\in P_a}x_iw_iw_i^\top\Big)
=
\sum_{i\in P_a}x_i\ell_i.
\]
At a constrained minimizer over $\mathbf 1^\perp$, these $s$ derivatives are all
equal.  Their sum is $s$, since $\log\operatorname{sym}_s$ is homogeneous of
degree $s$.  Hence each derivative equals $1$.
\end{proof}

The degeneracy and splitting discussion from Section~\ref{sec:nondegenerate-supports} is unchanged. In particular, exact singleton contractions and exact
tight-set splits have the same geometric meaning as before: if a set $S$ of
parts spans a subspace of dimension $|S|$, the chosen vectors from $S$
contribute the base volume in that subspace, and the remaining chosen vectors
contribute the height after orthogonal projection.

\subsection{The calculus via the Cauchy--Binet distribution}
We now give the analogue of Section~\ref{subsec:probabilistic-derivative-dictionary}.  

Fix
a nondegenerate point $x$ and balance at its inner minimizing scaling. Let us denote
$Z_x:=\operatorname{sym}_s(M)$.
Define a probability distribution $\mu_x$ on $s$-subsets by
\begin{equation}
\label{eq:sym-cb-distribution}
\mu_x(T)
:=
\frac{\det(W_T^\top W_T)\prod_{i\in T}x_i}{Z_x}.
\end{equation}
As before, for $T\sim\mu_x$, let $
R_a(T):=|T\cap P_a|$ and for a tangent direction $\delta$, set
\[
\gamma_i:=\frac{\delta_i}{x_i},
\qquad
\sigma_\delta(T):=\sum_{i\in T}\gamma_i,
\qquad
A_{2,\delta}(T):=\sum_{i\in T}\gamma_i^2.
\]

\begin{lemma}
\label{lem:sym-derivative-identities}
At a nondegenerate point $x$, the following identities hold:
\begin{enumerate}
\item $x_i\ell_i=\Pr_{\mu_x}[i\in T]$ for every coordinate $i$, and
$\mathbb E_{\mu_x}R_a=1$ for every part $a\in[s]$.
\item $DG_s(x)[\delta]=\mathbb E_{\mu_x}\sigma_\delta$.  For fixed $z$,
$
D_{xx}^2\log\operatorname{sym}_s(M(x))[\delta,\delta]
=
\operatorname{Var}_{\mu_x}(\sigma_\delta)-\mathbb E_{\mu_x}A_{2,\delta}.$
\item The mixed derivatives and the $z$-Hessian are
$
\eta_a(\delta):=\operatorname{Cov}_{\mu_x}(R_a,\sigma_\delta),
$ and 
$\Gamma_{ab}:=\operatorname{Cov}_{\mu_x}(R_a,R_b)$.
\item Define $
\mathcal B_x(\delta)
:=
\mathbb E_{\mu_x}A_{2,\delta}-\operatorname{Var}_{\mu_x}(\sigma_\delta).$
Then $\mathcal B_x(e_i)=\ell_i^2$, and
\[
D^2G_s(x)[\delta,\delta]
=
-\mathcal B_x(\delta)-\eta(\delta)^\top\Gamma^{-1}_{\mathbf 1^\perp}\eta(\delta).
\]
In particular, if $\eta(\delta)=0$, then
$D^2G_s(x)[\delta,\delta]=-\mathcal B_x(\delta)$.
\end{enumerate}
\end{lemma}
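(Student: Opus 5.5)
The whole argument runs through the Cauchy--Binet expansion \eqref{eq:sym-cb}, with $z$ treated as a free variable. Write
\[
Z(x,z):=\operatorname{sym}_s(M)=\sum_{|T|=s}\det(V_T^\top V_T)\,e^{\langle R(T),z\rangle}\prod_{i\in T}x_i .
\]
This is an exponential family in $z$ and multilinear in $x$, so each identity is a derivative of $\log Z$. The steps mirror Lemma~\ref{lem:prob-derivatives}.

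\textbf{Items 1 and 2.} Differentiate $\log Z$ in $x_i$. Since $\partial_{x_i}M=w_iw_i^\top$, the definition of $\ell_i$ gives $\ell_i=\partial_{x_i}\log Z=\sum_{T\ni i}\mu_x(T)/x_i$. Hence $x_i\ell_i=\Pr_{\mu_x}[i\in T]$. Summing over $i\in P_a$ gives $\mathbb E_{\mu_x}R_a=\sum_{i\in P_a}x_i\ell_i$, which equals $1$ by Lemma~\ref{lem:sym-basic}.

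It follows that $D_x\log Z[\delta]=\sum_i\gamma_i\Pr[i\in T]=\mathbb E\sigma_\delta$. Because $z(x)$ is a minimizer over $\one^\perp$, the envelope argument in Lemma~\ref{lem:F-derivatives} applies: $\nabla_z\log Z=\mathbb E R=\one$ is orthogonal to $\dot z\in\one^\perp$. So $DG_s(x)[\delta]=\mathbb E\sigma_\delta$.

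For the fixed-$z$ Hessian, note that $\log Z(x+t\delta)=\log\mathbb E_\mu\prod_{i\in T}(1+t\gamma_i)$, since each $T$ contains distinct indices and $Z$ is multilinear. Expanding, $\prod_{i\in T}(1+t\gamma_i)=1+t\sigma_\delta+t^2e_2(\gamma_T)+O(t^3)$, where $e_2(\gamma_T)=(\sigma_\delta^2-A_{2,\delta})/2$. Taking $\log$ of the expectation, the $t^2$ coefficient is $\mathbb E e_2-\tfrac12(\mathbb E\sigma)^2$. So the second derivative is $\operatorname{Var}(\sigma_\delta)-\mathbb E A_{2,\delta}$.

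\textbf{Item 3.} The $z$-derivatives of $\log Z$ are the cumulants of $R$ under the tilt $e^{\langle R,z\rangle}$. This gives $\partial_{z_a}\log Z=\mathbb E R_a$ and $\Gamma_{ab}=\operatorname{Cov}(R_a,R_b)$. Differentiating $\mathbb E R_a=\sum_T R_a(T)\mu(T)$ in $x$ along $\delta$ uses $D_x\log\mu(T)[\delta]=\sigma_\delta(T)-\mathbb E\sigma_\delta$. Therefore $\eta_a(\delta)=\mathbb E[R_a(\sigma_\delta-\mathbb E\sigma_\delta)]=\operatorname{Cov}(R_a,\sigma_\delta)$.

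\textbf{Item 4.} Take the coordinate direction $\delta=e_i$, so $\gamma_i=1/x_i$ and all other $\gamma_j=0$. Then $\sigma=\one_{i\in T}/x_i$ and $A_2=\one_{i\in T}/x_i^2$. Writing $\pi_i=x_i\ell_i=\Pr[i\in T]$,
\[
\mathcal B_x(e_i)=\frac{\pi_i-\pi_i(1-\pi_i)}{x_i^2}=\frac{\pi_i^2}{x_i^2}=\ell_i^2 .
\]
The Hessian formula for $G_s$ is then the implicit-function computation of Lemma~\ref{lem:F-derivatives}, repeated verbatim. Differentiate $\Pi_{\one^\perp}\nabla_z\log Z=0$ to get $\eta+\Gamma\dot z=0$. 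This requires $\eta\in\one^\perp$: indeed $\sum_a R_a=s$ is constant, so $\sum_a\eta_a=\operatorname{Cov}(s,\sigma_\delta)=0$. It also requires $\Gamma\one=\operatorname{Cov}(R,\sum_b R_b)=0$. Substituting into the second-order chain rule gives $D^2G_s=-\mathcal B_x(\delta)-\eta^\top\Gamma^{-1}_{\one^\perp}\eta$.

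\textbf{Main obstacle.} The one nonroutine ingredient is invertibility of $\Gamma$ on $\one^\perp$ at nondegenerate points, which the $\operatorname{sym}_s$ setting needs. I would take this from the unchanged degeneracy discussion, i.e.\ the analogue of Proposition~\ref{prop:finiteness-balanced-scaling}. The rest is bookkeeping, but it requires care that $\ell_i$ here, defined via $D\log\operatorname{sym}_s$, matches the Cauchy--Binet marginal, which is exactly the item 1 computation.
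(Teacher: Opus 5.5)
Your proposal is correct and follows essentially the same route as the paper. You expand $\operatorname{sym}_s(M)$ by Cauchy--Binet, read off $x_i\ell_i$, $DG_s$ (via the envelope argument), the fixed-$z$ Hessian, $\eta$ and $\Gamma$ as moments of $\mu_x$, compute $\mathcal B_x(e_i)$ directly, and then repeat the implicit-function computation of Lemma~\ref{lem:F-derivatives}. You are slightly more explicit than the paper in checking $\eta(\delta)\perp\one$ and $\Gamma\one=0$ from $\sum_a R_a(T)=s$, and, like the paper, you rely on the degeneracy discussion for the invertibility of $\Gamma$ on $\one^\perp$. One harmless slip: the left side of your expansion should read $\log Z(x+t\delta)-\log Z(x)$, not $\log Z(x+t\delta)$.
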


\begin{proof}
Differentiating \eqref{eq:sym-cb} with respect to $x_i$ selects exactly the
terms containing $i$.  This gives
\[
x_i\ell_i
=
\frac{\sum_{T\ni i}\det(W_T^\top W_T)\prod_{j\in T}x_j}{Z_x}
=
\Pr[i\in T].
\]
Summing this identity over $i\in P_a$ and using Lemma~\ref{lem:sym-basic} gives
$\mathbb E R_a=1$.

For the directional derivative, write
\[
\prod_{i\in T}(x_i+t\delta_i)
=
\prod_{i\in T}x_i\prod_{i\in T}(1+t\gamma_i).
\]
Thus
\[
\log\operatorname{sym}_s(M(x+t\delta))
=
\log Z_x+
\log\mathbb E_{\mu_x}\prod_{i\in T}(1+t\gamma_i).
\]
Differentiating once at $t=0$ gives the identity for fixed $z$; by the
envelope theorem, this is also $DG_s(x)[\delta]$ at the inner minimizer.  Differentiating twice at fixed $z$ gives the second-derivative identity, because
\[
\left.\frac{d^2}{dt^2}\right|_{t=0}\prod_{i\in T}(1+t\gamma_i)
=
\sigma_\delta(T)^2-A_{2,\delta}(T).
\]
The positive semidefiniteness of $\mathcal B_x$ is the concavity of
$\log\operatorname{sym}_s$ in the matrix direction.

For the diagonal identity, take $\delta=e_i$.  Then
$\sigma_\delta=x_i^{-1}{\bf 1}_{i\in T}$ and
$A_{2,\delta}=x_i^{-2}{\bf 1}_{i\in T}$.  Using
$\Pr[i\in T]=x_i\ell_i$ gives
$
\mathbb E A_{2,\delta}-\operatorname{Var}(\sigma_\delta)
=
\ell_i^2.$

The formulas for $\eta$ and $\Gamma$ are obtained by differentiating the same
log-sum-exp in the $x$ and $z$ directions; the $z$ variables insert the part counts
$R_a$.  The calculation for the optimized Hessian is exactly the
same as in before, with $\mathcal B_x$ replacing
$\operatorname{tr}(C(\delta)^2)$.
\end{proof}

\subsection{Compression Argument}

We now describe the compression argument.  We use the same capped potential as in
Section~\ref{sec:continuous-compression} and set
$\Phi(x)=\sum_i p(x_i)$, and 
$L(x)=G_s(x)+\Phi(x)$ on a rank-$s$ residual component.  Since $\sum_i x_i=s$, the
range of $\Phi$ is $
-s\le \Phi(x)\le 0.$
\begin{lemma}[First-order progress and leverage control for $G_s$]
\label{lem:sym-first-order-progress}
For any nondegenerate point $x$ in a rank-$s$ residual component, either there is
a unit tangent direction $\delta$ with $DL(x)[\delta]>1/\sqrt2$, or else
$\ell_i\le 3$ for every active coordinate $i$.
\end{lemma}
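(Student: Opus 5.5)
The plan is to repeat the proof of Lemma~\ref{lem:first-order-gap-or-leverage-control} verbatim, with $F$ replaced by $G_s$. The only new ingredients are the gradient of $G_s$ and the partwise balance identity, and both are supplied by Lemmas~\ref{lem:sym-basic} and~\ref{lem:sym-derivative-identities}.

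\textbf{Gradient of $L$.} By Lemma~\ref{lem:sym-derivative-identities}(2), at a nondegenerate $x$ we have $DG_s(x)[\delta]=\mathbb E_{\mu_x}\sigma_\delta$. By part (1), this equals $\sum_i (\delta_i/x_i)\Pr[i\in T]=\sum_i\delta_i\ell_i$. Hence the partial derivative of $L=G_s+\Phi$ in coordinate $i$ is
\[
g_i=\ell_i+p'(x_i),
\]
and $DL(x)[\delta]=\sum_i\delta_i g_i$ for every tangent direction $\delta$. This is exactly the formula used in the determinant case. The envelope argument is what makes this work: since $z(x)$ is the inner minimizer, the $z$-variation does not contribute at first order.

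\textbf{Case 1: a large gap inside a part.} Suppose some part $P_a$ contains active coordinates $i,j$ with $g_i-g_j>1$. Take $\delta=(e_i-e_j)/\sqrt2$. This direction preserves every constraint $x(P_b)=1$, it is a unit vector, and it gives $DL(x)[\delta]>1/\sqrt2$.

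\textbf{Case 2: all gaps are small.} Otherwise, within each part all active $g_i$ differ by at most $1$. Lemma~\ref{lem:sym-basic} gives the balance identity $\sum_{i\in P_a}x_i\ell_i=1$, so
\[
\sum_{i\in P_a}x_ig_i=1+\sum_{i\in P_a}x_ip'(x_i)\le 1,
\]
using $p'\le 0$. Because $x(P_a)=1$, this is an $x$-weighted average of the active $g_i$ in $P_a$. So some active $g_j\le 1$, and therefore every active $g_i\le 2$. Finally, $p'\ge -1$ gives $\ell_i=g_i-p'(x_i)\le 3$.

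\textbf{Main obstacle.} The only genuinely new step is identifying $\partial_{x_i}G_s$ with the $\operatorname{sym}_s$-leverage $\ell_i$. This relies on the envelope theorem at the inner minimizer together with the identity $x_i\ell_i=\Pr[i\in T]$. It is already stated in Lemma~\ref{lem:sym-derivative-identities}, so everything else is routine.
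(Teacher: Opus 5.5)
Your proposal is correct and takes essentially the same route as the paper: either a within-part gradient gap exceeding $1$ yields the direction $(e_i-e_j)/\sqrt2$, or the $x$-weighted average argument (balance identity plus $p'\in[-1,0]$) gives $\ell_i\le 3$. Your extra step, identifying $\partial_{x_i}G_s=\ell_i$ through the envelope theorem and $x_i\ell_i=\Pr[i\in T]$, simply makes explicit what the paper's short proof takes for granted.
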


\begin{proof}
We the balance identity from Lemma~\ref{lem:sym-basic}.  If two active
coordinates in the same part have $L$-gradients differing by more than $1$, their
difference gives the desired first-order direction.  Otherwise, the gradients in
each part differ by at most $1$.  The $x$-weighted average of the gradients in
part $P_a$ is
\[
\sum_{i\in P_a}x_i(\ell_i+p'(x_i))
=
1+
\sum_{i\in P_a}x_ip'(x_i)
\le 1.
\]
Since $p'\in[-1,0]$, every active coordinate has $\ell_i\le 3$.
\end{proof}

The finite rebalancing lemma from Section~\ref{sec:finite-dynamics} also transfers
verbatim, with $F$ replaced by $G_s$.  Thus a small rebalancing step gives the
same leverage bound $\ell_i\le3$, while a large rebalancing step gives a fixed
increase in $L$.

Let
$I_{\rm sm}(x):=\{i:\tau<x_i\le 1/(2H)\}$
be the set of small coordinates above the deletion floor.  On the coordinate
space supported on $I_{\rm sm}(x)$, the trace of the quadratic form
$\mathcal B_x$ is
\[
\sum_{i\in I_{\rm sm}(x)}\mathcal B_x(e_i)
=
\sum_{i\in I_{\rm sm}(x)}\ell_i^2
\le
9|I_{\rm sm}(x)|.
\]
The part constraints $\delta(P_a)=0$ and the null-Schur constraints
$\eta(\delta)=0$ impose at most $s+(s-1)$ independent linear equations.  Therefore,
if $|I_{\rm sm}(x)|>4s$, there is a unit vector $\delta$, supported on
$I_{\rm sm}(x)$, satisfying $\delta(P_a)=0$ for every part, $\eta(\delta)=0$, and
$\mathcal B_x(\delta)\le18$.  Choosing the sign so that $DL(x)[\delta]\ge0$, the
positive curvature $H$ of the potential on small coordinate
The finite null-Schur estimate also transfers directly.

\begin{lemma}
\label{lem:sym-finite-null-schur}
Let $x$ be nondegenerate, and let $\delta$ be a feasible direction satisfying
$\delta(P_a)=0$ for every part and $\eta(\delta)=0$.  Let
$K_0:=\max_{i\in\operatorname{supp}(x)} |\delta_i|/x_i$.  Then, for
$|t|\le 1/(4sK_0)$,
\[
G_s(x+t\delta)
\ge
G_s(x)+tDG_s(x)[\delta]
-\frac{t^2}{2}\mathcal B_x(\delta)
-O(s^3K_0^3|t|^3).
\]
\end{lemma}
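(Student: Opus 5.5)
The plan is to repeat the proof of Lemma~\ref{lem:finite-null-schur-estimate-balanced} almost verbatim, using the Cauchy--Binet expansion \eqref{eq:sym-cb} in place of the determinant expansion. Let $z=z(x)$ be the balanced scaling at $x$ and take an arbitrary perturbed scaling $z+s$ with $s\in\one^\perp$. Multilinearity of \eqref{eq:sym-cb} in the weights gives $w_i\mapsto e^{s_{a(i)}/2}w_i$ and $x_i\mapsto x_i(1+t\gamma_i)$. Each coefficient $\det(W_T^\top W_T)$ therefore acquires the factor $\prod_{i\in T}e^{s_{a(i)}}=e^{\ip{s}{R(T)}}$. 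This yields the exact identity
\[
G_s(x+t\delta)-G_s(x)=\inf_{s\in\one^\perp}\log\Ex_{\mu_x}\Big[\exp\Big(u_t(T)+\ip{s}{R(T)-\one}\Big)\Big],\qquad u_t(T)=\sum_{i\in T}\log(1+t\gamma_i).
\]
Here we used $\ip{s}{\one}=0$ and the fact that $G_s(x)=\log Z_x$ at the minimizer. The only new point to check is that every set in the support of $\mu_x$ has exactly $s$ elements. This holds because $\operatorname{sym}_s$ has degree $s$, so $R(T)$ is a nonnegative integer vector with total sum $s$, and $\sigma_\delta(T)$ is a sum of exactly $s$ terms $\gamma_i$.

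Next I would introduce the tilt $d\nu/d\mu_x=1+tX$ with $X=\sigma_\delta-\Ex_{\mu_x}\sigma_\delta$. Since $|\gamma_i|\le K_0$ and $|T|=s$, we have $|X|\le 2sK_0$, so the condition $|t|\le 1/(4sK_0)$ keeps the density in $[1/2,3/2]$. The balance condition $\Ex_\nu R=\one$ follows from $\Ex_{\mu_x}R=\one$ (Lemma~\ref{lem:sym-derivative-identities}, item 1) together with $\operatorname{Cov}_{\mu_x}(R,\sigma_\delta)=\eta(\delta)=0$ (item 3). Lemma~\ref{lem:balanced-certificate} is stated for arbitrary finite distributions, so it applies unchanged and gives $G_s(x+t\delta)-G_s(x)\ge\Ex_\nu u_t-D(\nu\|\mu_x)$.

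The remaining work is the Taylor bookkeeping. Using $|t\gamma_i|\le 1/4$, write $u_t=t\sigma_\delta-\tfrac{t^2}{2}A_{2,\delta}+O(sK_0^3|t|^3)$. This gives
\[
\Ex_\nu u_t=t\Ex\sigma_\delta-\tfrac{t^2}{2}\Ex A_{2,\delta}+t^2\operatorname{Var}(\sigma_\delta)+O(s^3K_0^3|t|^3).
\]
The cross term $t\,\Ex[uX]$ is controlled using $|\sigma_\delta|\le sK_0$, $A_{2,\delta}\le sK_0^2$ and $|X|\le 2sK_0$. Expanding $(1+y)\log(1+y)$ with $y=tX$ gives $D(\nu\|\mu_x)=\tfrac{t^2}{2}\operatorname{Var}(\sigma_\delta)+O(s^3K_0^3|t|^3)$. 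Subtracting these, the second-order coefficient becomes $\tfrac{t^2}{2}(\operatorname{Var}\sigma_\delta-\Ex A_{2,\delta})=-\tfrac{t^2}{2}\mathcal B_x(\delta)$, and item 2 identifies $\Ex\sigma_\delta=DG_s(x)[\delta]$, which proves the stated bound.

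I expect the main obstacle to be confirming that no step of the original proof relied on the special structure of $\det$, as opposed to $\operatorname{sym}_s$. The two places to check are the exact multilinear identity above and the fixed size $|T|=s$. Both follow from \eqref{eq:sym-cb}, so they should go through, and the constants in the cubic error term transfer with $r$ replaced by $s$.
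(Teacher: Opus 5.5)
Your proposal is correct and follows the paper's route: the paper's proof just observes that the balanced-certificate argument of Lemma~\ref{lem:finite-null-schur-estimate-balanced} uses only the Cauchy--Binet distribution, $\mathbb{E}R=\one$, $\operatorname{Cov}(R,\sigma_\delta)=\eta(\delta)$, and $|\sigma_\delta(T)|\le sK_0$, and you verify exactly these inputs and redo the same tilt and Taylor bookkeeping. One cosmetic point: rename your perturbation vector $s\in\one^\perp$, because $s$ already denotes the degree and the number of parts.
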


\begin{proof}
The proof is the same balanced-certificate proof as Lemma~\ref{lem:finite-null-schur-estimate-balanced}.  The only inputs used there are the Cauchy--Binet distribution, the balanced identity $\mathbb E R=\mathbf 1$, the covariance identity
$\operatorname{Cov}(R,\sigma_\delta)=\eta(\delta)$, and the bound
$|\sigma_\delta(T)|\le sK_0$.  These are exactly the identities supplied by
Lemma~\ref{lem:sym-derivative-identities}.  No property of the ambient dimension is
used.
\end{proof}

The deletion step transfers as well.

\begin{lemma}[Small-coordinate deletion for $G_s$]
\label{lem:sym-small-deletion}
Suppose $x$ is nondegenerate, $x_i=\tau$, and $\ell_i\le3$.  Let $\bar x$ be
obtained by deleting coordinate $i$ and renormalizing its part.  Then
\[
G_s(\bar x)\ge G_s(x)-O(s^2\tau),
\qquad
L(\bar x)\ge L(x)-O(s^2\tau).
\]
In particular, for the choice $\tau=1/\operatorname{poly}(n,d)$ used in
Section~\ref{sec:finite-dynamics}, the total deletion loss over all deletions is
negligible compared with the final $O(k)$ log-loss.
\end{lemma}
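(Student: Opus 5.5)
The plan is to follow the proof of Lemma~\ref{lem:finite-small-deletion-balanced}, but to argue through the Cauchy--Binet distribution $\mu_x$ of \eqref{eq:sym-cb-distribution} rather than through matrix traces. The trace argument used $\det(I-R)$ and a normalization of the $B_b$, and neither is directly available for $\operatorname{sym}_s$. Fix the balanced scaling $z=z(x)$, and let $x'$ be $x$ with coordinate $i$ set to zero and nothing renormalized. For any $s'\in\one^\perp$, the Cauchy--Binet expansion \eqref{eq:sym-cb} at scaling $z+s'$ gives
\[
G_s(x')-G_s(x)=\inf_{s'\in\one^\perp}\log \Ex_{\mu_x}\Big[\mathbf 1_{i\notin T}\exp\bigl(\ip{s'}{R(T)-\one}\bigr)\Big].
\]
This is the same type of identity as \eqref{eq:finite-mgf-identity-balanced}, with $u(T)=\log \mathbf 1_{i\notin T}$ (equal to $-\infty$ on sets containing $i$). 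By Lemma~\ref{lem:sym-derivative-identities}, the removed mass is $\pi:=\Pr_{\mu_x}[i\in T]=x_i\ell_i=\tau\ell_i\le 3\tau$.

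I would apply the balanced certificate, Lemma~\ref{lem:balanced-certificate}, with a tilt $\nu$ of $\mu:=\mu_x$. The tilt must satisfy three conditions: (a) $\nu$ is supported on $\{T: i\notin T\}$; (b) $\Ex_\nu R=\one$; (c) $D(\nu\|\mu)=O(s^2\tau)$. Given these, $F$-type loss is at most $-\log$ of the certificate, i.e. $G_s(x')\ge G_s(x)-D(\nu\|\mu)$. The natural first attempt is $\mu$ conditioned on $i\notin T$, which has $D=-\log(1-\pi)=O(\tau)$. Its mean part-count vector, however, is
\[
\one-\frac{\pi}{1-\pi}\Bigl(\Ex[R\mid i\in T]-\one\Bigr),
\]
which is off from $\one$ by $O(\pi)$ in every coordinate, and at most $O(s\pi)$ in $\ell_1$. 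Coordinate $a(i)$ in particular drops by about $\pi$. To fix the mean I would further reweight by $1+\ip{\lambda}{R-\one}$ with $\lambda\in\one^\perp$, solving the linear system $\operatorname{Cov}(R,R)\lambda=\text{correction}$.

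The difficulty is that this needs the $z$-Hessian $\Gamma$ to be well conditioned on $\one^\perp$, which is exactly the instability the paper avoids. I would therefore use a correction that needs no inversion, built from the fact that $R_a(T)\ge 0$ for every $a$, with $\sum_a R_a=s$. The idea is to shift mass toward sets that overrepresent the deficient parts. Concretely, I would take a mixture
\[
\nu=(1-\epsilon)\,\mu(\cdot\mid i\notin T)+\epsilon\,\nu',
\]
where $\nu'$ is supported on $i\notin T$ and has $\Ex_{\nu'}R$ chosen to cancel the deficit. Candidates for $\nu'$ are $\mu$ conditioned on sets $T$ that are transversals, or that contain specific parts. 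The KL cost of the mixture is controlled by $\epsilon$ times $\log(d\nu'/d\mu)$. This requires a lower bound on $\mu$ of the relevant events, and I expect that lower bound to be the main obstacle.

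If the direct certificate fails, the fallback is a reduction to the matrix case. First I would rescale the instance so that its objective becomes a determinant, using the Grassmann lift of $\operatorname{sym}_s$ to a determinant on the span, or an appeal to the generic $s$-dimensional projection that preserves the $\operatorname{sym}_s$ coefficients up to averaging. I would then repeat the argument from Lemma~\ref{lem:finite-small-deletion-balanced}, using the matrix bound \eqref{eq:elementary-matrix-product-bound-expanded}, with the $\ell_1$ deviation $\|\alpha-\one\|_1=O(s\rho)$. This is where the claimed $O(s^2\tau)$, rather than $O(s\tau)$, would come from: a factor $s$ from the $\ell_1$ deviation, and a factor $s$ from the $(1-\epsilon/2)^s$ exponent.

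In either route the last two steps are the same. Renormalizing part $a$ by $1/(1-\tau)$ only increases $G_s$, by monotonicity of $\operatorname{sym}_s$ in the PSD order. For $\Phi$, zeroing $x_i$ does not decrease $\Phi$ since $p\le0$, and the added mass $\tau$ costs at most $\tau$ because $|p'|\le 1$. Together these give $L(\bar x)\ge L(x)-O(s^2\tau)$. Finally, at most $n$ deletions occur, each costing $O(s^2\tau)$, so with $\tau=1/\poly(n,d)$ the total deletion loss is negligible.
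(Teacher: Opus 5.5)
\textbf{There is a genuine gap: you never construct the balanced tilt, and that construction is the whole lemma.}

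What you have is sound and follows the paper's route:
\begin{enumerate}
\item the exact identity $G_s(x')-G_s(x)=\inf_{s'\in\one^\perp}\log\Ex_{\mu_x}[\mathbf 1_{i\notin T}e^{\ip{s'}{R-\one}}]$;
\item the removed mass $\pi=\tau\ell_i\le 3\tau$;
\item the reduction, via Lemma~\ref{lem:balanced-certificate}, to finding a tilt $\nu$ satisfying (a)--(c);
\item the closing bookkeeping for the renormalization and for $\Phi$.
\end{enumerate}

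Requirement (a)--(c) is not a sufficient condition stronger than needed; it is equivalent to the lemma. Write $\mu_0=\mu_x(\cdot\mid i\notin T)$ and $\Delta:=\Ex_{\mu_0}R-\one$. By convex duality the Gibbs bound is tight here: the loss before renormalization equals $-\log(1-\pi)$ plus the infimum of $D(\nu\|\mu_0)$ over $\nu$ with $\Ex_\nu R=\one$.

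In Lemma~\ref{lem:finite-null-schur-estimate-balanced}, balance of the tilt came for free from $\eta(\delta)=0$. Nothing plays that role for a deletion. The paper's own proof handles this step only by asserting that the certificate argument carries over.

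None of your candidates closes the step:
\begin{itemize}
\item The linear correction needs $\Gamma^{-1}$ on $\one^\perp$, which can be arbitrarily large near degeneracy. The density $1+\ip{\lambda}{R-\one}$ can then become negative.
\item In the mixture, $\mu$ conditioned on transversals has mean exactly $\one$. The mixture's mean is then $\one+(1-\epsilon)\Delta$, so the deficit shrinks but is never cancelled.
\item Cancelling with weight $\epsilon=O(\pi)$ needs a $\nu'$ whose mean is offset by about $-(1-\epsilon)\Delta/\epsilon$, in exactly the right proportions. Its KL price is $\epsilon\log(1/\mu(\text{event}))$. The hypotheses $\ell_i\le3$ and nondegeneracy give no lower bound on the $\mu$-mass of such events; for example, $\Pr_\mu[R_a\ge2]$ can be arbitrarily small.
\end{itemize}

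The fallback also fails as sketched:
\begin{itemize}
\item $\operatorname{sym}_s(M)=\tr(\wedge^s M)$ is not a determinant on the span, whose dimension can exceed $s$.
\item The identity $\operatorname{sym}_s(M)=\frac{1}{s!}\Ex_G\det(GMG^\top)$ does hold for Gaussian $G\in\R^{s\times d}$, but the infimum over $z$ does not commute with $\Ex_G$. Per-projection bounds $F_G(x')\ge F_G(x)-c$ would give only $G_s(x')\ge\log\Ex_G e^{F_G(x)}-c-\log s!$. Since $\log\Ex_G e^{F_G(x)}-\log s!\le G_s(x)$, this is the wrong direction.
\item Those per-projection bounds are not available anyway: $\ell_i\le3$ and balance at $z(x)$ do not transfer to the projected instances.
\end{itemize}

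What is missing is a robust-capacity statement that does not depend on the conditioning of $\Gamma$. It should say that a Cauchy--Binet measure which is balanced up to $\ell_1$ error $O(\pi)$ loses only $O(s\pi)$ (or $O(s^2\pi)$) in $\log$-capacity when forced to be exactly balanced.

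One half is easy and sharper than your estimate. $\mu_x$ is an $s$-DPP, hence strongly Rayleigh. Conditioning on $i\notin T$ therefore only raises the other marginals, which gives $\|\Ex_{\mu_0}R-\one\|_1\le 2\pi$ rather than $O(s\pi)$.

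The other half is a $\operatorname{sym}_s$ analogue of \eqref{eq:elementary-matrix-product-bound-expanded}. It needs an actual proof, for example a flow or exchange construction of a balanced sub-measure of $\mu_0$ with mass $1-O(s\pi)$. Your proposal does not supply it.
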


\begin{proof}
By Lemma~\ref{lem:sym-derivative-identities}, the Cauchy--Binet probability mass
of terms containing $i$ is $x_i\ell_i\le3\tau$.  Removing $i$ therefore deletes
only $O(\tau)$ coefficient mass at the current balanced scaling.  The same
balanced-certificate argument used in Lemma~\ref{lem:finite-null-schur-estimate-balanced}, with the degree $s$ replacing the determinant dimension, shows that the
optimized scaling value decreases by at most $O(s^2\tau)$.  The renormalization of
the surviving coordinates in the part moves total mass $\tau$, and since
$|p'|\le1$, the potential changes by at most $O(\tau)$.  This proves the claim.
\end{proof}

Combining these observations  gives the precise analogue of the polynomial-time compression
result.
Finally, the terminal rounding theorem of Ebrahimi--Straszak--Vishnoi, applied to
the compressed support, loses only another $\exp(O(k))$ factor because the total
support is $O(k)$.  Multiplying the residual determinant by the exact factors
accumulated during contractions gives the final result.